\newcommand{\dwedge}[2]%
{\overset{#1}{\underset{#2}{\mbox{$\bigwedge\hspace{-2ex}\bigwedge$}}}}
\newcommand{\dvee}[2]%
{\overset{#1}{\underset{#2}{\mbox{$\bigvee\hspace{-2ex}\bigvee$}}}}
\newcommand{\inMath}[1]{\relax\ifmmode{#1}%
\else{\mbox{$#1$}}\fi}
\newcommand{\vdot}%
{\unitlength0.4mm
\begin{picture}(8,10)
\put(4,0){.}
\put(0,10){.}
\put(8,10){.}
\put(2,5){.}
\put(6,5){.}
\end{picture}
}
\newcommand{\ignore}[1]{}
\newcommand{\romanref}[1]{%
\if \ref{#1}\empty {\setcounter{romanrefcounter}0} \else 
{\setcounter{romanrefcounter}{\ref{#1}}}\fi%
{\it \roman{romanrefcounter}}}
\newcommand{\CA}{{\cal A}}
\newcommand{\CB}{\mathcal{B}}
\newcommand{\BC}{{\bf C}}
\newcommand{\CE}{\mbox{$\mathcal{E}$}}
\newcommand{\BE}{{\bf E}}
\newcommand{\Be}{{\bf e}}
\newcommand{\BF}{{\bf F}}
\newcommand{\Bf}{{\bf f}}
\newcommand{\BL}{{{\bf L}}}
\newcommand{\CL}{{\mathcal{L}}}
\newcommand{\BBL}{\mbox{\(\Bbb L\)}}
\newcommand{\Bm}{{\bf m}}
\newcommand{\BN}{{\bf N}}
\newcommand{\Bn}{{\bf n}}
\newcommand{\BBP}{\mbox{\(\Bbb P\)}}
\newcommand{\BBR}{\mbox{\(\Bbb R\)}}
\newcommand{\CS}{{\mathcal{S}}}
\newcommand{\BS}{{\bf S}}
\newcommand{\Bs}{{\bf s}}
\newcommand{\BT}{{\bf T}}
\newcommand{\CT}{\mathcal{T}}
\newcommand{\Bt}{{\bf t}}
\newcommand{\Bu}{{\bf u}}
\newcommand{\tO}{\twoheadrightarrow}
\newlength{\circlen}
\newlength{\symblen}
  \newcommand{\text}[1]{\relax
    \ifmmode\mathchoice
      {\hbox{\the\textfont0\relax#1}}%
      {\hbox{\the\textfont0\relax#1}}%
      {\hbox{\the\scriptfont0\relax#1}}%
      {\hbox{\the\scriptscriptfont0\relax#1}}%
    \else{\relax#1}\fi}
\newcommand{\defsub}[1]{}
\newcommand{\To}{\Rightarrow}
\newcommand{\nin}{{\not\in}}
\def\twoheaddownarrow{\rlap{$\downarrow$}\raise-.5ex\hbox{$\downarrow$}}
\def\twoheaduparrow{\rlap{$\uparrow$}\raise.5ex\hbox{$\uparrow$}}
\def\texturespicture #1 by #2 (#3){
\vbox to #2 {\hrule width #1 height 0pt depth 0pt
}}
\def\scaledpicture #1 by #2 (#3 scaled #4){{\dimen0=#1 \dimen1=#2
\divide\dimen0 by 1000 \multiply \dimen0 by #4
\divide\dimen1 by 1000 \multiply \dimen1 by #4
\texturespicture\dimen0 by \dimen1 (#3 scaled #4)}}
\newdimen\proofrulebreadth \proofrulebreadth=.05em
\newdimen\proofdotseparation \proofdotseparation=1.25ex
\newdimen\proofrulebaseline \proofrulebaseline=2ex
\let\then\relax
\def\hfi{\hskip0pt plus.0001fil}
\mathchardef\squigto="3A3B
\newif\ifinsideprooftree\insideprooftreefalse
\newif\ifonleftofproofrule\onleftofproofrulefalse
\newif\ifproofdots\proofdotsfalse
\newif\ifdoubleproof\doubleprooffalse
\let\wereinproofbit\relax
\newdimen\shortenproofleft
\newdimen\shortenproofright
\newdimen\proofbelowshift
\newbox\proofabove
\newbox\proofbelow
\newbox\proofrulename
\def\shiftproofbelow{\let\next\relax\afterassignment\setshiftproofbelow\dimen0 }
\def\shiftproofbelowneg{\def\next{\multiply\dimen0 by-1 }%
\afterassignment\setshiftproofbelow\dimen0 }
\def\setshiftproofbelow{\next\proofbelowshift=\dimen0 }
\def\setproofrulebreadth{\proofrulebreadth}
\def\prooftree{
%
\ifnum  \lastpenalty=1
\then   \unpenalty
\else   \onleftofproofrulefalse
\fi
%
\ifonleftofproofrule
\else   \ifinsideprooftree
        \then   \hskip.5em plus1fil
        \fi
\fi
%
\bgroup
\setbox\proofbelow=\hbox{}\setbox\proofrulename=\hbox{}%
\let\justifies\proofover\let\leadsto\proofoverdots\let\Justifies\proofoverdbl
\let\using\proofusing\let\[\prooftree
\ifinsideprooftree\let\]\endprooftree\fi
\proofdotsfalse\doubleprooffalse
\let\thickness\setproofrulebreadth
\let\shiftright\shiftproofbelow \let\shift\shiftproofbelow
\let\shiftleft\shiftproofbelowneg
\let\ifwasinsideprooftree\ifinsideprooftree
\insideprooftreetrue
%
\setbox\proofabove=\hbox\bgroup$\displaystyle 
\let\wereinproofbit\prooftree
%
\shortenproofleft=0pt \shortenproofright=0pt \proofbelowshift=0pt
%
\onleftofproofruletrue\penalty1
}
\def\eproofbit{
%
\ifx    \wereinproofbit\prooftree
\then   \ifcase \lastpenalty
        \then   \shortenproofright=0pt  
        \or     \unpenalty\hfil         
        \or     \unpenalty\unskip       
        \else   \shortenproofright=0pt  
        \fi
\fi
%
\global\dimen0=\shortenproofleft
\global\dimen1=\shortenproofright
\global\dimen2=\proofrulebreadth
\global\dimen3=\proofbelowshift
\global\dimen4=\proofdotseparation
\global\count255=\proofdotnumber
%
$\egroup  
%
\shortenproofleft=\dimen0
\shortenproofright=\dimen1
\proofrulebreadth=\dimen2
\proofbelowshift=\dimen3
\proofdotseparation=\dimen4
\proofdotnumber=\count255
}
\def\proofover{
\eproofbit 
\setbox\proofbelow=\hbox\bgroup 
\let\wereinproofbit\proofover
$\displaystyle
}%
\def\proofoverdbl{
\eproofbit 
\doubleprooftrue
\setbox\proofbelow=\hbox\bgroup 
\let\wereinproofbit\proofoverdbl
$\displaystyle
}%
\def\proofoverdots{
\eproofbit 
\proofdotstrue
\setbox\proofbelow=\hbox\bgroup 
\let\wereinproofbit\proofoverdots
$\displaystyle
}%
\def\proofusing{
\eproofbit 
\setbox\proofrulename=\hbox\bgroup 
\let\wereinproofbit\proofusing
\kern0.3em$
}
\def\endprooftree{
\eproofbit 
  \dimen5 =0pt
%
\dimen0=\wd\proofabove \advance\dimen0-\shortenproofleft
\advance\dimen0-\shortenproofright
%
\dimen1=.5\dimen0 \advance\dimen1-.5\wd\proofbelow
\dimen4=\dimen1
\advance\dimen1\proofbelowshift \advance\dimen4-\proofbelowshift
%
\ifdim  \dimen1<0pt
\then   \advance\shortenproofleft\dimen1
        \advance\dimen0-\dimen1
        \dimen1=0pt
        \ifdim  \shortenproofleft<0pt
        \then   \setbox\proofabove=\hbox{%
                        \kern-\shortenproofleft\unhbox\proofabove}%
                \shortenproofleft=0pt
        \fi
\fi
%
\ifdim  \dimen4<0pt
\then   \advance\shortenproofright\dimen4
        \advance\dimen0-\dimen4
        \dimen4=0pt
\fi
%
\ifdim  \shortenproofright<\wd\proofrulename
\then   \shortenproofright=\wd\proofrulename
\fi
%
\dimen2=\shortenproofleft \advance\dimen2 by\dimen1
\dimen3=\shortenproofright\advance\dimen3 by\dimen4
%
\ifproofdots
\then
        \dimen6=\shortenproofleft \advance\dimen6 .5\dimen0
        \setbox1=\vbox to\proofdotseparation{\vss\hbox{$\cdot$}\vss}%
        \setbox0=\hbox{%
                \advance\dimen6-.5\wd1
                \kern\dimen6
                $\vcenter to\proofdotnumber\proofdotseparation
                        {\leaders\box1\vfill}$%
                \unhbox\proofrulename}%
\else   \dimen6=\fontdimen22\the\textfont2 
        \dimen7=\dimen6
        \advance\dimen6by.5\proofrulebreadth
        \advance\dimen7by-.5\proofrulebreadth
        \setbox0=\hbox{%
                \kern\shortenproofleft
                \ifdoubleproof
                \then   \hbox to\dimen0{%
                        $\mathsurround0pt\mathord=\mkern-6mu%
                        \cleaders\hbox{$\mkern-2mu=\mkern-2mu$}\hfill
                        \mkern-6mu\mathord=$}%
                \else   \vrule height\dimen6 depth-\dimen7 width\dimen0
                \fi
                \unhbox\proofrulename}%
        \ht0=\dimen6 \dp0=-\dimen7
\fi
%
\let\doll\relax
\ifwasinsideprooftree
\then   \let\VBOX\vbox
\else   \ifmmode\else$\let\doll=$\fi
        \let\VBOX\vcenter
\fi
\VBOX   {\baselineskip\proofrulebaseline \lineskip.2ex
        \expandafter\lineskiplimit\ifproofdots0ex\else-0.6ex\fi
        \hbox   spread\dimen5   {\hfi\unhbox\proofabove\hfi}%
        \hbox{\box0}%
        \hbox   {\kern\dimen2 \box\proofbelow}}\doll%
%
\global\dimen2=\dimen2
\global\dimen3=\dimen3
\egroup 
\ifonleftofproofrule
\then   \shortenproofleft=\dimen2
\fi
\shortenproofright=\dimen3
%
\onleftofproofrulefalse
\ifinsideprooftree
\then   \hskip.5em plus 1fil \penalty2
\fi
}
\newcommand\strikethrough[1]{{\setbox0=\hbox{$#1$}
\hrule height.75ex depth-.65ex width\wd0 \kern-\wd0\box0}}
\mathchardef\gt="313E \mathchardef\lt="313C
\newcommand{\comma}{,\ldots ,}
\def\undern#1{\vtop{\ialign{##\crcr
 $\hfil\displaystyle{#1}\hfil$\crcr
 \noalign{\kern-.1pt\nointerlineskip}%
 ~\,\raisebox{-.5ex}{$n \to \infty$} \crcr}}}
\def\undern#1{\vtop{\ialign{##\crcr
 $\hfil\displaystyle{#1}\hfil$\crcr
 \noalign{\kern-.1pt\nointerlineskip}%
 ~\,\raisebox{-.5ex}{$i$} \crcr}}}
 \newtheorem{theorem}{Theorem}[section]					
 \newtheorem{definition}[theorem]{Definition}				
 \newtheorem{lemma}[theorem]{Lemma}	
\newtheorem{corollary}[theorem]{Corollary}				
\newtheorem{remark}[theorem]{Remark}					
\newtheorem{example}[theorem]{Example}					
\newcommand{\Nneg}{ \protect{\raisebox{-.3ex}{$\neg$}\hspace{-1ex} {\raisebox{.3ex}{$\neg$} }}}
\newenvironment{proof}{\begin{trivlist}\item[]{\bf Proof.}}{\hspace*{\fill} $\blacksquare$ \end{trivlist}}
\newcommand\DSN[1][]{{\bf CN}^{#1}}
\newcommand\SN[1][]{{\bf CNN}^{#1}}
\newcommand\pa[1]{\langle #1\rangle}
\newcommand\upa[1]{\{ #1\}}
\newcommand\istrue[3][M]{#1\vDash_{#2} #3}
\newcommand\isntrue[3][M]{#1\nvDash_{#2} #3}
\newcommand\Num{\upa{1,2}}
\renewcommand\nin{\mathbin{\not\in}}
\newcommand\one{\text{t}}
\newcommand\K{K}
\newcommand\F{F}
\newcommand\C{C}
\newcommand\A{A}
\newcommand\T{T}
\newcommand\MP{MP}
\newcommand\N{N}
\begin{document}
\title{The Attack as Strong Negation, Part I}

\date{Compiled on \today} 

\author{D. Gabbay\\
King's College London,\\ Department of Informatics,\\
Strand, London, WC2R 2LS, UK;\\
Bar Ilan University, Ramat Gan, Israel\\
and  \\
University of Luxembourg, Luxembourg.\\
{\tt dov.gabbay@kcl.ac.uk}\\[2.5ex]
M. Gabbay\\
Cambridge University, UK.\\
{\tt mg639@cam.ac.uk}\\
{\small Submitted to {\em Logic Journal of the IGPL} 1.4.15. Revised 1.6.15. Paper 540}
}
\maketitle

\begin{abstract}
We add strong negation $N$ to classical logic and interpret the attack relation of ``$x$ attacks $y$"  in argumentation as $(x\to Ny)$.
We write a corresponding object level (using $N$ only) classical theory for each argumentation network and show that the classical models of this theory correspond exactly to the complete extensions of the argumentation network.
We show by example how this approach simplifies the study of abstract argumentation networks.
We compare with other translations of abstract argumentation networks into logic, such as classical  predicate logic or modal logics, or logic programming, and we also compare with Abstract Dialectical Frameworks.
\end{abstract}

\section{Background from classical logic: non-logical axiomatic theories}

This paper introduces a particularly intuitive and simple  representation/translation of abstract argumentation networks into classical propositional logic. All we need is a simple version of strong negation. So our starting point must be  to introduce this negation.

Classical propositional logic can be properly axiomatised in many ways. For simplicity, let us take the set $\BT$ of all tautologies as axioms and the rule of modus ponens as the deduction rule. Let us assume that the connectives used are the usual ones $\{\neg, \wedge, \vee, \to, \bot, \top\}$ and the atomic sentences are the set $P=\{p_1, p_2, p_3,\ldots\}$.  Classical logic is strongly complete for the classical semantics. Models are assignments $h$ giving values in $\{0,1\}$ to the atoms of $P$.

We have satisfaction defined as follows for wffs $A$ and theories $\Delta$.
\begin{itemize}
\item $h\vDash p$ iff $h ( p )=1$, for $p\in P$
\item $h\vDash \top$
\item $h\not\vDash \bot$
\item $h\vDash \neg A$ iff $h\not\vDash A$
\item $h\vDash A\wedge B$ iff $h\vDash A$ and $h\vDash B$
\item $h\vDash A\vee B$ iff $h\vDash A$ or $h\vDash B$
\item $h\vDash A\to B$ iff $h\vDash A$ implies $h\vDash B$.
\item $h\vDash \Delta$ iff $h\vDash A$ for all $A\in \Delta$.
\end{itemize}

The notion of proof $A_1\comma A_n\vdash B$ can be defined in classical logic in many ways and completeness holds for any set of wffs $\Delta$ and any $A$:

\begin{itemize}
\item $\Delta\vdash A$ iff for all models $h$ we have $h\vDash \Delta$ implies $h\vDash A$.
\end{itemize}

We now introduce the notion of a set of sentences being a  specific non-logical axiomatic theory $\Theta_\Bn$.

Consider again the set of atomic wff
\[
P=\{p_1, p_2,p_3,\ldots\}.
\]
Suppose we insist, for our own reasons, that we want to consider only those models $h$ satisfying the restriction $(\Bn)$ below:
\begin{itemlist}{(N)}
\item [(\Bn )] For all even index atoms $p_2, p_4, p_6\comma p_{2i},\ldots$ we have $h(p_{2i})=1$ implies $h(p_{2i-1})=0, i=1, 2, 3,\ldots$
\end{itemlist}

 There is a theory $\Theta_{\Bn}$ of classical propositional logic whose models are exactly all the models
satisfying $(\Bn)$. This theory is
\[
\Theta_{\Bn} =\{p_2\to \neg p_1, p_4\to \neg p_3, p_6\to \neg p_5,\ldots\}
\]

Let us now for the sake of clarity, rename the atoms of $P$ with the help of a new symbol $N$.  We rename as follows:

\[
\begin{array}{l}
q_1 = \mbox{def. } p_1\\
Nq_1=\mbox{def. } p_2\\
q_2=\mbox{def. } p_3\\
Nq_3=\mbox{def. } p_4\\
~~~~\vdots\\
q_i=\mbox{def. } p_{2i-1}\\
Nq_i=\mbox{def. } p_{2i}\\
~~~\vdots
\end{array}
\]

We can thus write the set of atoms as $P$ as the set $Q$
\[
Q = \{ q_1, Nq_1, q_2, Nq_2,\ldots\}.
\]

The theory $\Theta_{\Bn}$ becomes the theory
\[
\Theta_{\Bn}=\{Nq\to \neg q | q\in \{q_1, q_2,\ldots\}\}
\]

$\Theta_{\Bn}$ is considered a non-logical set of axioms on the symbol $N$.

Note that $N$ cannot be iterated and can be applied only to atoms taken from $\{q_1, q_2, \ldots\}$.

We said that we regard $\Theta_{\Bn}$ as a non-logical axiomatic theory on the symbol $N$. This is common practice in logic and model theory. Consider, for example, the classical theory of Abelian groups formalised in classical logic for the multiplication symbol $*$ and the constant ${\bf 1}$.  We add to the logical axioms of predicate logic the non-logical group axioms below

\begin{itemize}
\item $\forall xyz(x * (y*z)= (x*y)*z)$
\item $\forall xy (x*y=y*x)$
\item $\forall x (x* {\bf 1} =x)$
\item $\forall x \exists y (x*y={\bf }1)$
\end{itemize}

In our case our non-logical axioms are $\Theta_{\Bn}=\{Nq\to \neg q\}$.

Thus our logic for $N$ is a theory (of $N$) within classical propositional logic, much in the same way as the theory of Abelian groups is a theory within the classical predicate calculus.
The next section defines  the logic \BC\BN\ formally.

\section{The logic {\bf CN}}

We add to classical propositional logic the strong negation symbol $N$.  We can thus form atomic sentences like $\{q_1\comma q_n,\ldots \}$ as well as atoms of the form $\{Nq_1, Nq_2,\ldots\}$.  We do this as explained in Section 1.

At this point we do not allow iterations of $N$. We have the usual connectives $\neg$ (negation), $\wedge, \vee, \to, \top, \bot$.  We shall discuss iterated use of $N$ in a later section.  We require the only additional non-logical axioms $\Theta_{\Bn} =\{Nq\to \neg q\}$.
 We can thus view $Nq$ as strong negation. For example $q$ might be true or $q$ might be false (i.e.\ $\neg q$ is true) or $q$ might be strongly false ($Nq$ true) or  $q$ might be false but not strongly false ($\neg q$ and $\neg Nq$ true).

 Let us define our logic directly.

 \begin{definition}[The {\bf CN} classical propositional logic with atomic strong negation]\label{540-D1}{\ }
 \begin{enumerate}
 \item Let $\BBL_N$ be a language with a set of atoms $Q_0=\{q_1, q_2\comma q_n,\ldots\}$ and the connectives $\{\neg, \wedge, \vee, \to, N\}$.  $\{\neg, \wedge, \vee, \to\}$ are the usual classical connectives and $N$ is a unary connective applied once to atoms only.
 \item An atomic formula has the form $q$ or $Nq$ where $q\in Q_0$.
 \item A wff has the form $A=$ atomic formula or $A\wedge B, \neg A, A\vee B, A\to B$ where $A$ and $B$ are wffs.
 \item We regard as axioms for our logic {\bf CN} the nonlogical set of axioms $\Theta_{\Bn}$ as discussed in Section 1.     The proof theory {\bf CN} is relying on the proof theory of classical logic \BC, as follows:
 \begin{itemize}
 \item $\Delta\vdash_{{\bf CN}} A$ iff (def) $\Delta\cup \Theta_{\Bn}\vdash_{\bf C} A$.
 \end{itemize}
 \item A model $h$ for the logic {\bf CN} is an assignment $h$ giving each atomic wff of the form $q$ or $Nq$ a value in $\{0,1\}$ such that if $h(Nq)=1$ then $h(q) =0$.
 \item Satisfaction is defined in the usual way.
 \end{enumerate}
 \end{definition}

\begin{theorem}\label{540-T2}
{\bf CN} is complete for the proposed semantics.
\end{theorem}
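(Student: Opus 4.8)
The plan is to reduce the logic $\bf CN$ to ordinary classical logic $\bf C$ by taking seriously the renaming of Section~1: since $N$ is applied only to atoms and never iterated, every expression $Nq_i$ can be treated as a fresh propositional atom, so that the wffs of $\BBL_N$ are in bijection with the classical wffs over the atom set $Q=\{q_1,Nq_1,q_2,Nq_2,\ldots\}$. Under this identification $\Theta_{\Bn}=\{Nq\to\neg q \mid q\in Q_0\}$ is just an ordinary classical theory, and the definition $\Delta\vdash_{\bf CN}A$ iff $\Delta\cup\Theta_{\Bn}\vdash_{\bf C}A$ lets me transport the strong completeness of $\bf C$ directly. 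I would in fact prove the full equivalence $\Delta\vdash_{\bf CN}A$ iff $\Delta\vDash_{\bf CN}A$, since the same machinery yields soundness and completeness together.

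The key step is to check that the class of $\bf CN$-models coincides exactly with the class of classical models of $\Theta_{\Bn}$. For any assignment $h$ on $Q$ one computes, from the classical satisfaction clauses, that $h\vDash Nq\to\neg q$ holds iff ($h(Nq)=1$ implies $h(q)=0$). Hence $h\vDash\Theta_{\Bn}$ iff $h$ satisfies the restriction $h(Nq)=1\Rightarrow h(q)=0$ for every $q\in Q_0$, which is precisely the definition of a $\bf CN$-model in Definition~\ref{540-D1}(5). This observation is the conceptual heart of the proof, though it is a routine unfolding of definitions once $Nq$ is viewed as an atom.

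With this identification in hand the two directions chain together. For completeness, suppose $\Delta\vDash_{\bf CN}A$ and let $h$ be any classical model of $\Delta\cup\Theta_{\Bn}$; by the key step $h$ is a $\bf CN$-model, and since $h\vDash\Delta$ we get $h\vDash A$, so every classical model of $\Delta\cup\Theta_{\Bn}$ satisfies $A$, whence $\Delta\cup\Theta_{\Bn}\vdash_{\bf C}A$ by strong completeness of $\bf C$, i.e.\ $\Delta\vdash_{\bf CN}A$. For soundness, if $\Delta\vdash_{\bf CN}A$ then $\Delta\cup\Theta_{\Bn}\vdash_{\bf C}A$, and any $\bf CN$-model of $\Delta$ is a classical model of $\Delta\cup\Theta_{\Bn}$, hence satisfies $A$ by soundness of $\bf C$.

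The main obstacle is not really technical: it is making precise the claim that treating each $Nq_i$ as an independent atom is legitimate. I would therefore be careful to invoke the restrictions from Definition~\ref{540-D1}(1) — that $N$ applies to atoms only and is not iterated — to justify the bijection between $\BBL_N$-wffs and classical wffs over $Q$, since this is exactly what guarantees that $\bf CN$ adds no new logical behaviour beyond the single non-logical axiom schema $\Theta_{\Bn}$. Everything else is inherited from the strong completeness of classical propositional logic.
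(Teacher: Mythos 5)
Your proposal is correct and follows the same route as the paper: the paper's proof is exactly the reduction of {\bf CN} to the classical theory $\Theta_{\Bn}$ over the atom set $Q$, invoking strong completeness of classical logic, with the identification of {\bf CN}-models and classical models of $\Theta_{\Bn}$ left implicit. You have merely spelled out the details (the bijection of languages and the model correspondence) that the paper leaves to the reader.
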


\begin{proof}
Our discussion in Section 1 presented {\bf CN} as a non-logical theory $\Theta_{\bf n}$ of the classical propositional calculus of item 4 of Definition \ref{540-D1} defined the consequence for {\bf CN}, via the classical consequence. Since we have strong completeness for \BC\ we also have it for {\bf CN}.
\end{proof}

\begin{remark}\label{540-R2}
Note that what we  are calling  our  `logic'  {\bf CN} is actually a theory in a two sorted classical propositional logic with two sorts of atoms  of the form $q$ and $Nq$. In Appendix~\ref{mike.appendix.2} we will turn our logic  into a proper modal logic, which we will  call  $\SN$.
\end{remark}
We now have enough machinery to faithfully represent abstract argumentation networks in the logic \BC\BN. This is the job of the next section.

\section{Expressing argumentation networks in {\bf CN}}\label{540-S3}
This section will show a simple way of translating formal argumentation networks into {\bf CN}. We assume we are dealing with finite argumentation networks.\footnote{Actually, we do not need the requirement that the network  is finite. All we need is that it is finitary, namely that each point is attacked by a finite number of attackers. This will allow us to write classical wffs describing the attacks.}

We shall use the Caminada labelling characterisation of complete  extensions.   See \cite{540-1} for a survey.  Given an argumentation network $(S, R)$, with $S\neq \varnothing$ and $R\subseteq S\times S$, a legitimate Caminada labelling $\lambda$ on $S$ is a function $\lambda: S\mapsto \{\mbox{in, out, und}\}$ giving values ``in'', ``out'', ``undecided'' to each $x\in S$, satisfying the following properties.

\begin{itemlist}{(CC)}
\item [(C1)] $\lambda (x)=$ in iff either $\neg \exists y (yRx)$ or $\forall y (yRx\to \lambda (y)=$ out).
\item [(C2)] $\lambda (x)=$ out iff $\exists y (yRx\wedge\lambda (y)=$ in)
\item [(C3)] $\lambda (x) =$ und iff $\forall y (yRx\to \lambda (y)\neq$ in) and $\exists y(yRx\wedge\lambda (y)=$ und).
\end{itemlist}

Each legitimate Caminada labelling gives rise to a unique complete extension and all complete extensions can be obtained in this way.  See \cite{540-1}.

\begin{definition}\label{540-D3}
Let $\CA =(S, R)$ be a formal argumentation network which is finitary, i.e.\ each point has a finite number of attackers  This means that $S\neq \varnothing$  and $R\subseteq S\times S$ is the attack relation. Define a theory $\Delta_{\CA}$ of the logic {\bf CN} as follows
\begin{enumerate}
\item We can assume that $S\subseteq Q$ (i.e.\ the arguments of $S$ are identified as atoms of the logic).
\item $\Delta_{\CA}=\{x|\neg \exists y(yRx)\} \cup \{y\leftrightarrow \bigwedge_{z\in \mbox{ Attack } (y)}
Nz | y\in S\} \cup
\{z\to Ny | zRy\} \cup \{(\bigwedge_{z\in\mbox{  Attack}(y)}  \neg z) \wedge (\bigvee_{z\in\mbox{ Attack}(y)} \neg Nz))\to \neg y \wedge\neg Ny |y\in S\}$, where  Attack$(y) =\{z|zRy\}$.
\end{enumerate}
\end{definition}

\begin{theorem}[First Correspondence Theorem]\label{540-T4}
Let $\CA = (S, R)$. Then the models of $\Delta_{\CA}$ correspond exactly to the complete extensions of $\CA$.
\end{theorem}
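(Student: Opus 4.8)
The plan is to set up an explicit bijection between the models of $\Delta_{\CA}$ (restricted to the atoms associated with $S$, the assignment on atoms outside $S$ being irrelevant) and the legitimate Caminada labellings of $\CA$, and then to invoke the cited fact from \cite{540-1} that legitimate labellings correspond bijectively to complete extensions. The key observation is that a {\bf CN} model assigns to each pair $(x,Nx)$ one of exactly three value-combinations, since the model constraint $h(Nx)=1\Rightarrow h(x)=0$ rules out $(h(x),h(Nx))=(1,1)$. I would therefore define, for each model $h$, a labelling $\lambda_h$ by setting $\lambda_h(x)=\text{in}$ when $(h(x),h(Nx))=(1,0)$, $\lambda_h(x)=\text{out}$ when $(0,1)$, and $\lambda_h(x)=\text{und}$ when $(0,0)$; and conversely I would assign to each labelling $\lambda$ the evident three-valued model $h_\lambda$. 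These two maps are manifestly mutually inverse, so the whole content of the theorem reduces to showing that $h\vDash\Delta_{\CA}$ holds if and only if $\lambda_h$ satisfies (C1)--(C3).

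First I would treat (C1). The biconditional axiom $y\leftrightarrow\bigwedge_{z\in\text{Attack}(y)}Nz$ says that $h(y)=1$ exactly when every attacker $z$ of $y$ has $h(Nz)=1$, i.e.\ is labelled out; the empty-conjunction case recovers the unattacked arguments, which are in any case forced true by the component $\{x\mid\neg\exists y(yRx)\}$ of $\Delta_{\CA}$. This is precisely (C1). Next, the axioms $\{z\to Ny\mid zRy\}$ give directly the left-to-right implication of (C2): if some attacker of $y$ is in, then $h(Ny)=1$, so $y$ is out. Finally, reading the contrapositive of the fourth group of axioms, whenever $y$ is not undecided we obtain $(\bigvee_{z}z)\vee(\bigwedge_{z}Nz)$, i.e.\ some attacker is in or every attacker is out.

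I would then verify the remaining directions by combining the axioms. For the converse of (C2), if $h(Ny)=1$ then $h(y)=0$, so by the biconditional not every attacker is out; the contrapositive of the fourth axiom then forces some attacker to be in, as required. For (C3), the fourth axiom itself gives the right-to-left direction once one notes that, under ``no attacker in'' (that is $\bigwedge_{z}\neg z$), the disjunct $\bigvee_{z}\neg Nz$ is exactly ``some attacker undecided''. For the forward direction of (C3), if $y$ is undecided then the axioms $\{z\to Ny\}$ prevent any attacker from being in (else $h(Ny)=1$), while $h(y)=0$ together with the biconditional produces an attacker that is not out, hence undecided, and also guarantees that the attacker set is nonempty.

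The main obstacle, and the step deserving the most care, is the bookkeeping in these last two paragraphs: each Caminada biconditional has to be matched against the conjunction of two or more object-level axioms, and one must handle the vacuous cases (empty attacker sets forcing arguments in) and the interplay between the three value-combinations correctly, in particular checking that the forbidden combination $(1,1)$ never arises in a model of $\Delta_{\CA}$ and that the undecided case genuinely requires an undecided attacker rather than merely a non-out one. Once each of (C1), (C2) and (C3) is seen to be equivalent to the presence of the corresponding axioms in $\Delta_{\CA}$, the bijection $h\mapsto\lambda_h$ between models and legitimate labellings is established, and composing with the Caminada correspondence yields the theorem.
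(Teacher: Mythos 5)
Your proposal is correct and follows essentially the same route as the paper: both define $\lambda_h$ from the three admissible value-combinations on $(x,Nx)$, define $h_\lambda$ conversely, match each of (C1)--(C3) against the biconditional axioms, the attack axioms $z\to Ny$, and the ``undecided'' axioms (using the contrapositive of the latter for the out-direction), and then appeal to the Caminada labelling correspondence from \cite{540-1}. The only presentational difference is that you package the two directions as a single equivalence via mutually inverse maps, whereas the paper checks the model-to-labelling and labelling-to-model directions separately; the mathematical content is the same.
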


\begin{proof}
\begin{enumerate}
\item Let $h$ be a model of $\Delta_{\CA}$. We show it defines a complete extension on $\CA$   (note that  two different models can yield the same complete extension): We use the Caminada labelling function. Let $x\in S$. Define
\[
\lambda_h(x) =
\left\{
\begin{array}{l}
\mbox{in, if } h(x) =1\\
\mbox{out, if } h(Nx) =1\\
\mbox{und, if } h(x) =h(Nx) =0
\end{array}
\right.
\]
we prove the following
\begin{enumerate}
\item $\lambda_h$ is well defined. For each $x$ we can have exactly one of the three cases mentioned in the definition of $\lambda_h$. The reason for that is that we have the axiom $Nq \to \neg q$ and so the case $h(x) =1$ and $h(Nx) =1$ does not arise.
\item The crucial points to show are
\begin{enumerate}
\item If $x$ is not attacked then $\lambda_h(x) =$ in. This holds because $x\in \Delta_{\CA}$.
\item If $zRy$ and $\lambda_h(z)=$  in then $h(z)=1$. Therefore $h(Ny)=1$ (because $z\to Ny$ is in the theory) and so $\lambda_h(y) =$ out.
\item Suppose for all $z$ such that $zRy$ we have $\lambda_h(z)=$ out.  We want to show the $\lambda_h(y)=$ in.  We have for all $z$ such that $zRy$ that $h(Nz)=1$ hence $h(\bigwedge_{zRy}Nz)=1$ and therefore $h(y)=1$  (because of the theory), and so $\lambda_h(y)=$ in.
\item Suppose $\lambda_h(y)=$ und.  Then $h(y)=h(Ny) =0$.  Let $zRy$. We            cannot have $\lambda_h(z)=$ in because then $h(z) =1$ and this implies $h(Ny)=1$. Thus none of the     attackers $z$ of $y$ are in. Thus for all $z$ such that $zRy$ we have $h(z)=0$.  Can we have that for all of such $z,h(Nz)=1$?  If this were the case, since $y\leftrightarrow \bigwedge_{zRy}Nz$ is in $\Delta_{\CA}$ we get $h(y) =1$ i.e.\ $\lambda_h(y)=$ in, contradicting our assumption that $\lambda_h(y)=$ und.

Therefore for some $z$ such that $zRy$ we have $h(Nz)=0$. But this means that $\lambda_h(z)=$ und. We thus got that if $\lambda_h(y)=$ und then none of the attackers $z$ of $y$ are in and at least one of them is undecided.
\item If $\lambda_h (y) =$out, (i.e.\ $h(Ny)=1$),  show that for some $z$ such that $zRy$ , we have $\lambda_h (z) =$in, (i.e.\  $h(z)=1$).  Otherwise for all $z$ such that $zRy$, we have $h(z)=0$. We ask about any such $z$, is $h(Nz)=1$? If the answer is yes for all of them then we must have $h(y)=1$, by item (iii) above. So for some $z$ we have $h(Nz)=0$. Then by the axiom $\bigwedge_{z\in\mbox{ Attack}(y)}  \neg z) \wedge (\bigvee_{z\in\mbox{ Attack}(y)} \neg Nz))\to \neg y \wedge\neg Ny$ we get that $h(Ny)=0$.

Either way, the assumption that for all $z$ such that $zRy$ we have $h(z)=0$, leads to a contradiction.
\end{enumerate}
Thus $\lambda_h$ is a legitimate labelling, giving rise to a complete extension.
\end{enumerate}
\item Let $\lambda$ be a legitimate labelling giving rise to a complete extension. We show that it gives a model for $\Delta_{\CA}$.

Define $h_\lambda$ as follows:
\[
\begin{array}{l}
h_\lambda (x)=1\mbox{ if }\lambda (x)=\mbox{in}\\
 h_\lambda (Nx) =1\mbox{ if } \lambda(x) =\mbox{out}.
\end{array}
\]
We show that all axioms of $\Delta_{\CA}$ hold
\begin{enumerate}
\item Show that $h_\lambda \vDash Nx\to \neg x$ otherwise $h_\lambda (Nx) =h_\lambda(x)=1$.  This means that $\lambda(x)=$ out and $\lambda(x)=$ in which is not possible.
\item Second we show that if $zRy$ then $h_\lambda \vDash z \to Ny$.  Otherwise we have $h_\lambda(z) =1$ and $h_\lambda (Ny)=0$.  The former implies $\lambda (z)=$ in. Therefore $\lambda (y)-$ out and so by definition $h_\lambda (Ny) =1$, a contradiction.
\item We now show that
\[
h_\lambda \vDash y \leftrightarrow \bigwedge_{zRy} Nz.
\]
Assume $h_\lambda (y)=1$. Then $\lambda (y)=$ in. So for all $z$ such that $zRy, \lambda (z)=$ out, and so by definition of $h_\lambda$, for all such $z, h_\lambda (Nz) =1$ and hence $h_\lambda (\bigwedge_{zRy}Nz)=1$.

Assume $h_\lambda (y) =0$. Therefore $\lambda (y)\neq $ in. Then either $\lambda(y)=$ out or $\lambda (y)=$ und. If $\lambda(y)=$ out then for some $z$ such that $zRy$ we have $\lambda (z) =$ in. Therefore $\lambda (z)\neq$ out and so $h_\lambda (Nz) =0$.

If $\lambda (y)=$ und, then for some $z, zRy$ an $\lambda (z)=$ und again $\lambda(z)\neq$ out and so $h_\lambda (Nz) =0$.

Thus for sure if $h_\lambda(y)=0$ then for some $z, zRy$ and $h_\lambda (Nz) =0$ and so $h_\lambda (\bigwedge_{zRy} Nz)=0$.
\item  We show that
$h_\lambda \vDash \bigwedge_{z\in\mbox{ Attack}(y)}  \neg z) \wedge(\bigvee_{z\in\mbox{ Attack}(y)} \neg Nz))\to \neg y \wedge \neg Ny$.
If the antecedent holds then all attackers of $y$ are not in and one of them is undecided. Therefore $y$ is undecided and so the consequent holds.
\end{enumerate}
\end{enumerate}
\end{proof}
\begin{corollary}\label{540-C5}
Let $\CA$ be an argumentation network. Consider $\Delta_{\CA}$. Then $\Delta_{\CA}$ is {\bf CN} consistent.
\end{corollary}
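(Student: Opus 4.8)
The plan is to reduce consistency to the existence of a model, and then to reduce that to the existence of a single complete extension of $\CA$. By item 4 of Definition \ref{540-D1}, saying that $\Delta_{\CA}$ is {\bf CN} consistent is by definition saying that $\Delta_{\CA}\cup\Theta_{\Bn}$ is classically consistent, i.e.\ does not prove $\bot$ in \BC. By the strong completeness of classical logic (equivalently, by Theorem \ref{540-T2}), this is equivalent to $\Delta_{\CA}$ having at least one {\bf CN} model $h$. So the whole corollary comes down to producing one such $h$.

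At this point I would simply invoke the First Correspondence Theorem (Theorem \ref{540-T4}), which establishes that the models of $\Delta_{\CA}$ correspond exactly to the complete extensions of $\CA$. Consequently $\Delta_{\CA}$ has a model if and only if $\CA$ has at least one complete extension. It therefore suffices to exhibit one complete extension of $\CA$, and this is a standard fact of abstract argumentation theory: every argumentation network possesses a grounded extension, which is in particular complete (see the survey \cite{540-1}). Concretely, in terms of the Caminada labelling characterisation (C1)--(C3), one always has the grounded labelling, obtained by assigning \emph{in} to every unattacked argument and propagating \emph{out} and \emph{in} in the obvious monotone fashion; this is a legitimate labelling and hence yields a complete extension. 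Passing it back through the correspondence --- via the construction of $h_\lambda$ given in part~2 of the proof of Theorem \ref{540-T4} --- produces the desired model of $\Delta_{\CA}$.

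I expect no genuine obstacle here: the corollary is essentially a packaging of Theorem \ref{540-T4} together with the classical completeness of Theorem \ref{540-T2}. The only mathematical input not internal to this paper is the guaranteed existence of a complete extension, which follows from the Knaster--Tarski argument applied to Dung's monotone characteristic operator (its least fixed point being the grounded extension). Everything else is bookkeeping through the already-established correspondence, so the proof is short once the reduction above is stated.
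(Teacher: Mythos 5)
Your proof is correct and takes essentially the same route as the paper's, whose entire argument is the one-line observation that $\CA$ has the grounded complete extension and that Theorem~\ref{540-T4} converts it into a model of $\Delta_{\CA}$. You merely make explicit the bookkeeping the paper leaves implicit (consistency via the classical completeness of Theorem~\ref{540-T2}, and the existence of the grounded extension via the monotone fixed-point argument).
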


\begin{proof}
Since $\CA$ has the complete ground extension, by the previous Theorem \ref{540-T4} this yields a model for $\Delta_{\CA}$.
\end{proof}

\begin{example}\label{540-E5}
\begin{enumerate}
\item Consider the argumentation network of Figure \ref{540-F6}.

\begin{figure}
\centering
\input{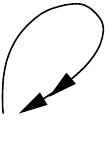_t}
\caption{}\label{540-F6}
\end{figure}

Its theory in {\bf CN} is $\Delta = \{ x\to Nx, Nx\leftrightarrow x, \neg x \wedge\neg Nx \to \neg x \wedge\neg Nx\}$.  Since we have the $N$ axiom $Nx \to \neg x$ we get that $x\to \neg x$ is provable and so $\neg x$ is provable and so $\neg Nx$ is also provable.  This means we have only one model $h$ with $h(x) =h(Nx)=0$ and therefore the only extension is $\lambda (x) =$ und.
\item Consider the additional axiom added to the theory $\Theta_\Bn$ namely
\[
{\bf Stable}: \{x\vee Nx |\mbox{for all } x\}.
\]
The theory $\Theta_\Bn +$ {\bf stable} does not have models $h$ in which $h(x) =h(Nx)=0$ for any $x$. Thus this axiom characterises all stable extensions.
\end{enumerate}
\end{example}

\begin{theorem}\label{540-T6}
Let $\CA= (S, R)$ be an argumentation network. Consider $\Delta_{\CA}$ and let $E =\{x|\Delta_{\CA}\vdash_{\bf CN} x\}$.  Then $E$ is the ground extension of $\CA$.
\end{theorem}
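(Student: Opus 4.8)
The plan is to chain together the completeness theorem, the First Correspondence Theorem, and the standard characterisation of the ground extension as the least complete extension. The whole argument is essentially a translation of ``provable'' into ``true in every complete extension'' and then into ``in the ground extension.''

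First I would unfold the definition of $E$ using completeness. By Theorem~\ref{540-T2}, $\Delta_{\CA}\vdash_{\bf CN} x$ holds iff $h(x)=1$ for every model $h$ of $\Delta_{\CA}$. Hence $x\in E$ iff $x$ is true in all models of $\Delta_{\CA}$. Along the way I would note that $E\subseteq S$ comes for free: an atom $q\notin S$ is constrained by $\Delta_{\CA}$ only through the generic axiom $Nq\to\neg q$ of $\Theta_{\Bn}$, which admits a model with $h(q)=0$, so no such $q$ is provable.

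Next I would transport this statement from models to complete extensions via Theorem~\ref{540-T4}. Recall from its proof that the dictionary sends a model $h$ to the labelling $\lambda_h$ with $\lambda_h(x)=$ in exactly when $h(x)=1$, and conversely sends a legitimate labelling $\lambda$ to a model $h_\lambda$ with $h_\lambda(x)=1$ exactly when $\lambda(x)=$ in. Since every model yields a complete extension and every complete extension arises from a model, the condition ``$h(x)=1$ for every model $h$'' is equivalent to ``$\lambda(x)=$ in for every complete labelling $\lambda$.'' Thus $E$ is precisely the set of arguments labelled in by \emph{every} complete extension, i.e.\ the intersection of the in-sets of all complete extensions of $\CA$.

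Finally I would invoke the standard fact (see \cite{540-1}) that the ground extension is the least complete extension with respect to inclusion of its in-labelled arguments, and hence coincides with the intersection of all complete extensions. Concretely, because the ground extension is itself complete, any $x$ that is in every complete extension is in particular in the ground one, giving $E\subseteq$ ground; and because the ground extension is the least complete extension, its in-set is contained in that of every complete extension, so every $x$ in the ground extension lies in $E$. Combining the two inclusions yields the theorem. The only content beyond bookkeeping is this last step, so the main obstacle is not a calculation but correctly citing (or re-deriving) that ``least complete extension'' equals ``intersection of all complete extensions''; everything else reduces to applying the already-established equivalence between provability, truth in all models, and membership in all complete extensions, with the Theorem~\ref{540-T4} dictionary used in both directions.
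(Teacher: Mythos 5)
Your proof is correct, and it takes a noticeably different route from the paper's. The paper works syntactically: it first argues that $E$ is itself a complete extension by direct derivations in {\bf CN} --- conflict-freeness follows from the consistency of $\Delta_{\CA}$ together with the axiom $x\to Ny$ for $xRy$, and ``$E$ defends $x$ implies $x\in E$'' follows by deriving $\bigwedge_{zRx}Nz$ and then $x$ from the axioms $y\to Nz$ and $x\leftrightarrow\bigwedge_{zRx}Nz$ --- and only then observes that $E$ is contained in every complete extension (each being a model of $\Delta_{\CA}$), hence is the least one. You instead stay entirely on the semantic side: completeness of {\bf CN} turns provability into truth in all models, the Theorem~\ref{540-T4} dictionary (used in both directions, which you correctly note) turns that into membership in every complete extension, and then you quote the standard fact that the grounded extension, being the least complete extension, equals the intersection of all complete extensions. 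Your version is shorter and more modular, leaning only on already-proved theorems plus one standard fact of Dung's theory (whose existence half --- that a least complete extension exists --- is anyway needed by the paper in Corollary~\ref{540-C5}); the paper's version buys a self-contained, object-level demonstration that $E$ is complete, which showcases the {\bf CN} proof machinery the authors are advertising. Your side remark that atoms outside $S$ are never provable is a sensible piece of bookkeeping the paper leaves implicit. No gap.
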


\begin{proof}
We know that $\Delta_{\CA}$ is consistent. Hence $E$ is consistent. We show that $E$ is a complete extension. Since other complete extension $E'$ corresponds to a model of $\Delta_{\CA}$, $E'$ contains $E$. Thus $E$ would be the smallest complete extension --- namely $E$ is the ground extension.  We now show that $E$ is a complete extension.
\begin{enumerate}
\item $E$ is conflict free.  Let $x,y \in E$.  If $xRy$ holds, then $x\to Ny \in \Delta_{\CA}$ and hence $\Delta_{\CA}\vdash_{\bf CN} x\to \neg y$, which contradicts the consistency of $\Delta_{\CA}$ since $\Delta_{\CA}\vdash_{\bf CN} y$.
\item Assume $E$ protects $x$. We show $x\in E$. Let $zRx$. then for some $y\in E, yRz$ holds. Hence $y\to Nz$ is in $\Delta_{\CA}$ and so $\Delta_{\CA}\vdash_{\bf CN} Nz$. Thus we have
\[
\Delta_{\CA}\vdash_{\bf CN}\bigwedge_{zRx} Nz
\]
and hence $\Delta_{\CA}\vdash_{\bf CN} x$ and so $x\in E$.
\end{enumerate}
\end{proof}

\begin{remark}\label{RMarch22}
This is a clarifying remark about the correspondence between extensions of an argumentation network $\CA =(S, R)$ and {\bf CN} models of the theory $\Delta_{\CA}$.

Consider the three argumentation networks $\CA_1, \CA_2, \CA_3$ in Figure \ref{540-FMarch22}.   They all have the same extension, $x=$ in, $y=$ out, $z=$ in.

\begin{figure}
\centering
\input{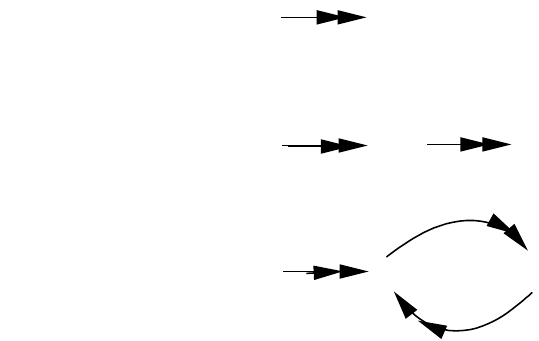_t}
\caption{}\label{540-FMarch22}
\end{figure}

Their theories $\Delta_{\CA_i}$ are different, but they have the same models.
\[\begin{array}{lcl}
\Delta_{\CA_1} &=&\{ x, x\to Ny, y\leftrightarrow Nx, z, \neg x\wedge\neg Nx\to \neg y\wedge\neg Ny\}\\
\Delta_{\CA_2}&=&\{x, x\to Ny, y\leftrightarrow Nx, y\to Nz, z\leftrightarrow Ny,\\
&& \neg x\wedge\neg Nx\to \neg y\wedge \neg Ny, \neg y\wedge\neg Ny\to \neg z\wedge\neg Nz\}\\
\Delta_{\CA_3}&=& \{x, x\to Ny, y\leftrightarrow Nx\wedge Nz, x\vee z\to Ny, \neg x\wedge\neg z\wedge \\
&&(\neg Nx\vee \neg Nz)\to\neg y\wedge\neg Ny, y\to Nz,  z\to Ny, \\
&&\neg y\wedge \neg Ny\to \neg z\wedge\neg Nz, \neg z\wedge\neg Nz\to \neg y\wedge\neg Ny,\\
&& y\leftrightarrow Nz, z\leftrightarrow Ny\}
\end{array}
\]
All three theories have only one model $h$
\[
x =1, Nx =0, y=0, Ny=1, z=1, Nz=0.
\]
The moral of the example is that the theories describe the extensions and not the geometry of the networks. In fact, $\Delta_{\CA_i}$ are all  equivalent to $\Delta_{\CA}=\{x, \neg Nx, \neg y, Ny, z, \neg Nz\}$.

Compare the {\bf CN} approach with the truly meta-level approach of \cite[Section 5]{540-1}, discussed below in the beginning of Section 5.

We describe $(S, R)$ in predicate logic with unary predicates $Q_1(x) =x$ is in, $Q_0(x)=x$ is out, and $Q_?(x)=x$ is und, and a relation $xRy$ for attack.  Thus using $R$ the networks of Figure \ref{540-FMarch22} can be distinguished in the semantics.  Note, however, that we can read the geometry of the networks from the syntax of the theories $\Delta_{\CA_i}, i=1,2,3$, but not from their models!  This highlights the importance of proof theory.
\end{remark}

\begin{remark}\label{540-R5}
The previous correspondence theorem reduces the idea of attack in argumentation networks to the idea of  strong negation in classical propositional logic. This reduction simplifies every move we make in the argumentation area and gives us a tremendous advantage in making available to us all the machinery of classical logic.  This includes
\begin{enumerate}
\item extensions for formal argumentations such as joint attacks, support, higher level attacks, probabilistic argumentation, predicate/modal logic argumentation and more, all can be done more simply and easily using strong negation, see the following sections;
\item applications of argumentation become application of classical logic;
\item new ideas can be more readily imported from classical logic into argumentation;
\item the Caminada labels, $x$ is  in, $x$ is out and $x$ is undecided are available in the object language as $x$ is true, $Nx$ is  true and $\neg x \wedge\neg Nx$ is true, respectively.
\item we now need to ask ourselves: what is the added value of argumentation over classical logic?  We need a clear and detailed answer to this question.
\end{enumerate}
\end{remark}

\section{Intermediate critical evaluation}
This section pauses our formal development to evaluate what we have so far and to explain to the reader where we are going. We shall do this by a list of critical comments.

\subsection*{CC1. Basing argumentation on the unary notion of ``being attacked''}
We read $Nx$ as ``$x$ is being attacked'', we are not saying how and from where this attack comes. This makes $Nx$ a kind of strong negation (with axiom $Nx\to \neg x$), see \cite{540-13}.  This single simple idea allows us to have Theorem \ref{540-T4}.  It also allows us to go in the direction of turning the system {\bf CN} into a paraconsisent logic of negation (see Wikipedia) by adding axioms on iterating $N$ (e.g.\ the axiom $NN x \leftrightarrow x$). We can do this safely for as long as Theorem \ref{540-T4} is retained.  We shall address this direction in later sections. We believe we can achieve similar results for logic programming by reading $Nx$ as ``$x$ fails''.  This direction, and the connection with answer set programming, is left for a subsequent paper.

There is another direction we can go in.  We can use another meaningful logic such as intuitionistic logic, linear logic or relevance logic or fuzzy logic to replace classical logic and thus get argumentation theory in those contexts.  Again we shall look at this in a subsequent paper.

\subsection*{CC2.  Simple way of defining joint attacks}
The theory $\Delta_{\CA}$ written for an argumentation network $\CA =(S, R)$ is comprised of several components.
\begin{enumerate}
\item The logic {\bf CN} (the use of $N$)
\item Formulas defining when an argument $x$ is ``in''.

This is the part
$$
x\leftrightarrow \bigwedge_{zRx} Nz
\leqno (\BF_{\rm in})
$$

$\BF_{\rm in}$ also includes the part relating to ``$x$ is not attacked'', since the empty conjunction is $\top$.
\item Formulas defining when an argument $x$ is ``out''.  This is the part $z\to Nx$, for all $zRx$, or if we write it as a single formula, it is $\BF_{\rm out}$.

$$
\bigvee_{zRx} z\to Nx.
\leqno (\BF_{\rm out})$$

\item Formulas defining when $x$ is undecided
$$
( \bigwedge_{z\in \mbox{ Attack}(x)}  \neg z) \wedge(\bigvee_{z\in \mbox{ Attack}(x)} \neg Nz)\to \neg x \wedge\neg Nx
\leqno (\BF _{\rm und})
$$

Once we put the above in $\Delta_{\CA}$ we get the correspondence theorem, Theorem \ref{540-T4}, generating complete extensions by models of $\Delta_{\CA}$.

Now we can see how easy it is to generalise argumentation to joint attacks.  Joint attacks, introduced in \cite{540-3,540-4} can be described by the configuration in Figure \ref{540-F8} (using Gabbay's notation from \cite{540-3}).

\begin{figure}
\centering
\input{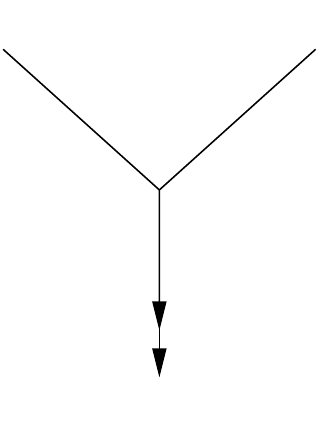_t}
\caption{}\label{540-F8}
\end{figure}

The meaning is that we have $x$ is out if all of $z_1\comma z_n$ are in. This we can write in our logic {\bf CN} simply as $\bigwedge_{(z_1\comma z_n) R_0 x} z_i\to Nx$ where $R_0$ is the joint attack relation of the form $R_0\subseteq (2^S-\varnothing) \times S$.

So we can write the formula $\BF^{\rm Joint}_{\rm out}$ for joint attacks as
$$
(\bigvee_{GR_0 x} \bigwedge_{z\in G} z)\to Nx
\leqno (\BF^{\rm Joint}_{\rm out})
$$
where $G\subseteq S, x \in S$.

The formula for $\BF^{\rm Joint}_{\rm out}$ for a point $x$ will be
$$
x\leftrightarrow (\bigwedge_{GR_0 x}\bigvee_{z\in G} Nz)
\leqno (\BF^{\rm Joint}_{\rm in})
$$

$$
\bigwedge_{GR_o x}(\bigvee_{z\in G}\neg z )\wedge(\bigvee_{GR_0 x}\bigwedge_{z\in G}(z\vee \neg Nz))\to \neg x\wedge\neg Nx
\leqno (\Bf^{\rm Joint}_{\rm und})
$$

If we use these three wffs $\BF^{\rm Joint}_{\rm in}$ and $\BF^{\rm Joint}_{\rm out}$  and $\BF^{\rm Joint}_{\rm und}$ to define $\Delta_{\CA}$, we can study the semantics for joint attacks, in the object level, in classical logic as models of $\Delta_{\CA}$, provided  we prove a correspondence theorem similar to Theorem \ref{540-T4}.

The reader can compare the simplicity of this approach to what is done in the papers \cite{540-3,540-4}.
\end{enumerate}

\subsection*{CC3.  Simple way of defining higher level attacks}
Higher level attacks on attacks. These were studied in many papers \cite{540-2,540-5,540-6,540-7,540-9}.  They were first introduced in general in Gabbay's paper \cite{540-8}.  Figure \ref{540-F9} illustrates the basic configuration for higher level attacks.

\begin{figure}
\centering
\input{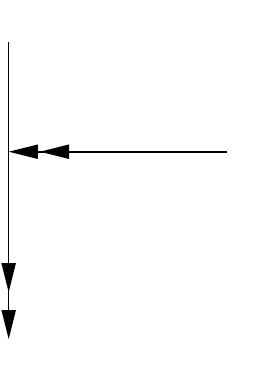_t}
\caption{}\label{540-F9}
\end{figure}

$z$ attacks $x$ and $y$ attacks the attack $z\tO x$. We write it as $y\tO (z\tO x)$.

We need to write the $\BF^{\rm higher}_{\rm in}$ and $\BF^{\rm higher}_{\rm out}$  and $\BF^{\rm higher}_{\rm und}$ of this type of attack.  In our set up with $N$ it is easy to write this!  The $\BF_{\rm out}$ is $z\wedge Ny \to Nx$ and the $\BF_{\rm in}$ involves $Nz\vee y$. Let us write the wff's in detail for a network with one level of higher attacks.  So our networks have the form $(S, R, R_1)$, where $R\subseteq S\times S$ are the attacks and $R_1\subseteq S\times R$ are the attacks on the attacks. We write:

$$
(\bigvee_{zRx}\bigwedge_{yR_1(z,x)} z\wedge Ny) \to Nx
\leqno (\BF^{\rm higher}_{\rm out})
$$

$$\bigwedge_{zRx}\bigvee_{yR_1(z,x)} (Nz\vee y))\leftrightarrow x
\leqno (\BF^{\rm higher}_{\rm in})
$$

$$
(\neg\bigwedge_{zRx}\bigvee_{yR_1(z,x)} (Nz\vee y) \wedge\neg\bigvee_{zRx} z\wedge\bigwedge_{yR_1(z,x)}Ny)\to \neg x \wedge\neg Nx
\leqno (\BF^{\rm higher}_{\rm und})
$$

Again we use $\BF^{\rm high}_{\rm in}$,   $\BF^{\rm high}_{\rm out}$  and $\BF^{\rm high}_{\rm und}$ to define $\Delta_{\CA}$, and let the semantics be all {\bf CN} models of $\Delta_{\CA}$. We need to prove a correspondence theorem similar to Theorem \ref{540-T4}.

\subsection*{CC4.  A word of caution}
Although we are showing how other types of networks can be translated into {\bf CN}, we are not just saying ``look, our paper is introducing you to another master generalisation of argumentation''. We reserve judgement about $N$ until the end of this paper and we might say at the end to the reader ``in view of our paper, do not think any more in the old conceptual framework of argumentation of $(S, R)$ but think in terms of strong negation $N$ of just being attacked, and do your argumentation from now on in classical logic with $N$''.  The reason we reserve judgement is because we want to figure out first how the use of $N$ affects  related systems such as ABA (Assumption Based Argumentation, see~\cite{540-20}) and  ASPIC (see~\cite{540-21}).  In ASPIC and ABA, the game is to start with a logic theory $\Delta$, define proofs from $\Delta$ as the argument set $S$, define a respective suitable  attack $R$ between proofs, then define $(S, R)$ as an instantiated network, take extensions $E\subseteq S$ and then make sure that $E$ is consistent in the logic of $\Delta$. This we perceive as a possibly unnecessary external detour.  Our method might say to ABA and ASPIC, ``why do you need all this roundabout way involving a multitude of problems?  Why not add strong negation $N$ directly to the logic of $\Delta$ and model your argumentation there and you are done?  If {\bf CN} can swing this and we succeed in working out the details, then ABA and ASPIC would immensely benefit from our conceptual view.  See {\bf CC7} below.

\subsection*{CC5.  Comparing with abstract dialectical framework (ADF)}
ADF were introduced in \cite{540-11} by Professor G. Brewka as a generalisation of argumentation frameworks and immediately caused both excitement and criticism.  In this paper we shall use an example from Brewka's slides \cite{540-10} to do our comparison.

An ADF has the form $(S, R, C)$ where $S$ is a set of arguments and $R\subseteq S\times S$ is the link relation (Brewka calls them ``links'' because he does not view them as attacks). $C$ is a family of acceptance relations. For each $x\in S$, there is a formula of classical propositional logic $\varphi_x$, specifying the acceptance (``in'') condition for $x$, based on the acceptance values of $\{y|yRx\}$.

In our terms, as well as in classical logic terms, what ADF is saying is condition $\BF^{C}_{\rm in}$:
$$
x\leftrightarrow \varphi_x, \mbox{ for all } x\in S
\leqno (\BF^C_{\rm in})
$$

If we regard $\BF^C_{\rm in}$ as a condition in our logic {\bf CN}, then we can define all complete extensions in our sense as all extensions obtained from models \Bm\ of $\Delta_C =(\Theta_\Bn +\BF^C_{\rm in})$, where the values \{in, out, und\} are as in the proof of part 1 of Theorem \ref{540-T4}, namely:
\begin{itemize}
\item $x=$ in  if $\Bm(x) =1$
\item $x=$ out if $\Bm (Nx) =1$
\item $x=$ und if
$\Bm(x) =\Bm (Nx) =0$.
\end{itemize}

Brewka, however works only in classical proposition logic with the three valued semantics according to Kleene.  So his models give 3 values $\{1, 0, u\}$ with
\[\begin{array}{l}
1\wedge 1 = 1\\
0\wedge 0 = 0\\
1\wedge u = 0 \wedge u = u \wedge u = u
\end{array}
\]
Brewka derives his extensions for the 3 valued models using some process. See \cite[slides 13--18]{540-10}.  We now reproduce in Figure \ref{540-F20} the original slide 18 of Brewka \cite{540-10} in order to compare ADF with our {\bf CN} approach.

The Brewka extensions are given in the figure.

\begin{figure}
\centering
\input{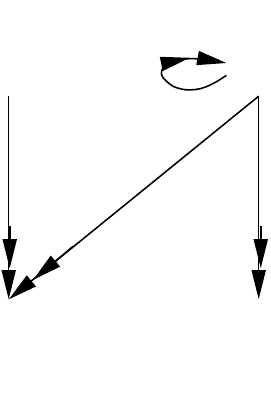_t}
\begin{itemize}
\item models:
\begin{itemize}
\item $v_1=\{a\mapsto \Bt, b\mapsto\Bt, c\mapsto\Bt, d\mapsto f\}\mbox{ corresponds to } \{a,b,c,\neg d\}$
\item $v_2 =\{a\mapsto\Bt, b\mapsto f, c\mapsto f,d\mapsto\Bt\}\mbox{ corresponds to } \{a,\neg b,\neg c,d\}$
\end{itemize}
\item grounded model: $v_1=\{a\mapsto \Bt, b\mapsto \Bu, \mapsto \Bu, d\mapsto \Bu\}\mbox{ corresponds to }\{a\}$
\end{itemize}
\caption{}\label{540-F20}
\end{figure}

Figure \ref{540-F21} lists the models in {\bf CN} obtained for the ADF theory.

We note that our model $\Bm_1$ is he same as Brewka's $v_1$ and our model $\Bm_4$ is the same as Brewka's $v_2$.  However, model $\Bm_3$ gives $a=$ in, $b=$ in, $c=$ in and $d=$ und, and model $\Bm_2$ gives $b=c=$ und and $a=d=$ in. We do not get the Brewka's grounded model $G =\{a=\mbox{ in}, b=c=d=\mbox{ und}\}$.

This model is not one of the model of the theory
\[
\Delta_C=\{x\leftrightarrow \varphi_x|x \mbox{ a node in Figure \ref{540-F20}}\} \cup \Theta_\Bn.
\]

So how does Brewka get his grounded extension?  Look at $v_1$ and $v_2$, $a$ gets 1 in both, but $b,c$ and $d$ get value 1 in one of them and 0 in the other. So if we look at $G$ with the undecided value $u$ given to $b, c$ and $d$ for each of these arguments there are extensions which make its value 1 and there are extensions which make its value 0.  Therefore their value, according to ADF is undecided.

From our point of view, this way of looking at undecided is just external combinatorics, devoid of  conceptual content.  According to our view only $\neg x\wedge \neg Nx$ makes $x$ undecided, namely $x$ is false but not strongly false.

$G$ is not an extension. It is not a model of $\Delta_C$ because of the axiom $d=\neg b$.

\begin{figure}
\centering
\begin{tabular}{c|c|c|c|c|c|c|c|c}
model & $a=\top$& $Na$& $b=b$&$Nb$&$c=a\wedge b$&$Nc$&$d=\neg b$&$Nd$\\
\hline
$m_1$ &1&0&1&0&1&0&0&1\\
\hline
$m_2$ & 1&0&0&0&0&0&1&0\\
\hline
$m_3$&1&0&1&0&1&0&0&0\\
\hline
$m_4$&1&0&0&1&0&1&1&0
\end{tabular}
\caption{}\label{540-F21}
\end{figure}

Obviously we could try and add restrictions on the models to get the same results as Brewka (i.e.\ implement ADF in {\bf CN}) but why should we do that at all?  Our methodology is sound and stronger.

Given that we had $N$ in the object language, we can do more than ADF.  ADF writes the acceptance conditions in 2-valued classical logic and brings in the undecided value only through the semantic interpretation. We have the undecided value $(\neg x \wedge\neg Nx)$ in the language itself and therefore we can put the undecided property into the acceptance conditions. Consider the joint attack described in Figure \ref{540-F23}.

\begin{figure}
\centering
\input{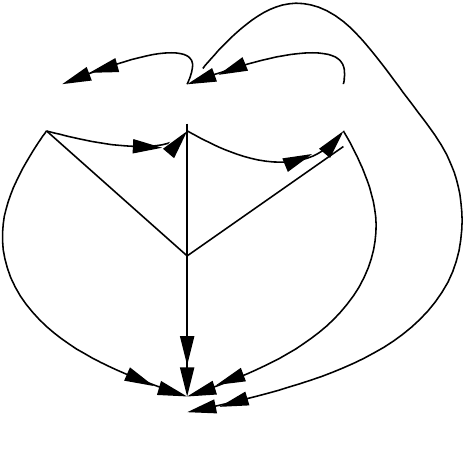_t}
\caption{}\label{540-F23}
\end{figure}

Suppose all three attackers $a,b,c$ are undecided.  In this case we traditionally say that $x$ is also undecided because we do not know, maybe all three $a=b=c=$ in.  This could happen.  If, however, we adopt the new view that the chance that all three attackers are ``in'' can be disregarded, then we want to say it in our acceptance conditions.

So we want to say

$$ x\leftrightarrow (Na\wedge Nb\wedge Nc)\bigvee(\neg a\wedge \neg Na \wedge\neg b\wedge\neg Nb\wedge\neg c\wedge\neg Nc)
\leqno (\BF_{\rm in}(x))
$$

$$
a\vee b\vee c\to Nx
\leqno (\BF_{\rm out}(x))
$$

$$ \begin{array}{l}
a\leftrightarrow Nb\\
b\leftrightarrow Na\wedge Nc \\
c\leftrightarrow Nb
\end{array}
\leqno (\BF_{\rm in} (a,b,c))
$$

$$\begin{array}{l}
a\vee c \to Nb\\
b\to Na\wedge Nc
\end{array}
\leqno(\BF_{\rm out}(a,b,c))
$$

Can ADF express exactly the same extensions as what {\bf CN} gets for the above, especially the one $a=b=c=$ und, $x=$ in?

\subsection*{CC6.  Using {\bf CN} for the probabilistic approach}
There are many papers on probabilistic argumentation. Our paper \cite{540-18} offers a comprehensive approach based on probability models of classical logic. We shall therefore compare the probabilistic use of {\bf CN} with the approach in \cite{540-18}.  We shall use the network of Figure \ref{540-100F} as an example.

\begin{figure}
\centering
\input{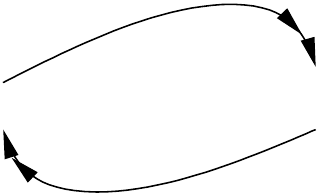_t}
\caption{}\label{540-100F}
\end{figure}

The approach of \cite{540-18} regards $\{a,b\}$ as atoms of classical propositional logic. The logic with these atoms has the following models

\[\begin{array}{ll}
\Bm_2: & a\wedge b; x\\
\Bm_2: & a\wedge\neg b; y\\
\Bm_3:& \neg a\wedge b; z\\
\Bm_4:& \neg a \wedge \neg b; 1-x-y-z
\end{array}
\]

We can give a probability distribution $P$ on the models
\[\begin{array}{l}
P(\Bm_1)=x\\
P(\Bm_2)=y\\
P(\Bm_3)=z\\
P(\Bm_4)=1-x-y-z.
\end{array}
\]

The fact that these nodes $\{a,b\}$ are part of the network $(S, R)$ of Figure \ref{540-100F} is reflected in the probability having to satisfy the equational approach equation called E3 in \cite{540-18} (see \cite{540-39} for the Equational Approach to argumentation).

$$ P(x) =P(\bigwedge_{yRx}\neg y).
\leqno (E3)
$$

In our case this means
\begin{itemize}
\item $P(a) =P(\neg b)$
\item $P(b) =P(\neg a)$
\end{itemize}
where
\begin{itemize}
\item $P(x) =\sum_{\Bm\vDash x} P(\Bm)$
\end{itemize}

Therefore the equations we get are
\begin{itemize}
\item $x+y=1-x-z$
\item $x+z= 1-x-y$
\end{itemize}

These two are the same equation, yielding
\[
1-x-y-z=P(\Bm_4) =x.
\]
Therefore any probability distribution $P$ of the form
\[\begin{array}{l}
P(a\wedge b) =x\\
P(a\wedge\neg b) =y\\
P(\neg a\wedge b) =z\\
P(\neg a\wedge\neg b) =x
\end{array}
\]
with $2x+y+z=1$ is a good one, respecting the attack relation.

We  have $P(a) =x+y, P(b) =x+z, P(\neg a \wedge\neg b) =x$.

Let us now check how the models of {\bf CN} relate to probability. For the language with atoms $\{a, Na, b, Nb\}$  satisfying the {\bf CN} axiom
\[
Nx\to \neg x
\]
we can have the following models $h_{i,j} =\alpha_i\wedge\beta_j, i=1,2,3, j=1,2,3$ where
\[\begin{array}{lcl}
\alpha_1 &=& a\wedge \neg Na\\
\alpha_2 &=& \neg a \wedge\neg Na\\
\alpha_3 &=& \neg a\wedge Na\\
\beta_1&=& b\wedge\neg Nb\\
\beta_2 &=& \neg b\wedge\neg Nb\\
\beta_3 &=& \neg b\wedge Nb.
\end{array}
\]
Here we have 9 models as opposed to 6 models of the previous approach.  Let the probabilities on these models be
\[ \pi(\alpha_i\wedge\beta_j)=P_{i,j}
\]
with
\[
\sum_{i,j} P_{i,j} =1.
\]
The models must satisfy the theory $\Delta_{\CA}$ for the network of Figure \ref{540-100F}.  These are
\[\begin{array}{l}
Na\to \neg a\\
Nb\to \neg b\\
a\leftrightarrow Nb\\
b\leftrightarrow Na\\
\neg a\wedge\neg Na\to \neg b\wedge\neg Nb\\
\neg b\wedge\neg Nb\to\neg a\wedge\neg Na.
\end{array}
\]
So only 3 models are models of $\Delta_{\CA}$. These are
\[\begin{array}{lcl}
h_1 &=& a\wedge\neg Na\wedge Nb\wedge\neg b\\
h_2 &=& \neg a\wedge Na\wedge b\wedge\neg Nb\\
h_3 &=& \neg a\wedge\neg Na\wedge\neg b\wedge\neg Nb.
\end{array}
\]

Let the relative probabilities be $p_1, p_2, p_3$ respectively with $\sum p_i=1$.

Note that $h_1$ is the extension $a=$ in, $b=$ out $h_2$ is the extension $b=$ in, $a=$ out and $h_3$ is the extension $a=b=$ und.

The first difference between the approaches is that while the approach in \cite{540-18} gives probability to arguments (called in \cite{540-18} ``the internal approach''), the {\bf CN} approach ends up giving probabilities to extensions (called in \cite{540-18} ``the external approach'').

Giving probability to extensions is not new.  It is supported by many authors (see references/discussion in \cite{540-18}). Let us calculate the probabilities $\pi(a)$ and $\pi(b)$.  We get
\[
\pi(a) =p_1, \pi(b)=p_2, \pi(\neg a\wedge\neg b) =p_3.
\]

Comparing the two approaches, it makes sense to equate
\[
p_3 =x, p_2=x+z, p_1=x+y.
\]

The main difference is that we are giving probabilities to 2-valued models in \cite{540-18} and using {\bf CN} we are using 3-valued models.

In both cases we give probabilities to models. However, in case of {\bf CN}, the models are extensions and so we are giving probabilities to extensions also.

\subsection*{CC7.   {\bf CN} and bipolar networks}
Note that in {\bf CN} we get support and contrary arguments for free. Since we have implication ``$\to$'' in the logic, we can write ``$x\to y$'' to mean ``$x$ supports $y$'' and since we have negation ``$\neg$'' in the logic, we can view $\neg x$ as the contrary of $x$. We need not introduce additional atoms into the argumentation network for contraries, nor do we have to introduce an additional arrow for support into the network. Furthermore, since we have negation, we have the additional option to represent ``$x$ supports $y$'' as ``$x\tO \neg y$'' namely as $x\to N\neg y$.

Let us do this in a systematic manner.
\begin{definition}\label{540-DB1}{\ }
\begin{enumerate}
\item A bipolar network $\CB$ has the form $\CB =(S, R_a, R_s)$, where $R_a\subseteq S\times S$ is the attack relation (also denoted by $\tO$) and $R_s\subseteq S\times S$ is the support relation (also denoted by $\To$).
\item Given a bipolar network $\CB$, we offer two possible translations into {\bf CN}.
\begin{itemize}
\item $\tau_1(x\tO y) =x\to Ny$
\item $\tau_1(x\To y) =x\to y$
\item $\tau_2(x\tO y) =x\to Ny$
\item $\tau_2(x\To y) =x\to N\neg y$
\end{itemize}
\item Note that the complete extensions of the bipolar networks will be obtained from all the models of {\bf CN}.
\end{enumerate}
\end{definition}

\begin{remark}\label{540-RB2}{\ }
\begin{enumerate}
\item We note that $x\to y$ implies $x \to N\neg y$. This holds because, as we shall see in Section 5, $\neg Ny\to N\neg y$ and because $y\to N\neg y$ both hold.  Thus $\tau_1$ is stronger than $\tau_2$.
\item When we have a translation $\tau$ from one system (e.g.\ $\CB$) into another  (e.g.\ {\bf CN}), we need to examine the properties of {\em soundness} and {\em completeness}.

Soundness in our case means that whatever we consider as a bipolar complete extension for $\CB$ will turn out to be a complete extension according to {\bf CN}.  Completeness means that {\bf CN} does not give any additional complete extensions.

The problem in this case is that there is disagreement in the community about how to define the complete extensions for $\CB$.  The main papers of C. Cayrol and M. C. Lagasquie-Schiex are \cite{540-28,540-29}.  These have been criticised by G. Brewka and S. Woltran \cite{540-11} and a solution was proposed in our paper \cite{540-26}.  Thus a detailed analysis of soundness and completeness for our translations must be postponed to a continuation paper \cite{540-24}.  However, we can point out in this paper some key properties involved in \cite{540-26}.

The possible properties are as follows:
\item [(P1)] $\displaystyle\frac{x\To y, z\tO y}{z\tO x}$
\item [(P2)] $\displaystyle\frac{x\To y, y\tO u}{x\tO u}$
\item [(P3)] $\displaystyle\frac{w\tO x, x\To y}{w\tO y}$

We need to check these properties for both
\[
\tau(x\tO y)=x\to y
\]
and
\[
\tau(x\To y) =x\to N\neg y.
\]
What we need is consistency. Can we add the translation of these rules to {\bf CN} and remain consistent?

The answer is positive. So we can hope to have something like the following theorem:

\end{enumerate}
\end{remark}

\begin{theorem}\label{540-TB3}
Let $\BBP$ be a set of properties for a bipolar network $\CB$ (e.g.\ (P1)--(P3) of Remark \ref{540-RB2}).  Let $\tau$ be a translation of $\CB$ into {\bf CN}, and let $\tau(\BBP)$ be the translation of $\BBP$ into {\bf CN}.  Then the translation $\tau$ of $\CB$ into ${\bf CN}+\tau(\BBP)$ is sound and complete.
\end{theorem}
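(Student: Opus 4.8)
The plan is to reduce the asserted soundness and completeness to a correspondence theorem of exactly the shape of Theorem~\ref{540-T4}, now carried out for bipolar networks. First I would fix the target notion of \emph{bipolar complete extension}: I close $\CB=(S,R_a,R_s)$ under the rules in $\BBP$, obtaining derived attack/support edges, and declare the complete extensions of $\CB$ to be the Caminada-style labellings that are legitimate for this $\BBP$-closed network. Fixing the notion this way is unavoidable, since---as Remark~\ref{540-RB2} stresses---the community has not agreed on a canonical definition, and this is precisely why the theorem is stated schematically in $\BBP$. The {\bf CN} theory attached to $\CB$ is then $\Delta_{\CB}=\Theta_{\Bn}\cup\tau(R_a)\cup\tau(R_s)\cup\tau(\BBP)$, where each attack $x\tO y$ contributes $x\to Ny$, each support $x\To y$ contributes $x\to y$ (for $\tau_1$) or $x\to N\neg y$ (for $\tau_2$), and $\tau(\BBP)$ is the object-level image of each instance of (P1)--(P3). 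Soundness and completeness then mean exactly that the {\bf CN} models of $\Delta_{\CB}$ stand in the same bijection-up-to-labelling with these extensions as in Theorem~\ref{540-T4}.

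The two directions are driven by the same assignment as in that proof. For completeness I would take a {\bf CN} model $h$ of $\Delta_{\CB}$ and set $\lambda_h(x)$ equal to in/out/und according as $h(x)=1$, $h(Nx)=1$, or $h(x)=h(Nx)=0$; this is well defined because $\Theta_{\Bn}$ forbids $h(x)=h(Nx)=1$, and conditions (C1)--(C3) relative to the $\BBP$-closed network are verified verbatim from part~1 of Theorem~\ref{540-T4}, the support and rule clauses supplying the extra constraints that force $\lambda_h$ to respect $\BBP$. For soundness I would run the converse map $h_\lambda(x)=1$ iff $\lambda(x)=\mbox{in}$, $h_\lambda(Nx)=1$ iff $\lambda(x)=\mbox{out}$, and check each axiom. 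The attack and in/out/und clauses go through unchanged; the genuinely new point is the support clauses, which under $\tau_2$ I would discharge using the two {\bf CN}-provable facts recorded in Remark~\ref{540-RB2}, namely $y\to N\neg y$ and $\neg Ny\to N\neg y$. These also show that $\tau_1$ entails $\tau_2$, so that $h_\lambda\vDash x\to y$ gives $h_\lambda\vDash x\to N\neg y$ for free and the $\tau_1$ case is subsumed.

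The step I expect to be the main obstacle is the treatment of $\tau(\BBP)$ together with consistency, which is exactly the question flagged at the end of Remark~\ref{540-RB2}. Read as the $\BBP$-closure, the translated rules merely add derived edges such as the $z\to Nx$ produced by (P1) from $x\to N\neg y$ and $z\to Ny$, and then both directions follow directly from Theorem~\ref{540-T4} and consistency follows as in Corollary~\ref{540-C5} by taking the grounded labelling of the closed network. Read instead as universally quantified object-level constraints, $\tau(\BBP)$ is genuinely stronger, and the danger is that chaining the implications forces some $x$ to be simultaneously in (through its own acceptance clause) and out (through a derived attack), i.e.\ $h(x)=h(Nx)=1$, which $\Theta_{\Bn}$ rules out. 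I would settle this the way Corollary~\ref{540-C5} settles the plain case: exhibit one explicit model---the grounded labelling of the $\BBP$-closed network---and check that every support clause and every instance of $\tau(\BBP)$ holds there. The delicate heart of that check, and where the real work lies, is showing that (P1)--(P3), being attack-producing implications, only ever propagate $N$-values in a manner compatible with $Nq\to\neg q$, so that $\Delta_{\CB}$ is consistent and the correspondence is nonempty.
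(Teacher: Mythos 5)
There is nothing in the paper to compare your argument against: Theorem~\ref{540-TB3} is not proved there. It is introduced in Remark~\ref{540-RB2} as something the authors ``can hope to have'', and immediately after the statement the paper says ``we shall address this theorem in Part 2 of this paper.'' So the statement is a programmatic conjecture whose precise content (in particular, what ``bipolar complete extension'' means) is deliberately left open, and your first task --- fixing that notion via $\BBP$-closure --- is a design decision the paper itself has not made.

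Judged on its own terms, your proposal is a plan rather than a proof, and the step you yourself identify as ``where the real work lies'' --- consistency of $\Delta_{\CB}$ --- is not merely unfinished but false in the generality you claim. Take $S=\{x,y,z\}$ with $z\tO y$ the only attack and $x\To y$ the only support, and take $\BBP=\varnothing$ (the theorem quantifies over an arbitrary set of properties, so the closure may do nothing). Your $\Delta_{\CB}$ then contains $x$ and $z$ (both unattacked), $z\to Ny$, the axiom $Ny\to\neg y$, and the $\tau_1$-clause $x\to y$; these prove both $y$ and $\neg y$. So the grounded-labelling strategy of Corollary~\ref{540-C5} does not transfer: the grounded labelling of the (trivially) closed network makes $x$ in and $y$ out, and simply does not satisfy the support clause. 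Rescuing the statement requires either restricting to $\BBP$ strong enough that closure always pushes an attack back onto the supporter (your P1 example suggests this, but you would have to prove it for every forced-out and forced-undecided case, not just assert it), or weakening $\tau$ on supports, or restricting the class of admissible $\CB$. Separately, under your ``define the extensions as the labellings of the $\BBP$-closed network'' reading, soundness and completeness threaten to collapse into Theorem~\ref{540-T4} applied to the closed network, which makes the theorem true by construction but empty; the genuine content lies in relating that closure semantics to an independently motivated notion of bipolar extension, which neither you nor the paper supplies.
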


As we said, we shall address this theorem in Part 2 of this paper.

\subsection*{CC8.  Limitations of the {\bf CN} approach}
The {\bf CN} approach transforms the geometrical representation of an argumentation network $\CA =(S, R)$ into a theory $\Delta_{\CA}$ of the logic {\bf CN}.  Theorem \ref{540-T4} ensures that the correspondence between complete extensions of $\CA$ and models of $\Delta_{\CA}$ is sound and complete. We are trading off here, however, geometry for model theory.

The previous CC1--CC4 discussed the advantages of the {\bf CN} approach.  The limitations come from the fact that in the {\bf CN} approach we obscure/lose the geometrical aspects of $\CA$.  Therefore any moves in argumentation theory which use the geometry (e.g.\ the strongly connected components, SCC of Baroni {\em  et al.} \cite{540-14}) will become less transparent. We can mathematically do them in {\bf CN}, but we would have to extract the geometry of $\CA =(S, R)$ back out of $\Delta_{\CA}$!

The following Figure \ref{540-F24}  can be used to illustrate our point using the CF2 semantics \cite{540-14,540-15}.

\begin{figure}
\centering
\input{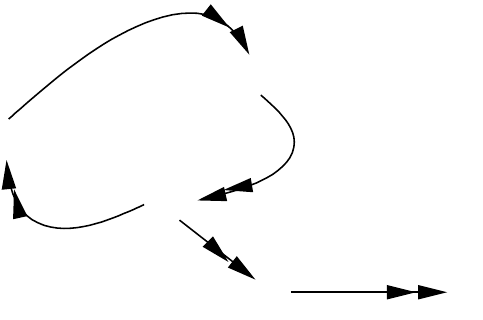_t}
\caption{}\label{540-F24}
\end{figure}

The {\bf CN} semantics gives the complete extension of all undecided to the network $\CA$ of Figure \ref{540-F24} in agreement, of course, with the traditional Dung approach. The CF2 semantics takes maximal conflict free subsets of the top SCC and therefore yields the extensions $\{a,d\}, \{b,d\}, \{c,e\}$.  CF2 relies on identifying the SCCs.  It relies on the geometry of $(S, R)$.

The {\bf CN} theory for this network $\CA$ is
\[\begin{array}{lcl}
\Delta_{\CA}&=&\Theta_\Bn\cup \{a\to Nb, b\to Nc,c\to Na, c\to Nd,d\to Ne,e\leftrightarrow Nd,\\
&&d\leftrightarrow Nc,c\leftrightarrow Nb,b\leftrightarrow Na,a\leftrightarrow Nc\}
\end{array}
\]

Looking at $\Delta_{\CA}$, we have to define/extract the cycles in order to define the CF2 semantics for $\CA$.

Gabbay's approach using annihilators \cite{540-14} also requires the use of geometry but to a lesser extent.  We can identify syntactically a cycle, say

\[
x_1\to Nx_2, x_2\to Nx_3\comma x_n\to Nx_1
\]
and break the cycle by applying an annihilator, say to annihilate the point $x_i$, i.e.\ add a new point $z(x_i)$ to $\Delta_{\CA}$ with $z\to Nx_i$, and look instead at
\[
\Delta^{z(x_i)}_{\CA} = \Delta_{\CA}\cup \{z(x_i), z(x_i)\to Nx_i\}.
\]
We still need some geometric intuition in doing this.

\section{
Introducing the logic {\bf CNN}}
\subsection{A meta-level object level short discussion}
The perceptive reader may be aware of Section 5 of my 2009 paper with Caminada \cite{540-1}.  In this paper we describe several options of looking at an abstract argumentation network $(S, R)$. Since 2009, there were   many other papers, which put forward  different  representations of  Abstract Argumentation Networks in terms of well known logics, see for example \cite{540-32}--\cite{540-37}. We shall compare and discuss these papers in our Comparison with the Literature Section~\ref{comparison}. Meanwhile in  this section,  we want to make a critical point about the difference between Object Level Vs Meta-Level representation of Abstract Argumentation Networks, and so we consider now one of the possible options of Section 5 of \cite{540-1}. This  option is to describe $(S, R)$ completely in classical predicate logic. We consider $S$ as the domain of the logic and we consider the attack relation $R$ as a binary relation on $S$.  In addition to that, we need 3 unary predicates, $Q_1, Q_0$ an $Q_2$, representing the 3 Caminada labels for the elements of $S$, namely $x$ is ``in'', $x$ is ``out'' and $x$ is ``undecided'', respectively.

We write axioms in predicate logic basically expressing the properties of the labelling relative to $R$ making it a legitimate labelling.  These are the following:

Consider the following classical theory $\Delta (R, Q_0, Q_1, Q_?)$.
\begin{enumerate}
\item $\forall x (Q_0(x)\vee Q_1(x)\vee Q_?(x))$
\item $\neg\exists x(Q_i(x)\wedge Q_j(x))\mbox{ for } i\neq j, i, j\in \{0,1,?\}$
\item $\forall y(\forall x(xRy\to Q_0(x))\to Q_1(y)$
\item $\forall y (\exists x (xRy\wedge Q_1(x))\to Q_0(y))$
\item $\forall y(\forall x (xRy\to (Q_0(x)\vee Q_?(x)))\wedge \exists x (xRy\wedge Q_?(x))\to Q_?(y))$
\end{enumerate}

Any model of $\Delta$ with domain $D$ defines an argumentation framework with the set of argument $S = D$, and the attack relation is $R$ and the labelling $\lambda_\Delta$ is what we obtain from the elements satisfying the respective predicates $Q_0, Q_1$ and $Q_?$.  Notice that we are not using ``$=$''.

If we want to characterise any specific argumentation framework $\CA =(S, R)$ we need equality and we need constant names for every element of $S$. We write the following additional axioms $\CA$
\begin{enumerate}
\setcounter{enumi}{5}
\item $\forall x(\bigvee_{a\in S}x=a)$
\item $\bigwedge_{a,b\in S, a\neq b} a\neq b$
\item $\bigwedge_{a,b\in R} aRb$
\end{enumerate}

The use of predicate logic to talk about $(S, R)$ is meta-level. The predicates $Q_1, Q_0$ and $Q_?$ are not logical connectives. We cannot write, for example, the expression $Q_0(Q_0(x))$. In contrast, the logic {\bf CN} is object level. It can express the predicates $Q_1, Q_0$ and $Q_?$ in the object level, as well as the relation $R$, as follows (see, however, Remark~\ref{RMarch22} of Section~\ref{540-S3}):
\[\begin{array}{l}
Q_1(x) =\mbox{def } x\\
Q_0(x)=\mbox{def } Nx\\
Q_?(x)=\mbox{def } \neg x\wedge\neg Nx \\
xRy=\mbox{def } x\to Ny.
\end{array}
\]
Note the difference between object level and meta-level. Suppose we want to instantiate $(S, R)$ with arguments which are formulas of predicate logic, say we have  $xRy$ and we instantiate $x=\alpha$ and  $y=\beta$. In the meta-level language we cannot write $\alpha R \beta$. Even if we allow for the use of names ``$\alpha$" and ``$\beta$" and add appropriate axioms for the correct handling of names, we still do not know what ``$\alpha$" $R$ ``$\beta$" means in terms of semantics. The meta-level translation does not give any meaning to $R$ it is just a translation. On the other hand , we can write in our object level system the formula $ \alpha \to N\beta$, and if we use predicate logic with $N$, we can let the models of this predicate logic with $N$ provide a proposed semantics for the instantiation by using Theorem \ref{540-T4} as a definition.
Furthermore we can write additional axioms for $N$. For example   we can also write $NNx$ and add an axiom (which is valid)
$$
x\leftrightarrow NNx
\leqno {\bf NN}:
$$
On the other hand the meta-level predicate approach can deal with infinite non-finitary networks , while our object level approach requires the network to be finitary. We shall discuss the predicate approach further in the context of comparing with paper \cite{540-35} in Section 9.

\subsection{The logics \BC\BN\ and \BC\BN\BN}

Having explained the object level nature of our translation of argumentation into {\bf CN}, let us now focus on this logic.

So what kind of a logic is {\bf CN}?  Can we say more about it, in addition to what we said in Section 1?  The answer is yes.  We can add a modal point of view to that of Section 1.

Consider a modal logic with two possible worlds \Bt\ and \Bs.  Assume that \Bs\ is accessible to \Bt\ and \Bt\ is accessible to \Bs.  So we have a symmetric non-reflexive relation. See Figure \ref{540-F99}.

\begin{figure}
\centering
\input{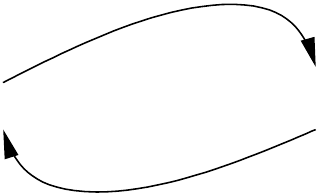_t}
\caption{}\label{540-F99}
\end{figure}

We further require on assignments to atoms that if the atom $q$ is assigned $\top$ at \Bt, then it is assigned $\top$ at \Bs\ (but not necessarily vice versae).

Let \Bt\ be the actual world and let us have the modal connective $N$ with the following truth table
\begin{itemize}
\item $\Bt \vDash NA$ iff $\Bs\vDash \neg A$
\item $\Bs\vDash NA$ iff $ \Bt\vDash \neg A$
\item $\vDash A$ iff $\Bt\vDash A$ (i.e.\ \Bt\ is the actual world)
\end{itemize}
Let us call this modal logic {\bf CNN}.

We therefore have the following true in the actual world \Bt
\begin{itemize}
\item ${\bf CNN} \vDash Nq\to \neg q, q$ atomic
\item ${\bf CNN}\vDash NNA\leftrightarrow A$
\item ${\bf CNN} \vDash N(A\wedge B)\leftrightarrow NA\vee NB$
\item ${\bf CNN}\vDash N(A\vee B)\leftrightarrow NA\wedge NB$
\item ${\bf CNN}\vDash N(N\neg q\to q)$
\item ${\bf CNN}\vDash (A\to B)\to N(NB\to NA)$
\end{itemize}

In this context for atomic arguments $q$ we understand the following in \Bt:
\begin{itemlist}{(*33)}
\item [(*1)] $q$ attacks $p$, reads $q\to Np$
\item [(*2)] $q$ is out, reads $Nq$
\item [(*3)] $q$ is in, reads $q$
\item [(*4)] $q$ is undecided, reads $\neg q \wedge\neg Nq$.
\end{itemlist}

\begin{definition}\label{540-D97}
Let {\bf CNN} be the logic extending {\bf CN} with the following axioms
\[
\begin{array}{ll}
{\bf NN}:& NNx\leftrightarrow x\\
{\bf N}\wedge: & N(x\wedge y)\leftrightarrow Nx\vee Ny\\
{\bf N}\vee:&N(x\vee y)\leftrightarrow Nx\wedge Ny\\
\BN\leftrightarrow:&x\leftrightarrow y \mbox{ implies } Nx \leftrightarrow Ny \mbox{, where $IN$ does not occur in $x$ or $y$\footnotemark}\\
{\bf N}\neg:&N\neg x \leftrightarrow \neg Nx\\

\end{array}
\]
\end{definition}
\footnotetext{Note that we have $\top  \leftrightarrow (Nx\to \neg x)$ but not  $N\top  \leftrightarrow N(Nx\to \neg x)$.}

\begin{example}\label{540-E97}
Consider the network of Figure \ref{540-F98}

\begin{figure}
\centering

\input{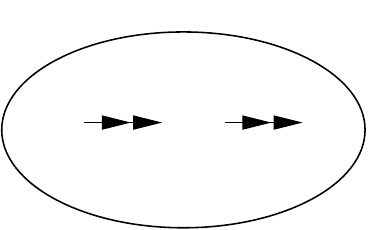_t}
\caption{}\label{540-F98}
\end{figure}

The Figure describes a network containing the points $x,y,z$ where $x$ is the only attacker of $y$ and $y$ is the only attacker of $z$. Given that, we know that no matter what complete extension we are dealing with, the value of $x$ must be equal to the value of $z$.
\begin{itemize}
\item $x=$ in $\To y =$ out $\To z =$ in
\item $x=$ out $\To y =$ in $\To z =$ out
\item $x=$ und $\To y =$ und $\To z =$ und
\end{itemize}

If we look at $\BF_{\rm in}, \BF_{\rm out}$ and $\BF_{\rm und}$ applied to these points, we see that the formulas
\begin{enumerate}
\item $x\rightarrow Ny$
\item $y\rightarrow Nz$
\item $y \leftrightarrow Nx$
\item $z \leftrightarrow Ny$
\item $\neg x\wedge\neg Nx\rightarrow \neg y\wedge\neg Ny$
\item $\neg y\wedge\neg Ny\rightarrow \neg z\wedge\neg Nz$
\end{enumerate}
must hold in the logic. We also have, of course, the axiom schema $Nu\to\neg u$.

These formulas must prove $x\leftrightarrow z$  and $Nx \leftrightarrow Nz$. We now proceed to prove them:
\begin{enumerate}[label=\alph*]
\item
Assume $x$, then from 1. and 4. we get $z$.
\item
Assume $z$, then from 4. we get $Ny$ and from the axiom we get $\neg y$ and so from 3. we get $\neg Nx$. If we had $\neg x$ then from 5. and 6. we would have got $\neg z$. Therefore we must have $x$.
\item
Assume $Nx$, then from 2. and 3. we get $Nz$.
\item
Assume $Nz$, then from axiom we get $\neg z$, and therefore from 4. and 1. we get $\neg x$. If we had $\neg Nx$, we would have got from 5. and 6. $\neg Nz$, and so we must have $Nx$.
\end{enumerate}
We thus proved  $x \leftrightarrow z$ from a. and b. and we proved  $Nx \leftrightarrow Nz$ from c. and d..

%
%
\end{example}

\begin{example}\label{540-E98}
Consider the network $\CA =(S, R)$ of Figure \ref{540-F97}
\begin{figure}
\centering
\input{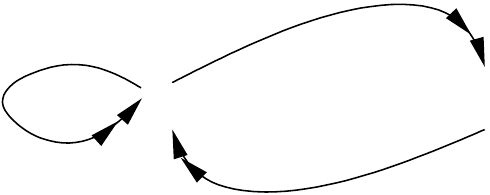_t}
\caption{}\label{540-F97}
\end{figure}

The axioms of $\Delta_{\CA}$ are the following:
\begin{enumerate}
\item $x\leftrightarrow Nx\wedge Ny$
\item $y\leftrightarrow Nx$
\item $x\to Nx$
\item $x\to Ny$
\item\begin{enumerate}
\item $(\neg x \wedge\neg y)\wedge(\neg Nx\vee \neg Ny)\to \neg x\wedge\neg Nx$
\item $\neg x \wedge\neg Nx\to \neg y\wedge\neg Ny$
\end{enumerate}
\end{enumerate}
The network $\CA$ has two possible extensions $y=$ in, $x=$ out, and $x=y=$ und.  So we want to be able to prove (using $\Delta_{\CA}$ and the extra axiom $NNA \leftrightarrow A$)  that either $y=$ in and $x=$ out or $y=x=$ und.  This means we want to prove (6)
\begin{enumerate}
\setcounter{enumi}{5}
\item $(y\wedge Nx) \bigvee (\neg y \wedge\neg Ny \wedge\neg x \wedge\neg Nx)$
\end{enumerate}
Let us show we can do it!

From (1) we get $x \to Nx$ and since we  have the axiom $Nx \to \neg x$ we get
\begin{enumerate}
\setcounter{enumi}{6}
\item $\neg x$
\end{enumerate}
We now prove
\begin{enumerate}
\setcounter{enumi}{7}
\item $Ny \rightarrow \neg Ny$
\end{enumerate}
Assume $Ny$, therefore by axiom we get $\neg y$ and so from (2) we get $\neg Nx$, and so from from (7) and (5b) we get $\neg Ny$. Thus we have proved (9) below:
\begin{enumerate}
\setcounter{enumi}{8}
\item $\neg Ny$.
\end{enumerate}
Therefore, in view of (9) and (7), (6) becomes (6*)
\begin{enumerate}
\item[6*.]
$(y\wedge Nx)\bigvee(\neg y \wedge\neg Nx)$
\end{enumerate}
or, equivalently (6**)
\begin{enumerate}
\item [6**.] $(y\vee Nx)\to y\wedge Nx$.
\end{enumerate}
In view of (2), (6**) becomes
\begin{enumerate}
\item [6***.] $(y\vee y)\to (y\wedge y)$.
\end{enumerate}
which holds.
\end{example}

\section{Conjunctive and disjunctive attacks in {\bf CN}}
In 2009 Gabbay \cite{540-3} introduced the option of conjunctive (joint) attacks and the notion of disjunctive attacks. The basic configuration can be seen in Figures \ref{540-F25a}, \ref{540-F25b},  \ref{540-F25c} and \ref{540-F25d}.  In Figure \ref{540-F25c} the nodes $\{y_1\comma y_m\}$ jointly mount a disjunctive attack on the nodes $\{z_1\comma z_n\}$. The
meaning in Figure \ref{540-F25c} is that if all of $y_i$ are in then at least one of $z_j$ is out.   See Figure \ref{540-F25d}.  Translated into {\bf CN} this means:
\[\bigwedge^m_{i=1} y_i\to \bigvee ^n_{j=1} Nz_j
\]

\begin{figure}
\centering
\input{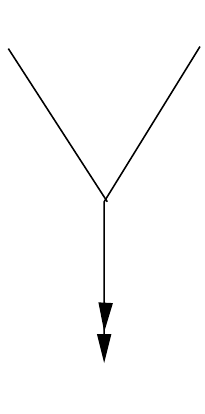_t}
\caption{}\label{540-F25a}
\end{figure}

\begin{figure}
\centering
\input{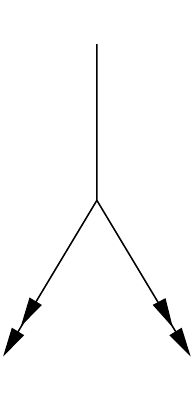_t}
\caption{}\label{540-F25b}
\end{figure}

\begin{figure}
\centering
\input{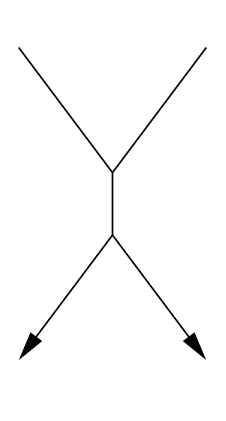_t}
\caption{}\label{540-F25c}
\end{figure}

\begin{figure}
\centering
\[
\bigwedge^m_{i=1} y_i\tO \bigwedge^n_{j=1} z_j
\]
\caption{}\label{540-F25d}
\end{figure}

\begin{definition}[CD network]\label{540-D26}{\ }
\begin{enumerate}
\item Let $S$ be a non-empty set of arguments. Consider the set $\Omega$ of all non-empty subsets of $S$. Use the notation $G, H\subseteq S$ for such subsets. Let $\BBR$ be a binary relation on such  subsets, i.e.
\[
\BBR\subseteq (2^S-\varnothing)^2.
\]
A conjunctive-disjunctive argumentation network (CD-network) has the form $(S, \Omega, \BBR)$.
\item When all attacked sets are singletons we get only joint attacks of the form
\[
G\BBR \{x\}, x\in S
\]
We can use the notation
\[
R_0=\{(G,x)|G\BBR\{x\}\}.
\]
\item When all attacking sets are singletons we get only disjunctive attacks of the form $\{x\}\BBR H$.
\end{enumerate}
\end{definition}
\begin{remark}\label{540-R26}
We need to discuss the nature of the joint-disjunctive attack.  The disjunctive part of the definition is different in nature from the conjunctive part. This may cause difficulties.  Consider the disjunctive attack of Figure \ref{540-F26a}.  $x$ is not attacked and so $x=$ in. Thus we must have either $a=$ out or $b=$ out.  So the extensions are

\begin{figure}
\centering
\input{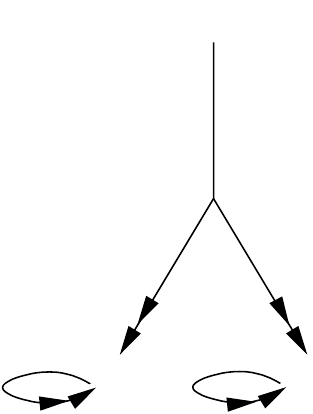_t}
\caption{}\label{540-F26a}
\end{figure}

\begin{center}
$x=$ in, $b=$ out. $a=$ und\\
$x=$ in, $a=$ out, $b=$ und.
\end{center}

In {\bf CN} we can write this as
\[
x\to Na\vee Nb.
\]
The reader should be strongly aware that the translation is from $\CA=(S, R)$ into {\bf CN} and {\em not} the other direction.

So if we have a joint attack of $\{x, a\}\tO b$, then the translation would be $x\wedge a\to Nb$.

To make this point clearer note that the condition $x \to Na\vee Nb$ is equivalent to the following two conditions
\begin{enumerate}
\renewcommand{\labelenumi}{\roman{enumi}.}
\item $x=$  in and $a\neq $ out $\to b=$ out\\
$x\wedge\neg Na\to Nb$
\item $x=$ in and $b\neq $ out $\to a=$ out\\
$x\wedge\neg Nb\to Na$.
\end{enumerate}
Consider now (i) as a new kind of joint attack as in Figure \ref{540-F26b}.

\begin{figure}
\centering
\input{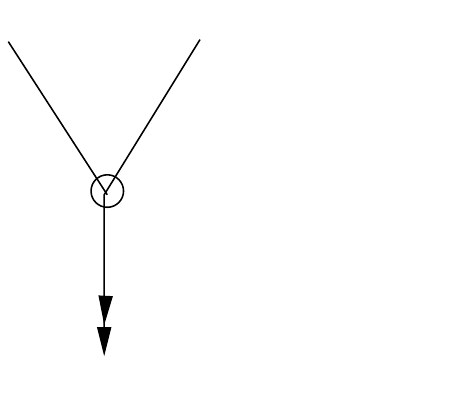_t}
\caption{}\label{540-F26b}
\end{figure}

\newcommand{\diamondto}{\mbox{---}\lozenge\!\!\!\longrightarrow}
In this new kind of joint attack if one or more of the attackers is undecided, then the attack still goes through. This also means that we get a new type of single attack which  we denote by
\[
x \mbox{---}\lozenge\!\!\!\longrightarrow y
\]
where $y$ is out even when $x=$ undecided.

In fact, we can define $x \diamondto y$ as the truth $\top$ disjunctively attacking $x\wedge y$. This view also implies that $x\diamondto y$ iff $y\diamondto x$.

This is not a good understanding of disjunctive attacks. Our intuitive understanding of being undecided is that $x$ is undecided because $x$ can go either way.  Either to $x=$ in or to $x=$ out.   So we might want to say that in Figure \ref{540-F25b}, if all $z_i$ are either in or undecided, with one $z_i$ undecided, then $\bigwedge_i z_i$ is undecided and  hence $z$ is undecided.

It is not our purpose to study in detail disjunctive attacks. We just want to show that the logics {\bf CN} and {\bf CNN} are very good in expressing attacks. The reader can see for a fact that in the {\bf CN} language we can say whatever we want. The attack $x\diamondto y$ can be written as $\neg Nx\to Ny$ or equivalently $\neg Ny \to Nx$.

There is no agreement in the literature on the meaning of the argument $z$ attacking the set of arguments $H=\{z_1\comma z_n\}$.  In \cite{540-3,540-16} the definition is $z\to \bigvee_j Nz_j$, while in \cite{540-4} the definition is $\bigvee_j (z\to Nz_j)$. See Section 8 below for further discussion and also see \cite{540-23}.

These two options are the same in two valued logic but not if we have the third undecided value.

Given the uncertainty of how to define complete extensions for networks with disjunctive attacks, we can simply write the {\bf CN}-formulas $\BF_{\rm in}, \BF_{\rm out}, \BF_{\rm und}$ that we want and {\em stipulate} that the complete extensions are to be all {\bf CN} models of these formulas. For further discussion see Section 8 below.
\end{remark}
Having concluded our discussion in the previous Remark~\ref{540-R26}, we still  do need to prove a correspondence theorem for the case of joint attacks only, since there is agreement about them in the literature.

\begin{definition}\label{540-D201}{\ }
\begin{enumerate}
\item An argumentation network with joint attacks has the form $(S, R_0)$, where $S$ is a non-empty set of arguments and $R_0\subseteq (2^S-\varnothing)\times S$ is the joint attack relation.
\item A legitimate Caminada--Gabbay labelling from $S$ into \{in, out, und\} is a function $\lambda$ satisfying the following conditions:
\begin{enumerate}
\item [(CG1)]  $\lambda(x)=$ in iff either $x$ is not attacked by any $G\subseteq S$ or for every $G\subseteq S$ such that $GR_0x$, there exists a $y\in G$ such that $\lambda(y)=$ out.
\item [(CG2)] $\lambda(x)=$ out iff for some $G\subseteq S$, $GR_0 x$ and for all $y \in G, \lambda (y)=$ in.
\item [(CG3)] $\lambda (x)=$ und iff for all $G\subseteq S$, such that $GR_0 x$ there exists a $y\in G$ such that $\lambda (y)\neq$ in,  and for some $G'\subseteq S$, $GR_0 x$ we have that for all $y\in G', \lambda (y)=$ either in or und.
\end{enumerate}
\item We identify the complete extensions of $(S, R_0)$ with the legitimate Caminada--Gabbay labellings of $S$.
\end{enumerate}
\end{definition}

\begin{theorem}\label{540-T202}
Let $\CA=(S, R_0)$ be an argumentation network with joint attacks.  Let $\Delta_{\CA}$ be its associated {\bf CN} theory with $\BF^{\rm Joint}_{\rm in}, \BF^{\rm Joint}_{\rm out}$ and $\BF^{\rm Joint}_{\rm und}$ as defined in CC2 of Section 4 and listed below.

Then the models of $\Delta_{\CA}$ correspond exactly to the Caminada--Gabbay labellings of $\CA$ according to Definition \ref{540-D201}.
$$
\begin{array}{c}
(\bigvee_{GR_0x}\bigwedge_{z\in G} z)\to Nx\\
\mbox{for all } x\in S
\end{array}
\leqno
\BF^{\rm Joint}_{\rm out}:
$$

$$
\begin{array}{c}
x\leftrightarrow (\bigwedge_{GR_0x} \bigvee_{z\in G} Nz)
\\\mbox{for all } x\in S
\end{array}
\leqno
\BF^{\rm Joint}_{\rm in}:
$$

$$
\bigwedge_{GR_0 x}(\bigvee_{z\in G}\neg z)\wedge(\bigvee_{GR_0 x} \bigwedge_{z\in G} (z\vee \neg Nz))\to \neg x \wedge\neg Nx
\leqno
\BF^{\rm Joint}_{\rm und}:
$$
\end{theorem}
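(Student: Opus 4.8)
\section*{Proof proposal}

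The plan is to mirror the proof of the First Correspondence Theorem (Theorem~\ref{540-T4}), splitting into two directions: from a {\bf CN} model of $\Delta_{\CA}$ to a legitimate Caminada--Gabbay labelling, and back. In each direction the only novelty over Theorem~\ref{540-T4} is that singleton attackers are replaced by attacking \emph{groups} $G$ with $GR_0x$, so the conjunction over $\{z:zRx\}$ becomes the nested $\bigwedge_{GR_0x}\bigvee_{z\in G}$ (and its dual) appearing in $\BF^{\rm Joint}_{\rm in}$, $\BF^{\rm Joint}_{\rm out}$ and $\BF^{\rm Joint}_{\rm und}$.

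For the first direction, given $h\vDash\Delta_{\CA}$ I would define $\lambda_h(x)=$ in, out, und according as $h(x)=1$, $h(Nx)=1$, or $h(x)=h(Nx)=0$; well-definedness is immediate from the axiom $Nx\to\neg x$. I would then verify the five key implications generalising those in Theorem~\ref{540-T4}: (i) if $x$ is unattacked then $\lambda_h(x)=$ in, since the empty conjunction makes $\BF^{\rm Joint}_{\rm in}$ read $x\leftrightarrow\top$; (ii) if some $G$ with $GR_0x$ has all members in, then $\BF^{\rm Joint}_{\rm out}$ forces $h(Nx)=1$; (iii) if every attacking $G$ contains an out-member, then $\BF^{\rm Joint}_{\rm in}$ forces $h(x)=1$; (iv) the analysis of the und case; and (v) if $\lambda_h(x)=$ out then some $G$ is all-in. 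From these the three biconditionals (CG1)--(CG3) follow: each node carries exactly one label, so (i) and (iii) supply the $\Leftarrow$ direction of (CG1) while its $\Rightarrow$ direction is read directly off $\BF^{\rm Joint}_{\rm in}$; (ii) and (v) together give (CG2); and the $\Rightarrow$ of (CG3) is (iv), its $\Leftarrow$ following by eliminating the in and out cases.

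For the converse, given a legitimate labelling $\lambda$ I would set $h_\lambda(x)=1$ exactly when $\lambda(x)=$ in and $h_\lambda(Nx)=1$ exactly when $\lambda(x)=$ out. Then I would check each axiom of $\Delta_{\CA}$ in turn: $Nx\to\neg x$ is immediate from the disjointness of the labels; $\BF^{\rm Joint}_{\rm out}$ follows from (CG2); each half of the biconditional $\BF^{\rm Joint}_{\rm in}$ follows from the two directions of (CG1), using that $h_\lambda(Nz)=1$ iff $\lambda(z)=$ out; and $\BF^{\rm Joint}_{\rm und}$ follows by showing its antecedent forces $\lambda(x)=$ und.

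The hard part, in both directions, will be the undecided clause $\BF^{\rm Joint}_{\rm und}$, whose antecedent is an asymmetric pairing of a conjunction-of-disjunctions $\bigwedge_{GR_0x}\bigvee_{z\in G}\neg z$ (read as ``no attacking group is entirely in'') with a disjunction-of-conjunctions $\bigvee_{GR_0x}\bigwedge_{z\in G}(z\vee\neg Nz)$ (read as ``some attacking group $G'$ has every member in-or-undecided''). The delicate step is (v): to show $h(Nx)=1$ entails an all-in group, I would argue contrapositively --- if no group is all-in then the first conjunct holds, and the group $G'$ witnessing $h(x)=0$ via $\BF^{\rm Joint}_{\rm in}$ (so that no member of $G'$ is out, whence $z\vee\neg Nz$ holds throughout $G'$) witnesses the second conjunct, so $\BF^{\rm Joint}_{\rm und}$ yields $h(Nx)=0$, a contradiction. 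Matching this two-conjunct antecedent precisely against the two clauses of (CG3), and keeping the three-way case split consistent throughout, is where the bookkeeping must be done with care; everything else is a routine transcription of the singleton-attack argument of Theorem~\ref{540-T4}.
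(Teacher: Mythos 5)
Your proposal is correct and follows essentially the same route as the paper's proof: both directions are handled by the obvious correspondences $h\leftrightarrow\lambda$ and a clause-by-clause check of $\BF^{\rm Joint}_{\rm in}$, $\BF^{\rm Joint}_{\rm out}$, $\BF^{\rm Joint}_{\rm und}$ against (CG1)--(CG3), exactly as in the paper's Parts A and B. If anything you are slightly more explicit than the paper about recovering the $\Rightarrow$ directions of the (CG) biconditionals from mutual exclusivity of the labels, and your contrapositive treatment of step (v) via $\BF^{\rm Joint}_{\rm und}$ is sound.
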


\begin{proof}
\paragraph{Part A.}  Assume we have a legitimate $\lambda$. We define a model $h=h_\lambda$ of $\Delta_{\CA}$.

Let
\begin{itemize}
\item $h_\lambda (x) =1$ if $\lambda (x)=$ in
\item $h_\lambda(Nx) =1$ if $\lambda(x)=$ out
\end{itemize}
We show that all axioms of {\bf CN} hold.

\begin{enumerate}
\item Show $Nx \to \neg x$. Otherwise $h_\lambda (Nx) =h_\lambda (x) =1$, but this means $\lambda (x) =$ in = out which is impossible.
\item We check $\BF^{\rm Join}_{\rm out}$.  Assume $h_\lambda (\bigvee_{GR_0 x} \bigwedge_{z\in G} z) =1$.  So for some $GR_0 x$ we have $h_\lambda (z) =1$ for all $z\in G$. This means for this $G$ that $\lambda (z) =$ in for all $z\in G$. Hence $\lambda (x)=$ out. Hence $h_\lambda (Nx) =1$.
\item We check $\BF^{\rm Joint}_{\rm in}$.
\begin{enumerate}
\item Assume $h_\lambda (x)=1$.  This means $\lambda (x) =$ in.  Hence for any $GR_0 x$ we have that for some $z\in G, \lambda (z)=$ out. Thus for all $GR_0x$ there is a $z\in G$ s.t. $h(Nz) =1$, i.e.\ $h(\bigwedge_{GR_0 x}\bigvee_{z \in G} Nz) =1$.
\item Suppose $h_\lambda (\bigwedge_{GR_0 x}\bigvee_{z\in G} Nz) =1$.  This means for every $GR_0 x$ there is a $z$ such that $h_\lambda (Nz) =1$. But this means for every $GR_0x$ there is a $z\in G$ s.t. $\lambda (z)=$ out. Thus $\lambda (x)=$ in so $h_\lambda(x)=1$.
\end{enumerate}
\item We check $\BF^{\rm Joint}_{\rm und}$:  Assume that
\begin{enumerate}
\item $h_\lambda (\bigwedge_{GR_0 x}\bigvee_{z\in G}\neg z) =1$
\item $h(\bigvee_{GR_0 x} \bigwedge_{z\in G}(z\vee \neg Nz))=1$.
\end{enumerate}
We show that $h(\neg x\wedge\neg Nx) =1$.

From (a) we get that every $GR_0 x$ has a $z$ such that  $h_\lambda (z) =0$, i.e.\ $\lambda (z)\neq$ in.

From (b) we get that there is a $GR_0 x$ such that for all $z\in G$, we have that  $\neg z\to \neg Nz$ holds.

Thus for one $GR_0 x$ there is a $z$ such that $h(\neg z\wedge \neg Nz) =1$.  This means $\lambda (z) \neq$ in and $\lambda (z) \neq$ out, i.e.\ $\lambda (z)=$ undecided.

So we got that for every $GR_0 x$, not all $\lambda(z)=$ in, for $z\in G$ and for one such $G'$ all $z$ are either in or undecided with one $\lambda (z)=$ und.

This implies $\lambda (x)=$ und.  So $h_\lambda (\neg x) =h_\lambda (\neg Nx) =1$.
\end{enumerate}

\paragraph{Part B.}  Assume $h$ is a model of $\Delta_{\CA}$. Define $\lambda_h$
\begin{itemize}
\item $\lambda_h (x)=$ in if $h(x)=1$
\item $\lambda_h(x)=$ out if $h(Nx) =1$
\item $h_\lambda(x)=$ und if $h(\neg x\wedge\neg Nx) =1$.
\end{itemize}
We show $\lambda_h$ is a legitimate labelling.
\begin{enumerate}
\item $\lambda_h$ is well defined because $h$ satisfies
\[
h\vDash Nx\to \neg x.
\]
\item We show (CG1)
\begin{enumerate}
\item If $x$ is not attacked then $x\in \Delta_{\CA}$ and so $\lambda_h(x)=$ in.
\item Assume that for every $GR_0 x$ there is $y\in G$ with $\lambda_h(y)=$ out.
\end{enumerate}
This means for every $GR_0 x$ there is $y\in G$ with $h(Ny)=1$.  So $h(\bigwedge_{GR_0 x}\bigvee_{y\in G} Ny)=1$.

So by $\BF^{\rm Joint}_{\rm in}$ we get that $h(x)=1$ i.e. $\lambda (x)=$ in.
\item We show (CG2).  Assume for some $GR_0 x$ and all $y\in G$ that $\lambda_h(y)=$ in. This means
\[
h(\bigvee_{GR_0x} \bigwedge_{y\in G}y)=1.
\]
Hence by $\BF^{\rm Joint}_{\rm out}, h(Nx)=1$, i.e.\ $\lambda (x)=$ out.
\item We show (CG3).  Suppose for every $GR_0 x$ there is a $y$ s.t.\ $\lambda (y) \neq $ in and in some $G'R_0 x$ for $y\in G', \lambda (y)=$ in or undecided and that there is a $y_0\in G'$ such that $\lambda (y_0)=$ und.

Then $h(\bigwedge_{GR_0 x}\bigvee_{y\in G}\neg y) =1$ and $h(\bigvee_{GR_0 x}\bigwedge_{y\in G}(y\vee \neg Ny)) =1$.

This means that the antecedent of $\BF^{\rm Joint}_{\rm und}$ holds and so $h(\neg x\wedge\neg Nx) =1$ and so $\lambda (x)=$ und.

\end{enumerate}
\end{proof}

\begin{theorem}\label{540-T203}
Let $\CA=(S, R_0)$ be a argumentation network with joint attacks. Then there exists a Caminada--Gabbay legitimate labelling $\lambda$ for it, (i.e.\ it does have complete extensions).
\end{theorem}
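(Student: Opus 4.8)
The plan is to produce a witnessing labelling directly by constructing the \emph{grounded} Caminada--Gabbay labelling as a least fixed point. By Theorem~\ref{540-T202} the models of $\Delta_{\CA}$ are in exact correspondence with legitimate labellings according to Definition~\ref{540-D201}, so the statement is equivalent to the {\bf CN}-consistency of $\Delta_{\CA}$; but rather than argue consistency abstractly I would exhibit one concrete labelling, adapting Dung's characteristic-function argument to the joint-attack setting.

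First I would fix terminology: for $E\subseteq S$ say that \emph{$E$ attacks $y$} iff there is some $G$ with $GR_0 y$ and $G\subseteq E$. Then define the characteristic operator $F:2^S\to 2^S$ by
$$F(E)=\{x\in S \mid \text{for every } G \text{ with } GR_0 x \text{ there is } y\in G \text{ attacked by } E\}.$$
Every $x$ not attacked by any $G$ lies in $F(E)$ vacuously. The operator $F$ is monotone, since enlarging $E$ can only enlarge the collection of arguments it attacks and hence only enlarge $F(E)$; by Knaster--Tarski it has a least fixed point $E^{\star}=\bigcup_{n}F^{n}(\varnothing)$, the union over finite stages sufficing because the network is finitary. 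I would then define $\lambda$ by: $\lambda(x)=$ in if $x\in E^{\star}$; $\lambda(x)=$ out if $E^{\star}$ attacks $x$; and $\lambda(x)=$ und otherwise.

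The main obstacle is well-definedness, namely that the ``in'' and ``out'' cases are disjoint, which is precisely conflict-freeness of $E^{\star}$ (no member of $E^{\star}$ is attacked by $E^{\star}$). I would establish this by induction on $n$ with the strengthened invariant that each stage $E_n:=F^{n}(\varnothing)$ is \emph{admissible}, i.e.\ conflict-free and self-defending ($E_n$ attacks a member of every set attacking any of its own elements, equivalently $E_n\subseteq F(E_n)$). The crucial step is the joint-attack analogue of Dung's Fundamental Lemma: if $G\subseteq E_{n+1}$ with $GR_0 y$ and $y\in E_{n+1}=F(E_n)$, then $E_n$ attacks some $z\in G$ via some $H\subseteq E_n$; but $z\in E_{n+1}$ is defended by $E_n$, so $E_n$ must attack some $w\in H\subseteq E_n$, contradicting conflict-freeness of $E_n$. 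Monotonicity gives $E_{n+1}\subseteq F(E_{n+1})$, so $E_{n+1}$ is again admissible, and the finitary hypothesis lets conflicts in the union $E^{\star}$ descend to some finite stage, so $E^{\star}$ is conflict-free and $\lambda$ is well defined.

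Finally I would check (CG1)--(CG3), which are routine once conflict-freeness is in hand. For (CG1), $x\in E^{\star}=F(E^{\star})$ says exactly that $x$ is unattacked or every attacking $G$ has a member attacked by $E^{\star}$, i.e.\ labelled out; conversely that condition forces $x\in F(E^{\star})=E^{\star}$. For (CG2), ``$\lambda(x)=$ out'' unwinds by definition to the existence of an attacking $G\subseteq E^{\star}$, whose members are all in, and conflict-freeness guarantees such an $x$ is not simultaneously in. For (CG3), the case $\lambda(x)=$ und is exactly ``$x\notin E^{\star}$ and $E^{\star}$ does not attack $x$''; the second conjunct says every attacking $G$ has a non-in member, while $x\notin F(E^{\star})$ yields some attacking $G'$ all of whose members are in-or-und, matching the two clauses of (CG3). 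This exhibits a legitimate Caminada--Gabbay labelling and proves the theorem (and, through Theorem~\ref{540-T202}, reconfirms that $\Delta_{\CA}$ has a model).
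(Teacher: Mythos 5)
Your proof is correct, but it takes a genuinely different route from the paper's. The paper does not construct a labelling at all: it reduces the joint-attack network $(S,R_0)$ to an ordinary Dung network $(S',R')$ by introducing, for each attacked node $x$ and each attacking set $G_i=\{z_{i,1}\comma z_{i,r(i)}\}$, auxiliary nodes $x(G_i)$ and $\Be(x,G_i,z_{i,j})$ with single attacks $z_{i,j}\tO\Be(x,G_i,z_{i,j})\tO x(G_i)\tO x$, and then invokes (i) the existence of legitimate labellings for ordinary networks and (ii) two claims, deferred to \cite{540-3}, that legitimate labellings of $(S',R')$ restrict exactly to legitimate Caminada--Gabbay labellings of $(S,R_0)$ and conversely. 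You instead adapt Dung's characteristic-function argument directly to the joint-attack setting: the monotone operator $F$, its least fixed point $E^{\star}$ reached in $\omega$ stages, the admissibility invariant with a joint-attack version of the Fundamental Lemma to secure conflict-freeness, and a direct verification of (CG1)--(CG3); I checked these steps and they all go through (in particular your contradiction $K\subseteq E_n$, $KR_0 w$, $w\in H\subseteq E_n$ is exactly the right analogue of Dung's lemma). Your argument is self-contained and identifies the witness as the grounded labelling, which is sharper than bare existence; the paper's reduction buys something different, namely a correspondence between \emph{all} complete extensions of $(S,R_0)$ and those of $(S',R')$, which is then exploited again in Remark \ref{540-R206}. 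One small point of polish: Knaster--Tarski alone gives the least fixed point, while the identification $E^{\star}=\bigcup_n F^n(\varnothing)$ is a continuity (Kleene) step, and that is precisely where the finitary hypothesis (finitely many attacking sets per node, each finite) is used; it would be worth saying so explicitly, since you invoke finitariness again when pushing a conflict in $E^{\star}$ down to a finite stage.
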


\begin{proof}
We reproduce a construction from Gabbay's 2009 paper \cite{540-3} which faithfully reduces joint attacks to single attacks.  This means that $(S, R_0)$ can be faithfully embedded into a traditional network $(S', R')$ with $S\subseteq S'$.  Since $(S', R')$ has complete extensions, we will get that $(S, R_0)$ has complete extensions.

We now proceed to generate the new points to add to $S$ to obtain $S'$, and we define $R'$.

Let $x\in S$. Let $G_1\comma G_n$ be all the attacking sets of $x$. Let $G_i =\{z_{i,1}\comma z_{i,r(i)}\}, i=1\comma n$. Introduce new points as follows:
\[
S_x= \{x(G_i), \Be(x,G_i,z_{i,1})\comma \Be(x,G_i, z_{i,r(i)}|i=1\comma n\}.
\]

Let $S'=S\cup\bigcup_{x\in S} S_x$.

Note that for any configuration $(G_1\comma G_n, x)$, where $G_i$ are all the sets attacking $x$, the sets $S_x$ are all sets of distinct disjoint points.  Note also the correspondence between each $z_{i,j} \in G_i$ and the point $\Be(x, G_i, z_{i,j})$.  So, for example, if $\{a,b\}R_0 x$ and $\{a,b\}R_0 y$ hold and $\{a,b\}$ is the only attacker of $x$ and the only attacker of $y$, the new points will be
\[
\begin{array}{lcl}
S_x&=& \{x(\{a,b\}),\Be(x,\{a,b,\},a),\Be(x,\{a,b\},b)\}\\
S_y&=& \{y(\{a,b\}),\Be(y,\{a,b\},a),\Be(y,\{a,b\},b)\}.
\end{array}
\]

Define $R'$ on $S'$ as follows.

For any $G_1\comma G_n$ attacking any $x$, let
\[
\begin{array}{l}
z_{i,j}R'\Be(x,G_i.z_{i,j})\\
\Be(x,G_i,z_{i,j})R'x(G_i)\\
x(G_i) R'x.
\end{array}
\]

Figures \ref{540-F204} and \ref{540-F205} show what we have down for the case of $x\in S$ and $G_i$ are the only joint attacker of $x$ with $G_i=\{z_{i,1}\comma z_{i, r(i)}\}, i=1\comma n$.

\begin{figure}
\centering
\input{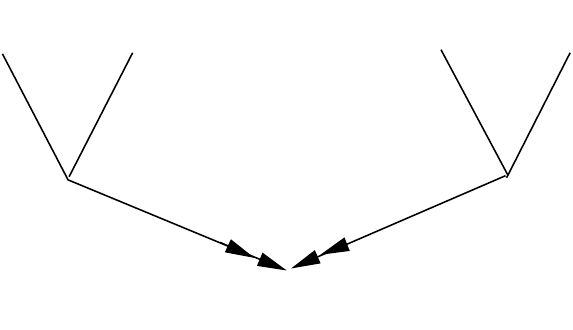_t}
\caption{Joint attacks of $G_1\comma G_n$}\label{540-F204}
\end{figure}

\begin{figure}
\centering
\input{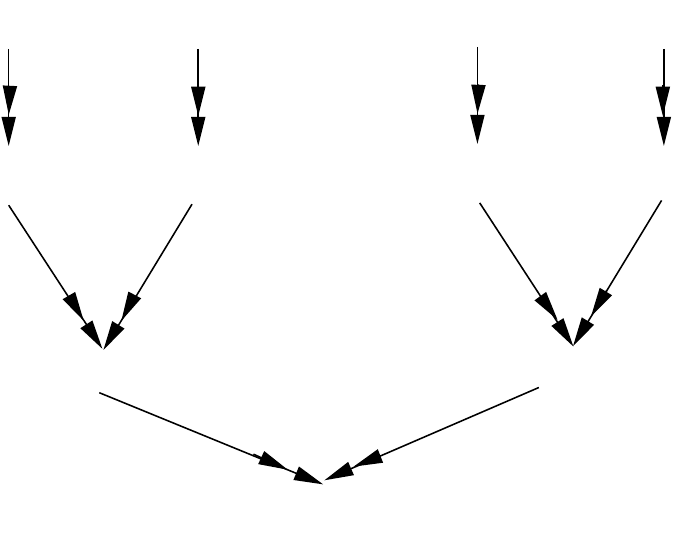_t}
\caption{Reduction of the joint attacks }\label{540-F205}
\end{figure}

We want to prove the following claims.

\paragraph{Claim 1.}  Let $\lambda$ be a legitimate labelling of $(S', R')$. Then $\mu =\lambda\upharpoonright S$ is a legitimate labelling of $(S, R_0)$.

\paragraph{Claim 2.}  Let $\mu$ be a legitimate labelling of $(S, R_0)$. Then $\mu$ can be uniquely extended to a labelling $\lambda$ of $(S', R')$ and this $\lambda$ is legitimate for $(S', R')$.

To prove the above we need some shorthand definitions.

Let $G$ be a set of nodes. We say that $G$ is in if every $x\in G$ is in. We say that $G$ is out if some $x\in G$ is out and we say $G$ is undecided if every $x$ in $G$ is not in and some $y\in G$ is undecided.

\paragraph{Proof of Claim 1 and Claim 2.}  This is proved in \cite{540-3}.  We can see the idea of the proof by comparing Figures \ref{540-F204} and \ref{540-F205}.

\begin{enumerate}
\item Suppose $x$ is in, in Figure \ref{540-F205} then we must have that all of $x(G_i)$ are out. For $x(G_i)$ to be out, one of $\{\Be(x,G_i,z_{i,j})|j=1,2,\ldots\}$ must be in. So one of $\{z_{i,j}|j=1,\ldots\}$ must be out and so the set $G_i$ must be out.

This means that $x=$ in implies that all $G_i$ are out. In fact we have that $x$ is in if and only if all $G_i$ are out. But this is exactly the condition for $x=$ in, in $(S, R_0)$.
\item Suppose $x$ is out in $(S',R')$. Then some $x(G_i)$ is in. Therefore all of $\{\Be(x,G_i, z_{i,j} |j=1,\ldots\}$ for this $G_i$ are out.  Hence all of $\{z_{i,j}\}$ for this $G_i$ are in, and hence $G_i$ is in. Again this goes in both (iff) direction and is the out condition in $(S, R_0)$.
\item Suppose $x$ is undecided in $(S', R')$. Then all of $x(G_i)$ are either out or undecided, with one $x(G_i)$ at least being undecided. If $x(G_i)$ is out then as we have seen before, $G_i$ must be in. If $x(G_i)$ is undecided then all elements of $G_i$ must be either in or if not, in then undecided, with at least one element of $G_i$ being undecided. Again this argument is an if and only if argument and is exactly the condition for $x=$ und in $(S, R_0)$.
\item The considerations in (1), (2), (3) also show that Claim 2 is true because the values of $G_i$ and $x$ determine the values of $\Be(x, G_i, z_{i,j})$ and $x(G_i)$ uniquely.
\end{enumerate}
The theorem follows from the claims, since $(S', R')$ does have legitimate labellings.

\begin{remark}\label{540-R206}
The considerations in the proof of Theorem \ref{540-T203} show that basically the configuration in Figure~\ref{540-F205} ``implies uniquely'' the configuration of Figure \ref{540-F204}.  This means that if we write in the logic {\bf CNN} the conditions $\BF'_{\rm in}, \BF'_{\rm out}, \BF'_{\rm und}$ for $x$  and for all the additional new points  in $(S', R')$, it should logically imply in {\bf CNN}, the conditions $F^{\rm Joint}_{\rm in}, \BF^{\rm Joint}_{\rm out}$ and $\BF^{\rm Joint}_{\rm und}$ for $x$ in $(S, R_0)$. This means the new points play no real logical role.

Let us prove this, also as an exercise in the deduction capabilities of {\bf CNN}.

Let us write all the $\BF'$ of $x$ for $(S', R')$ and prove all the $\BF^{\rm Joint}$ of $x$ for $(S, R_0)$.

We are given the following data.
\begin{enumerate}
\item $x\leftrightarrow \bigwedge_i Nx(G_i)$
\item $\bigvee_i x (G_i)\to Nx$
\item $x(G_i) \leftrightarrow \bigwedge_j N\Be(x, G_i, z_{i,j})$
\item $\bigvee_j\Be(x,G_i ,z_{i,j})\to Nx(G_i)$
\item
\begin{enumerate}
\item $z_{i,j} \to N\Be(x.G_i,z_{i,j})$
\item $\neg z_{i,j}\wedge\neg Nz_{i,j}\to \neg\Be(x,G_i,z_{i,j})\wedge\neg N\Be(x,G_i,z_{i,j})$
\end{enumerate}
\item $\Be(x,G_i,z_{i,j})\leftrightarrow Nz_{i,j}$
\item
\begin{enumerate}
\item $\bigwedge_i\neg x(G_i)\wedge\bigvee_i\neg Nx(G_i)\to \neg x\wedge\neg Nx$
\item $\bigwedge_i\neg \Be(x,G_i,z_{i,j})\wedge\bigvee_j \neg N\Be(x,G_i,z_{ij})\to \neg x (G_i)\wedge\neg Nx(G_i)$
\end{enumerate}
\end{enumerate}
\noindent
Let us see what we can prove in {\bf CNN} from (1)--(7).
%
%
%
%
We start by proving $\BF^{\rm Joint}_{\rm in}$:
$$
x\leftrightarrow \bigwedge_{GR_0x}\bigvee_{z\in G}Nz
$$
\begin{enumerate}\setcounter{enumi}{7}
\item
\begin{enumerate}
\item
Assume $x$
\item
from (1) we get for all $i$, $Nx(G_i)$.
\item
From (8b) we get $\neg x(G_i)$.
\item
Fix an arbitrary $i$. We show that for at least one $i$ we have $\Be(x,G_i, z_{i,j})$.
\item
we assume in order to reach a contradiction that for all $j$ we have $\neg\Be(x,G_i, z_{i,j})$. If by any chance we have that for at least one $j$ $\neg N\Be(x,G_i, z_{i,j})$, we would get by (7b) $\neg Nx(G_i)$ also, contradicting (8b). Thus we must have $ N\Be(x,G_i, z_{i,j})$ for all $j$. But then by (3) we get $x(G_i)$, contradicting (8c). Therefore (10d) must hold.
\item
So from (6) and (10d) we get that for at least one $j$ we have $Nz_{i,j}$.
\item
summarising, having assumed $x$ in (8a) we get from (8d) and (8e) that for all $i$, there exists a $j$ such that $Nz_{i,j}$ holds. This is the direction $x\rightarrow \bigwedge_{GR_0x}\bigvee_{z\in G}Nz$.
\end{enumerate}
\item
\begin{enumerate}
\item
Assume for each $i$ there exists a $j$ such that $Nz_{i,j}$ holds.
\item
Therefore from (6) we get that for all $i$ there exists a $j$ such that $\Be(x,G_i, z_{i,j})$ holds.
\item
Therefore from (4) we have for all $i$, $Nx(G_i)$ holds.
\item
Therefore from (1) we have that $x$ holds.
\item
Summarising, we have proved that $\bigwedge_{GR_0x}\bigvee_{z\in G}Nz \rightarrow x$.
\end{enumerate}
\end{enumerate}
We continue to derive $\BF^{\rm Joint}_{\rm out}$.
\begin{enumerate}\setcounter{enumi}{9}
\item From (5) we get $\bigwedge_j z_{i,j}\to \bigwedge_j \Be(x,G_i, z_{i,j})$.
\item Using (6) and (3) we get  $\bigwedge_j z_{i,j}\to x(G_i)$.
\item From (2) and (11) we get $\bigvee_i\bigwedge_j z_{i,j}\to Nx$.  This is $\BF^{\rm Joint}_{\rm out}$.

We now show $\BF^{\rm Joint}_{\rm und}$.

We need first to prove (14) below.

From (3) we get
\item $\bigwedge_i \neg x (G_i)\leftrightarrow \bigwedge_i \bigvee_j \neg N\Be(x, G_i, z_{i,j})$.

From (6) substituted in the right hand side of (13) we get
\item $\bigwedge_i \neg x(G_i)\leftrightarrow \bigwedge_i\bigvee_j \neg NNz_{i,j}$

hence $\bigwedge_i\neg x(G_i)\leftrightarrow \bigwedge_i\bigvee_j\neg z_{i,j}$.
\end{enumerate}
\noindent We now carry on to prove $\BF^{\rm Joint}_{\rm und}$.
\begin{enumerate}\setcounter{enumi}{14}
\item
Assume the antecedents of $\BF^{\rm Joint}_{\rm und}$, namely
\begin{enumerate}
\item $\bigvee_i\bigwedge_j (z_{i,j}\vee\neg Nz_{ij})$.
\item $\bigwedge_i\bigvee_j \neg z_{i,j}$.

We want to show the consequent
\item $\neg x\wedge\neg Nx$.
\end{enumerate}
From (a) there is an $i^*$ satisfying
\item $\bigwedge_j (z_{i^*, j}\vee\neg z_{i^*, j})$.

For this $i^*$ we use (b) an get $\bigvee_j\neg z_{i^*, j}$.  Let $j^*$ be a choice for this disjunction.  We thus get that
\item $\neg z_{i^*, j^*}\wedge\neg Nz_{i^*, j^*}$ holds.

Therefore by axiom (5b) we get that
\item $\neg \Be(x. G_{i^*}, z_{i^*, j^*})\wedge\neg N\Be(x, G_{i^*}, z_{i^*, j^*})$ holds.

Look again at $G_{i^*} =\{z_{i^*, 1}\comma z_{i^*, r(i^*)}) $.  We have for each $(i^*, j)$ that either $z_{i^*,j}$ holds in which case by (5a) $N\Be(x,G_{i^*}, z_{i^*, j})$ holds or $\neg z_{i^*,j}$ holds in which case by (15) we get that $\neg z_{i^*, j}\wedge\neg Nz_{i^*, j}$ holds and therefore by (5b)
\item $\neg \Be(x, G_{i^*}, z_{i^*, j})\wedge\neg N\Be(x.G_{i^*}, z_{i^, j})$ holds.

Therefore (7b) holds for $G_{i^*}$ and hence
\item $\neg x (G_{i^*})\wedge\neg N\Be (G_{i^*})$ holds.

Let us see what we got so far.  From (14) we have
\[
\bigwedge_i \neg x (G_i)
\]
From (19) we get
\[
\bigvee _i \neg Nx(G_i)
\]
holds.

Therefore the antecedent of (7a) holds and we get
\begin{enumerate}
\setcounter{enumii}{2}
\item $\neg x\wedge\neg Nx$
\end{enumerate}

\end{enumerate}

\end{remark}
\end{proof}

\section{Higher level attacks}

Higher level attacks are attacks on attacks. This concept was first introduced in 2005 in~\cite{540-8} (see also\cite{540-22} for an expanded version). The idea is very simple. Suppose we are given an abstract system $(S,R)$ where $S$ is a non-empty set of objects and $R$ is a binary relation on $S$, ( i.e. $R$ is a set of pairs $(x,y)$ where $x$ and $y$ are from $S$). In argumentation theory $S$ is a set of arguments and $R$ is the attack relation, (denoted by $x\twoheadrightarrow y$, for $(x,y)$ in $R$), but in general there are other interpretations for $S$ and $R$. In the abstract,  we can  regard the elements of $R$ also as objects and therefore we can expand the relation $R$ into a wider relation $R_1$ also containing elements from R itself. This means that in argumentation we allow also to have attacks on attacks.
So we can also write possibly
\begin{enumerate}[label=\roman*]
\item
$z\twoheadrightarrow (x\twoheadrightarrow y)$
\item
$(u\twoheadrightarrow v)\twoheadrightarrow (x\twoheadrightarrow y)$
\item
$(u\twoheadrightarrow v)\twoheadrightarrow z$
\end{enumerate}
The papers \cite{540-2,540-5,540-6,540-7,540-9} deal only with possibility (i) above, but there is no mathematical reason not to generalise to all three possibilities and even  iterate the process.

The problem is to extend whatever semantics we have for the case of $(S,R)$ to the general iterated case where we turned the elements of R into objects .

In the case of higher level attacks in argumentation, we need to give semantics for the generalised higher level networks.
This is not difficult, thanks to the nature of the Caminada labelling. Since we regard higher level attacks as legitimate objects to be attacked (following the pattern in (i)), we can give such objects labels from the set $\{\text{in},\text{out},\text{undecided}\}$ and require the Caminada conditions (conditions C1,C2, C3 of the beginning of Section~\ref{540-S3}) to hold for the higher level attacks.

Our task in this paper is to express higher level attacks in the system \BC\BN.

To show how to express higher level attacks in {\bf CN} we shall reduce higher level attacks to joint attacks by adding points to the network. First, let us define the networks we are dealing with.

\begin{definition}\label{540-6D1}{\ }
\begin{enumerate}
\item Let $S$ be a non-empty set of arguments. A higher level network based on $S$ has the form $\CA =(S, \rho_1\comma \rho_n)$ where
\[
\begin{array}{ll}
\rho_1\subseteq S\times S,&\mbox{first level attacks}\\
\vdots\\
\rho_{i+1} \subseteq S\times \rho_i,&\mbox{$i+1$ level attacks}\\
\vdots\\
\rho_n\subseteq S\times\rho_{n-1},& \mbox{nth level attacks}
\end{array}
\]
Note that $S, \rho_1\comma \rho_n$ are all pairwise disjoint
\item Let $\CA =(S, \rho_1\comma\rho_n)$ be a higher level network. We define the associated joint attacks network $\CA_{\tau}=(S_\tau, R_\tau)$ as follows:
\begin{enumerate}
\item $S_\tau=S\cup \bigcup_i \rho_i$
\item Let $x\in S_\tau$.  Then there is a unique $i$ such that $x\in \rho_i$. Let $(z_1,x)\comma (z_k, x)$ be all the attackers of $x$ in $\rho_{i+1}$.  Then let the following joint attacks be in $R_\tau$
\[
\begin{array}{l}
\{z_1,(z_1,x)\}R_\tau x\\
\vdots\\
\{z_k,(z_k,x)\}R_\tau x.
\end{array}
\]
\end{enumerate}
\item We stipulate/define that the complete extensions of $\CA$ to be the restrictions of the complete extensions $\lambda$ of $\CA_\tau$ to $S$
\end{enumerate}
\end{definition}

\begin{example}\label{54-6E2}
Consider Figure \ref{540-6F3}

\begin{figure}
\centering
\input{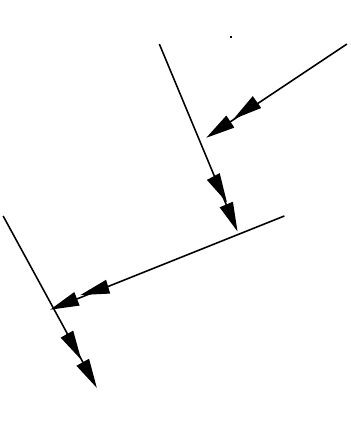_t}
\caption{}\label{540-6F3}
\end{figure}

In Figure \ref{540-6F3} we gave names to the attacks.
\[
\begin{array}{l}
\alpha = z\tO x\\
\beta = y\tO \alpha\\
\gamma = u\tO \beta\\
\delta = w\tO\gamma
\end{array}
\]
It is natural to give the double arrows names, because in higher level attacks we treat them as objects to be attacked.

Figure \ref{540-6F3} becomes Figure \ref{540-6F4}.

\begin{figure}
\centering
\input{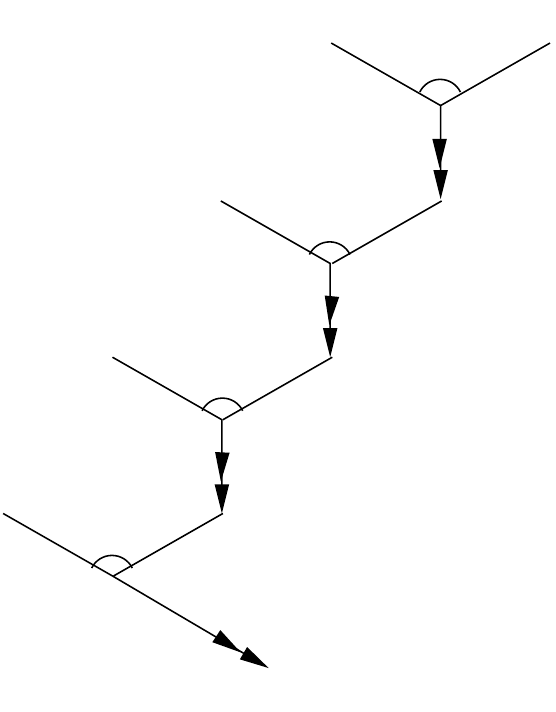_t}
\caption{}\label{540-6F4}
\end{figure}

This illustrates how we reduce higher level attacks to joint attacks, simply by giving names to the attacking arcs.
\end{example}

\section{Disjunctive attacks}
Our starting point is the representation of the disjunctive attack of Figure \ref{540-F25b}.  We noted in Remark \ref{540-R26} that this attack is represented in {\bf CN} by the formula
\[
z\to \bigvee^n_{i=1} Nz_i.
\]
In classical logic this formula is equivalent to the formula
\[
\bigvee^n_{i=1} (z\to Nz_i).
\]
On the face of it, it looks like a disjunctive attack on a set is reduced to attacking one of the elements of the set. The next example will explain what is really going on.

\begin{example}\label{540-ES1}
Consider the networks in Figures \ref{540-FS2}, \ref{540-FS3} and \ref{540-FS4}.

\begin{figure}
\centering
\input{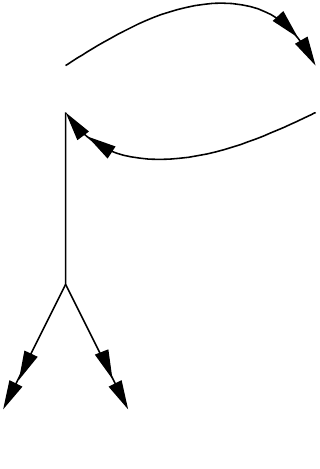_t}
\caption{}\label{540-FS2}
\end{figure}

\begin{figure}
\centering
\input{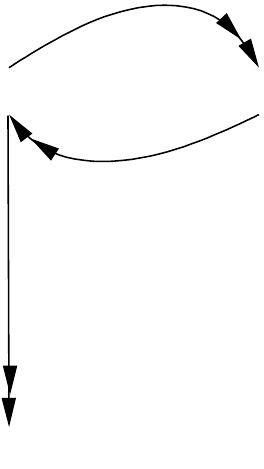_t}
\caption{}\label{540-FS3}
\end{figure}

\begin{figure}
\centering
\input{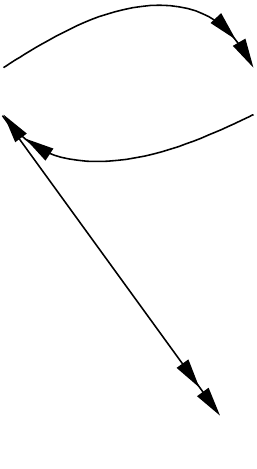_t}
\caption{}\label{540-FS4}
\end{figure}

The representation of the attacks of Figure \ref{540-FS2} in {\bf CN} is through (1)--(3):
\begin{enumerate}
\item $a\to Nb$
\item $b\to Na$
\item $a\to Nx\vee Ny$.
\end{enumerate}
Item (3) is equivalent in classical logic to item (4):
\begin{enumerate}
\setcounter{enumi}{3}
\item $(a\to Nx)\vee (a\to Ny)$.
\end{enumerate}
The representation of the attacks of Figure \ref{540-FS3} is by items (1) and (2) as well as item (4.1)
\begin{enumerate}
\item [4.1.] $a\to Nx$
\end{enumerate}
while the representation of the attacks in Figure \ref{540-FS4} is by items (1), (2) and (4.2).
\begin{enumerate}
\item [4.2.] $a\to Ny$.
\end{enumerate}

We recall that we said that in {\bf CN} the attack relations of the given source network are brought forward (using $N$) from the meta-level in the object level. So the meaning of the equivalence in classical logic of
\[
{\rm (Equiv):}   ~~~ (a\to Nx\vee Ny)\leftrightarrow ((a\to Nx)\vee (a\to Ny))
\]
is the meta-level statement (MD):
\begin{enumerate}
\item [(MD):]  $E$ is a complete extension of the network of Figure \ref{540-FS2} iff $E$ is a complete extension of at least one of the networks of Figures \ref{540-FS3} or \ref{540-FS4}.
\end{enumerate}

We make two critical comments.
\begin{itemlist}{CCCCC}
\item [(CC1)]  (MD) is a meta-level property. We can ask the following question (Q) for example:

(Q) Given traditional networks $(S, R_j), i=1\comma k$, then under what conditions do there exist a disjunctive network $(S, \BBR)$ such that (MD) holds?

(CC2) If our logic {\bf CN} were not classical logic, but say intuitionistic logic (as will be investigated in Part 2 of this paper \cite{540-24}), then the equivalence (Equiv) will not hold and the results would be different.
\end{itemlist}
\end{example}

We are now ready for a formal definition of disjunctive attacks. We use Caminada--Gabbay labelling.
\begin{definition}\label{543-D12}
{\ }
\begin{enumerate}
\item A disjunctive argumentation network has the form $(S, \rho)$, where $S$ is a non-empty set and $\rho \subseteq S\times (2^S-\varnothing)$.    See Figure \ref{543-FDS1}.

\begin{figure}
\centering
\input{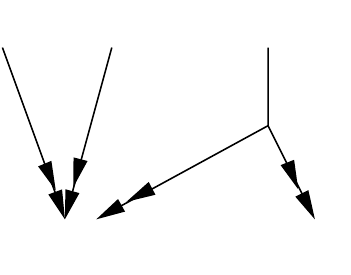_t}
\caption{}\label{543-FDS1}
\end{figure}
\item Let $x\in S$. Let $y_1\comma y_k$ be all the elements of $S$ which attack $x$ directly.  Let $z_1\comma z_m$ all be elements of $S$ which disjunctively attack sets which contain $x$, namely $z_i\rho\{x, u^i_1\comma u^i_{r(i)}\}$.  We say $z_i$ indirectly attacks $x$.

\item Let $\lambda: S\mapsto\{\mbox{in, out, und}\}$.  We say that $\lambda$ is a legitimate Caminada--Gabbay labelling for $(S, \rho)$ iff the following conditions hold:

\item [(D1)] $\lambda (x)=$ in if $\lambda(y_i)=$ out, $i=1\comma k$ and for all $1\leq j \leq m$, if $\lambda (z_j)=$ in then for some $k, \lambda (u^j_k)=$ out.
\item [(D2)] $\lambda (x)=$ out, if for some $1\leq i \leq k, \lambda (y_i)=$ in, or for some $j, (\lambda (z_j)=$ in and for all $1 \leq k \leq r(j)$ we have $\lambda (u^j_k)\neq$ out.
 \item [(D3)] $\lambda (x)=$ und iff all attacks on $x$ direct or indirect are either out or undecided, and where at least one attack is undecided, where the following define the meaning of the terms we just  used.
 \begin{itemize}
 \item A direct attack on $x$ by $y$ is out if $\lambda (y)=$ out. It is undecided if $\lambda (y)=$ und.
 \item An indirect attack of $z$ on $x$ using the disjunctive attacked set $\{x, u_1\comma u_r\}$ is out if either $\lambda(z)$ is out or for some $j, \lambda (u_j)$ is out.
The attack is undecided if $ \lambda(z)$
 and $\lambda (u_j)$ are all different from out, with at least one of them is und.
 \end{itemize}
\end{enumerate}
\end{definition}

\begin{remark}\label{543-RRH1}
Using higher level attacks and $\top$, we can implement disjunctive attacks using conjunctive attacks. See Figure \ref{543-FRR2}.  This figure implements the disjunctive attack $a \rho \{b,c\}$ .   We have that $\{b,c\}\tO \top)$ and $\top\tO(\{b,c\}\tO \top)$ and $a\tO(\top\to (\{b,c\}\tO\top))$

\begin{figure}
\centering
\input{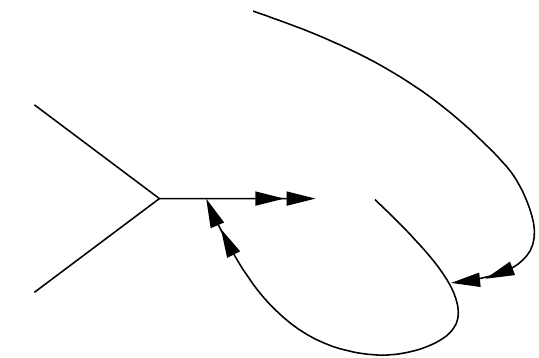_t}
\caption{}\label{543-FRR2}
\end{figure}

\end{remark}

\section{Comparison with literature}\label{comparison}
\subsection{Comparison with the paper of  S.Villata, G. Boella and L .van der Torre: Attack Semantics for Abstract Argumentation}
We compare with paper \cite{540-25}.  The paper is different from ours, but since it has a similar name, we should address it.

Our paper translates argumentation into classical logic with a symbol ``$Na$'', meaning ``$a$ is under attack''.  Paper \cite{540-25} regards the attack arrows as objects and discusses semantics for them. (In our paper ``$a\tO b$'' is a wff ``$a\to Nb$'').  Our set up is completely different. Furthermore, the idea of regarding the attack arrow as an object and labelling it already appears in \cite[Figure 13]{540-2} and in \cite{540-8}.  However, we do  want to make a point in this paper using the machinery of \cite{540-25} and so let us quote some passages from them and then make our point.

\begin{quote}
{\bf Begin quote 1}:

{\bf Example 2} Consider $AF = \langle A,\to\rangle$ with $A =\{a; b; c; d; e\}$
and $\{a\to b, b\to a, b\to c,c\to d, d\to e, e\to c\}$ visualized
in Figure 2. The complete extensions are $\CE_{\rm compl} (AF;A) =
\{\emptyset, \{a\}, \{b;d\}\}$. $\emptyset$  is the unique grounded extension, $\{a\}$ and
$\{b; d\}$ are the preferred extensions, and $\{b; d\}$ is the stable
extension. In the complete extension $\emptyset$ the arguments are rejected,
because they are not defended. In the extension $\{b; d\}$
the other arguments are rejected, because they are attacked
by an accepted counterargument.

\begin{centering}
\input{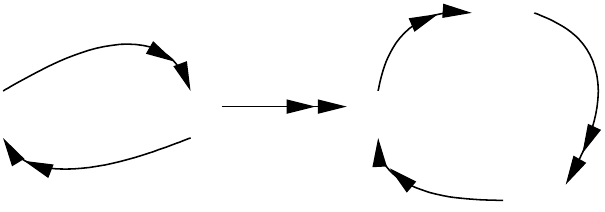_t}

{\bf Figure 2}: The argumentation framework of Example 2.
\end{centering}

\medskip

{\bf Begin quote 2}:

\newcommand{\hookr}{\hookrightarrow}
{\bf Example 3 (Continued from Example 2 in quote 1 above)}:
The grounded extension $\emptyset$ has only successful attacks, the preferred extension $\{a\}$ has $\{a \hookrightarrow b, c\hookr d, d\hookr e, e\hookr c\}$ as successful attacks, and the stable extension $\{b,d\}$ has $\{b\hookr a, b\hookr c, d\hookr e\}$.

\medskip
{\bf Begin quote 3}:

We define attack labeling analogous to argument labeling.
An attack is {\em in} when its attacking argument is {\em in}, an attack is
{\em undecided} when its attacking argument is {\em undecided}, and an
argument is {\em out} when its attacking argument is {\em out}. An attack
is successful when it is {\em in} or {\em undecided}, whereas an argument
is accepted when it is {\em in}. For example, if an argument is rejected,
but at least one of its attacks is successful, then the
argument is undecided.

{\bf Example 5 (Continued from Example 3 in quote 2 above).} The grounded
extension $\emptyset$ has only undecided attacks, the preferred extension
$\{a\}$ has in attack $a\hookr b$  and undecided attacks
$c\hookr d, d\hookr e, e\hookr c$,  and the stable extension $\{b,d\}$has in
attacks $b\hookr a, b\hookr c, d\hookr e$  and no undecided attacks.

{\bf End quotes.}
\end{quote}

Let us implement the attack network of Example 2 in quote 1 above and of its Figure 2 in our {\bf CN} logic using the idea of \cite[Figure 13]{540-2}.  Note that this idea is NOT what \cite{540-25} does but we start with our own approach for comparison.  We add for each arrow $x\tO y$ a new node $z_{x,y}$ and translate the attack $x\tO y$ into the conjunctive attack $(x, z_{x,y})\tO y$. We thus get Figure \ref{540-FQ6}.  This figure has conjunctive attacks and can be dealt with within {\bf CN}.

\begin{figure}
\centering
\input{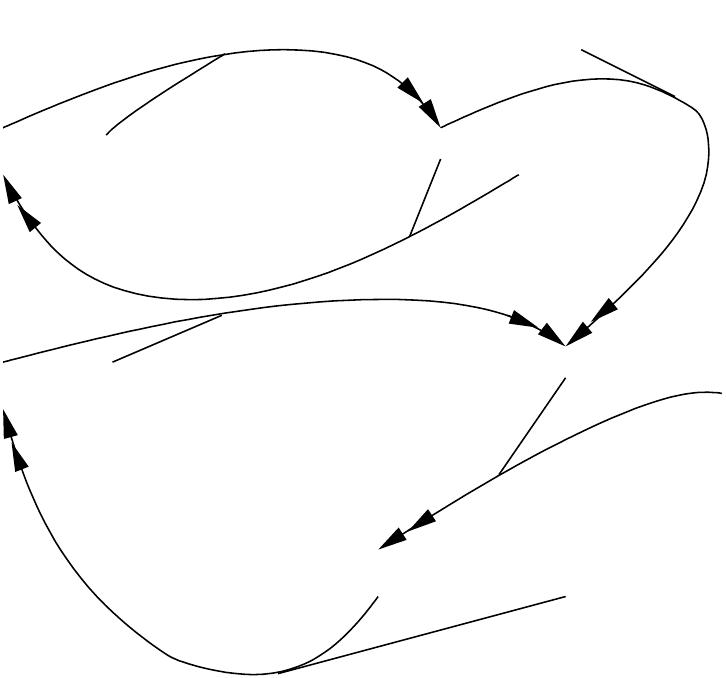_t}
\caption{}\label{540-FQ6}
\end{figure}

Note that all the attack arrows, $z_{a,b}, z_{b,a}, z_{b,c}, z_{c,d}, z_{d,e}$ and $z_{e,c}$ are all in, because they are not attacked.  This is as it should be according to the Dung traditional approach.

However, as we said, the above is not the attack semantics of \cite{540-25}. According to \cite{540-25} an attack $x\tO y$ is out if $x$ is out. So the attacks in Figure \ref{540-FQ6} are not always in, but depend on the extensions chosen, as discussed in \cite{540-25} and quoted above in Example 5 of quote 3.  So we need a new figure to implement the attack semantics of \cite{540-25}.

To achieve that we need to add that $\neg x$ attacks $z_{x,y}$.  This is easy to write in {\bf CN}:  $\neg x \to Nz_{x,y}$.  No new figure is really needed. However we can, if we insist, use Figure \ref{540-FQ7}.  It is a quite complicated higher level figure.
\begin{figure}
\centering
\input{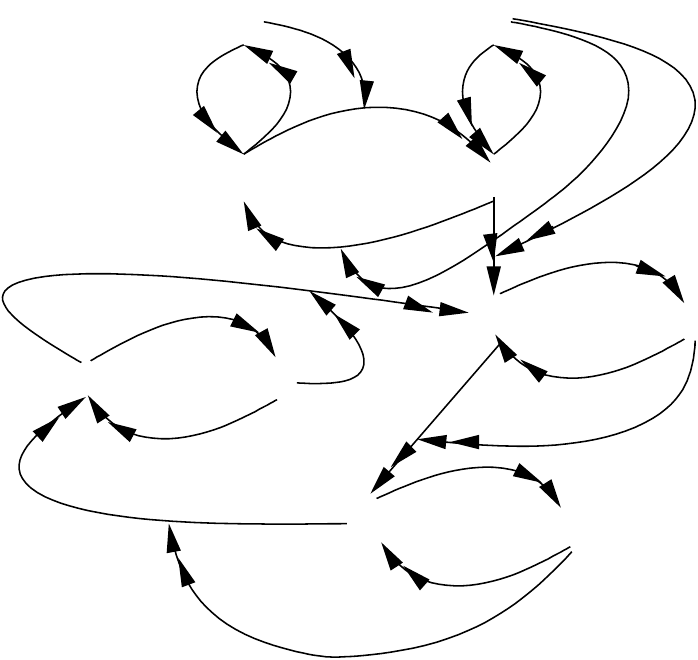_t}
\caption{}\label{540-FQ7}
\end{figure}

If we do not want to use higher level attacks, we can use Figure \ref{540-FQ8}.  The figures tend to be complicated because we need to add the nodes $\neg x$ on $x\tO y$ (or on $z_{x,y}$).  In {\bf CN} we just write $\neg x \to Nz_{x,y}$, which is much simpler and {\em does not add any nodes}!

\begin{figure}
\centering
\input{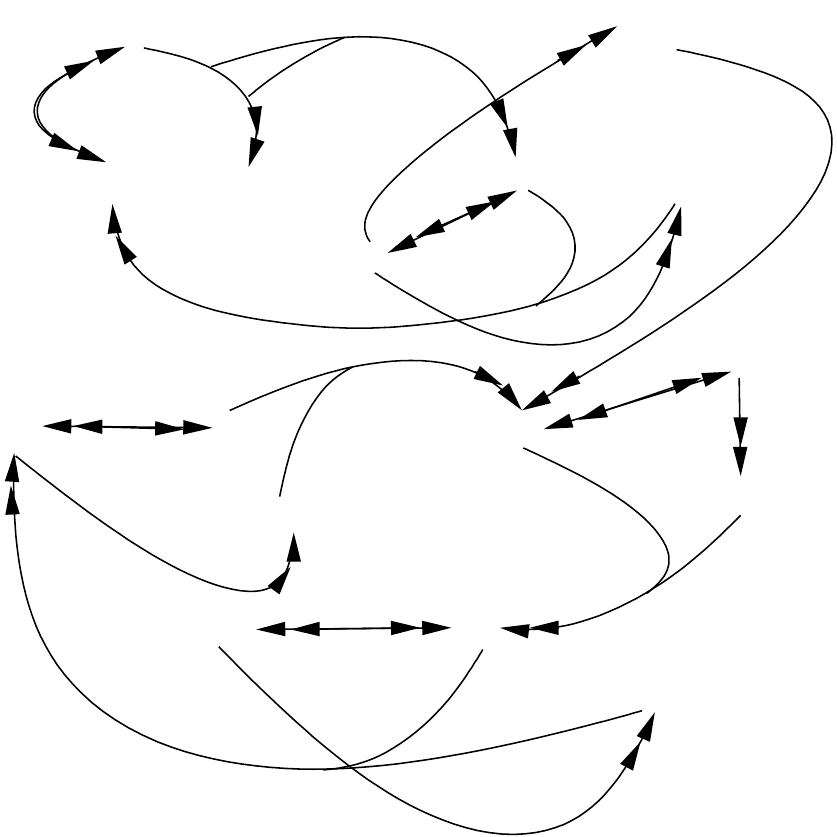_t}
\caption{}\label{540-FQ8}
\end{figure}

Independent of the problem of representing the attack semantics of \cite{540-25} and the idea that {\em  if $x$ is out then $x\tO y$ is also out}, is compatible with another idea of ours, from \cite[Section 4.3]{540-3}.  In \cite{540-3}, we regard $x\tO y$ as a channel, carrying an attack signal, sent by $x$ towards $y$.  So if $x$ is out, no signal is sent but the channel is alive. We can say that according to \cite{540-25}, if $x$ is out, the channel itself is out.

Let us now say more about our approach of \cite[Section 4.3]{540-3}.  We want to look at the attack $a\tO b$ as a signal from $a$ to $b$.  As such it can be considered as an object (a wave front). It therefore can split or it can interfere/join with another signal. This idea was presented in \cite[Section 4.3]{540-3} and will be fully developed in \cite{540-23}.  Figure \ref{540-FS5} illustrates this idea.  Compare with Figure \ref{540-FS2}.

In Figure \ref{540-FS5} the attack emanates from $a$ and disjunctively splits into two attacks one that attacks both $x_1$ and $x_2$ and one that attacks both $y_1$ and $y_2$.  Success, should $a$ be ``in'', is that either both $x_1$ and $x_2$ are out or  both $y_1$ and $y_2$ are out.

\begin{figure}
\centering
\input{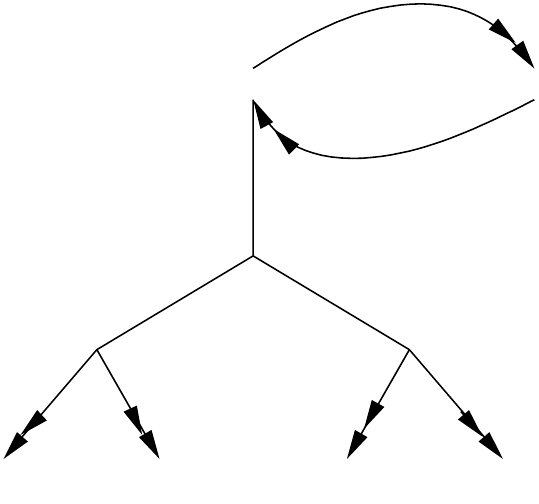_t}
\caption{}\label{540-FS5}
\end{figure}

The attacks of Figure \ref{540-FS5} can be described in {\bf CN} by the clauses:
\begin{enumerate}
\item $a\to Nb$
\item $b\to Na$
\item $a\to N(x_1\vee x_2)\bigvee N(y_1\vee y_2)$

\end{enumerate}

\subsection{Comparison with Gabbay's paper: Modal  Provability  Foundations for Argumentation Networks}
In 2009, Gabbay \cite{540-31} asked a simple question:
\begin{itemlist}{(2009Q)}
\item [(Q2009)]  Given an argumentation network $\CA =(S, R)$, what is its logical content $\Gamma_{\CA}$ in terms of some reasonable logic \BL?
\end{itemlist}
Gabbay answered this question using a version of a modal logic of provability called {\bf LN1}. The similarity with our paper is clear.  We associate a modal logic {\bf CNN} and a theory $\Gamma_{\CA}$ with $\CA$. The view of \cite{540-31} is that $\Gamma_{\CA}$ is the theory satisfying more or less (for accuracy see \cite{540-31}) the equivalence below:
\[
\Gamma_{\CA}\leftrightarrow \square (\bigwedge_x (x\leftrightarrow \bigwedge_i \lozenge (\Gamma_{\CA} \wedge\neg y_i))
\]
where $y_i$ are all the attackers of $x$ in $\CA$.

In words: the logical content $\Gamma$ says that  $x$ is in iff it is possible, from the point of view of  $\Gamma$,  to have all the attackers of $x$ out/false. This is a fixed point equation  for $\Gamma$.

The above is the idea of \cite{540-31}. Once \cite{540-31} develops the semantics for the logic, we end up with a three point linear modal logic $(t_1, t_2, t_3)$, with $t_1$ the actual world and $t_j$ accessible to $t_{j-1}, j =3, 2$.  In other words we have
$t_1 < t_2 < t_3$, where ``$<$" is the modal accessibility relation.

The labelling obtained from such a model is the following, where the triple $(v_1 , v_2 , v_3)$ indicates the respective values in worlds $(t_1, t_2, t_3)$:
\[\begin{array}{l}
\mbox{in }= (\top, \top, \top)\\
\mbox{out } = (\bot, \bot, \bot)\\
\mbox{und }= (\top, \bot, \top)\\
\mbox{und* } =(\top, \bot, \bot)
\end{array}
\]

As you can see, \cite{540-31} has a completely different point of view, and furthermore, the point of view of \cite{540-31} can get two types of undecided!  Comparing with our current paper we have a classical point of view, and have a two worlds linear modal logic, where the assignment is restricted like in intuitionistic logic, giving rise to the following labelling with only one type of undecided:
\[\begin{array}{l}
\mbox{in } =(\top, \top)\\
\mbox{out } =(\bot, \bot)\\
\mbox{und }= (\bot, \top)
\end{array}\]

To obtain a better comparison, let us organise the worlds $t_1, t_2, t_3$ in different accessibility ordering, say $<^*$:
\[
t_2 <^* t_3 <^* t_1.
\]
The four possible values can now be seen as in Figure \ref{540-FMay17-1}

\begin{figure}
\centering
\input{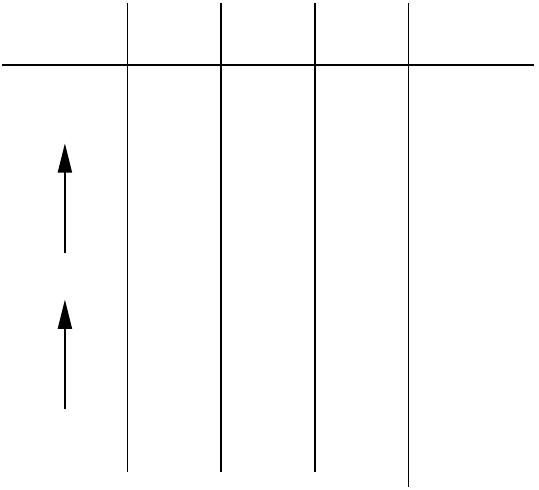_t}
\caption{}\label{540-FMay17-1}
\end{figure}

For the ordering $t_2 <^* t_3 <^* t_1$, the assignments are clearly intuitionistic. True continues to be true in this ordering.

Consider now the loop of Figure \ref{540-FMay17-2}.

\begin{figure}
\centering
\input{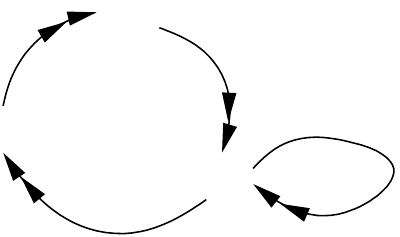_t}
\caption{}\label{540-FMay17-2}
\end{figure}

In this loop $b$ attacks itself , so its degree of undecided  must be  possibly different from  the degrees of the other points in the loop and must  be in the middle,  equally away from the value ``in" and from the value ``out". Indeed paper \cite{540-31} can distinguish between the node $b$ being undecided and the other nodes  . There is a problem, however, in this case.  There are only two values of undecided , und and und*, and so there is no middle, so what value do we give $b$?  Paper \cite{540-31} calculates the values according to its fixed point semantics rationale and  gives $b$  the value und, while the value of $a$ is und* and the value of $c$ is also und.  Thus according to \cite{540-31}, we get:
\[
a =(\bot,\bot,\top), b=c=(\bot,\top,\top).
\]

We, however, would like to give $b$ the other value of undecided, namely $b = (\bot, \bot, \top)$, and so we have to give  $c= (\bot, \top, \top)$ and have $a=b$.We can offer the following rationale for these values:
\begin{enumerate}
\item $x$ being undecided means that $x$ can go either way, to be in or be out, but is neither, i.e.\ $x\neq (\top,\top,\top), x\neq (\bot,\bot,\bot)$.
\item Different vectors for undecided relate to how likely $x$ is to be in or out.  So when comparing the sequences $(\bot, \top,\top)$ with $(\bot,\bot,\top)$, the less ``$\top$'' we have in the sequence, the nearer  ``out'' $=(\bot, \bot, \bot)$ is the value of the sequence.
\item In Figure \ref{540-FMay17-2}, we need to choose the value for $b$. Paper \cite{540-31} adopted a generous credulous  view for letting an argument $x$ to be in, namely that $x$ is in if it is possible for all its attackers $y$  to be out. Our view in this paper is more skeptical and strict, it  is that $x$ is in if all attackers y are strongly out ($Ny$ holds).  So if  paper \cite{540-31} calculated   that $b= (\bot, \top ,\top)$, and our view is different , then we might consider the other undecided value, namely  $b= (\bot,\bot,\top)$. This option  agrees  with the equational approach, (see Remark \ref{540-RMay17-3} below)  and  so we choose (contrary to \cite{540-31}), to take the value of $b$ to  be $(\bot,\bot,\top)$.
\item $c$ is attacked by $b$, which is  nearer $b=$ out, so $c$ must be still undecided, but nearer the value in.  So we expect $c$ to have the value $(\bot,\top,\top)$
\item From (4) we expect $a$ to be still undecided but nearer the value out and so our only option is to have $a=b=(\bot,\bot,\top)$.
\end{enumerate}

The comparison and discussion of the rationale above suggests that we can look at logics like {\bf CNN}$k$, where $k2$ defined by chains of worlds of the form
\[
w_1 <^* w_2 <^* \comma <^* w_k.
\]
The undecided values can be of the form
\[
\mbox{und}^*_j =(\bot\comma \bot, \top\comma \top)
\]
where $1 < j \leq k$ and the value is $\bot$ until position $k-1$ and $\top$ afterwards.

We reserve the study of {\bf CNN}$k$ for Part 2 of this paper.

\begin{remark}\label{540-RMay17-3}
Note that the equational approach of \cite{540-39} also supports our analysis and choice of value for $b$.  In this approach we solve the following equations of the nework of Figure \ref{540-FMay17-2}.
\begin{enumerate}
\item $b =(1-b)(1-a)$
\item $c=(1-b)$
\item $a=(1-c)$
\end{enumerate}
From (2) and (3) we get
\begin{enumerate}
\setcounter{enumi}{3}
\item $a=b$
\end{enumerate}
Substituting in (1) we get
\begin{enumerate}
\setcounter{enumi}{4}
\item $a=(1-a)^2$
\end{enumerate}
which solves to $a=b=0.38$ and $c=0.62$.

This is in agreement with our analysis provided we understand  undecided as numerical value strictly between 0 and 1 and nearer out  to mean a numerical value nearer to 0.

We can also view the equational approach as a {\bf CNN}$\infty$ approach, where the possible worlds intuitionistic model is taken to be $[0,1]$, with 0 the actual world and the value $x=k$ for $x\in S$ and $k\in [0,1]$ means
\begin{itemize}
\item $y\vDash x$ iff $x \leq k$
\end{itemize}

Again, this connection will be investigated in Part 3 of this paper.
\end{remark}

\subsection{Comparison with David Pearce logic}
We compare our logic {\bf CNN} with Pearce's logic of \cite{540-25}.  Pearce introduced his logic in 1995 in the context of studying Answer Set Programming (ASP). Pearce 's logic is very similar to our logic and so we need to offer a comparison. Furthermore, Pearce did apply his logic to argumentation.\footnote{
The following is a private communication from David Pearce:\\
``Equilibrium logic was designed to capture stable reasoning and, not surprisingly, you can apply it to capture stable extensions in argumentation theory. Together with other colleagues we were able to show how to capture stable extensions in assumption-based argumentation. However, including strong negation would be a new topic."}
  In Pearce's own words \cite{540-30}, equilibrium logic $N_5$ provides
\begin{enumerate}
\item a general methodology for building nonmonotonic logics;
\item a logical and mathematical foundation for ASP-type systems, enabling one to prove useful metatheroetic properties;
\item further means of comparing ASP with other approaches to nonmonotonic reasoning.
\end{enumerate}

The semantics for Pearce's logic uses the same two possible worlds linear model which we use, namely an actual world, $t$, a possible world $s$, with the restriction on atoms $q$ that if $t\vDash q$ then $s\vDash q$. The difference between us and Pearce is in what we actually do with this semantics.

In order to be able to compare the two logics, we need to agree on notation.  Our approach and notation so far can be summarised as follows:

\paragraph{Comments Group 1. }  The model has two worlds, $t$ and $s$.  $t$ is the acutral world and $s$ is possible relative to $t$. For atomic $q$ we have $t\vDash q$ implies $s\vDash q$.

\begin{enumerate}
\item Our view is that this model is a modal semantics model with two possible worlds linearly ordered whose atomic assignments satisfy a restriction.
\item Pearce's view is that this model is an intuitionistic Kripke model with two worlds (G\"{o}del's logic with two worlds).  This semantics is well known and can be axiomatised. See \cite{540-30}.
\end{enumerate}

\paragraph{Comments Group 2.} We use the connectives $\neg$ (classical negation), $\to$ (classical implication) and modality $N$.  We also use $\wedge$ and $\vee$.

The semantical conditions are
\begin{enumerate}
\item $t\vDash \neg A$ iff $t\not\vDash A$
\item $t\vDash A\to B$ iff ($t\vDash A$ implies $t\vDash B$)
\item Same as above for $s$
\item $t\vDash NA$ iff $s\vDash \neg A$
\item $s\vDash NA$ iff $t\vDash \neg A$
\item $x\vDash A\vee B$ iff $x\vDash A$ or $x\vDash B, x\in \{t,s\}$
\item $x\vDash A\wedge B$ iff $x\vDash A$ and $x\vDash B, x\in \{t, s\}$
\end{enumerate}

\paragraph{Comments Group 3.} Pearce uses intiutionistic connectives and Nelson strong negation. He calls his system $N_5$, the logic of here and there.  To describe Pearce's system $N_5$, let us use the notation below to distinguish the intuitionistic connectives from the classical ones and to distinguish $N_5$ from {\bf CNN}.  Let us use the following:
\begin{enumerate}
\item ``$\Nneg$'' for intuitionistic negation.  Its semantic satisfaction condition is
\begin{itemize}
\item $t\vDash \Nneg A$ iff $t\not\vDash A$ and $s\not\vDash A$
\item $s\vDash \Nneg A$ iff $s\not\vDash A$
\end{itemize}
(Note that for atoms we have $t\vDash A$ implies $s\vDash A$!)
\item ``$\To$'' for intuitionistic implication with the semantic conditions
\begin{itemize}
\item $t\vDash A\To B$ iff ($t\vDash A$ implies $s\vDash B$) and ($s\vDash A$ implies $s\vDash B$)
\item $s\vDash A\To B$ iff ($s\vDash A$ implies $s\vDash B$).
\end{itemize}
\item Use $\wedge$ and $\vee$ with the same semantic conditions as Comment 2.6 and 2.7 respectively.
\item Note that for wffs with $\wedge, \vee, \Nneg$ and $\To$ we have
\begin{itemize}
\item $\vDash A$ implies $s\vDash A$
\end{itemize}
\item Pearce adds strong negation ``$\sim$'' to the system $N_5$ in a similar way to the way we added \BN\ in Section 1.  With each atom $q$, Pearce adds another atom $\sim q$ and adds the axiom $\sim q \To \Nneg q$.

Thus $N_5$ is the intuitionistic theory $\{\sim q \to \Nneg q\}$ based on the atoms $\{q, \sim q|q \mbox{ atomic}\}$.  Note that for us in {\bf CN}, we added $Nq$ with the axiom $Nq\to \neg q$ and we have $t\vDash Nq$ iff $s\vDash \neg q$. This implies that if $\neg q \wedge Nq$ holds at $t$ then $\Nneg q$ holds at $t$.

\end{enumerate}
Just like we did for $N$ in allowing $N$ to apply to any wff and pushing $N$ through to the atoms, Pearce does a similar thing for $\sim$.  $\sim$ satisfies the following rules both at $t$ and at $s$:
\begin{itemize}
\item $\sim (A\wedge B)$ iff $\sim A\vee\sim B$
\item $\sim (A\vee B)$ iff $\sim A\wedge\sim B$
\item $\sim A\To B)$ iff $A\wedge\sim B$
\item $\sim (\Nneg A)$ iff $A$
\item $\sim\sim A$ iff $A$.
\end{itemize}
These rules  for ``$\sim$'' allow us to push ``$\sim$'' right in front of the atoms.

\paragraph{Comments Group 4.} Comparison so far is as follows:
\begin{enumerate}
\item Both ``$\sim$'' and ``$N$'' associate with any atom $q$ the atom $\sim q$ or $Nq$ respectively.  Both $\sim $ and $N$ can be pushed down to the atoms syntactically.  $N$, however, directly changes the world it is evaluated in. It is a true explicit modality. In comparison, $\sim$ remains in the same world. It basically adds a strong atom $\sim q$ to any atom $q$, but $\sim q$ is evaluated semantically in the same world.   It can push the
evaluation to another world only through the axioms  $\sim \Nneg q \To q$ and $\sim q\To \Nneg q$ and this is a roundabout way of doing it.

Our logic {\bf CNN} also added with each atom $q$ another atom $Nq$, but $Nq$ connects with an accessible world where $q$ must be false.
\item Pearce moves from world $t$ to world $s$ using $\Nneg$ and $\To$.  We move from $t$ to $s$ using $N$.  We can also move from $s$ to $t$ using $N$.  Pearce has no direct connective which can move the evaluation from world $s$ to world $t$.
\end{enumerate}

\paragraph{Comments Group 5.}  To further compare {\bf CNN} and $N_5$, let us define the intuitionisitc connectives in our logic {\bf CNN}.
\begin{enumerate}
\item We let
\[\begin{array}{l}
\Nneg X =\mbox{def.} \neg X\wedge NX\\
 (X\To Y)=\mbox{def } (X\to Y)\wedge N(X\wedge\neg Y \wedge Ny).
 \end{array}
 \]

\item Let $A$ and $B$ be two wffs built up from the atoms and the above connectives $\Nneg$ and $\To$ and $\wedge$ and $\vee$, as defined in Comment 5.1 above.  In other words, we have a sublanguage {\bf INN} of {\bf CNN} with wffs defined as follows:
\begin{itemize}
\item atoms $q$ are in {\bf INN}
\item If $A$ and $B$ are in {\bf INN} then so are $A\wedge B$ and $A\vee B$
\item if $A$ and $B$ are in {\bf INN}, so are \\
$(\neg A\wedge NA)$, this is $\Nneg A$\\ and\\ $(A\to B)\wedge N(A\wedge\neg B\wedge NB)$, this is $A\To B$.
\end{itemize}
\item We now check and see that the truth conditions of intuitionistic semantics hold for these translations (i.e.\ the condition in Comment 3.4 holds).
\item [(3.1)] For atomic $q$ the condition $t\vDash q$ implies $s\vDash q$ holds.
\item [(3.2)] Assume by induction that the condition in Comment 3.4  ($t\vDash X$ implies $s\vDash X$) holds for $X=A$ and $X=B$. We show that it holds for $A\wedge B$ and $A\vee B$. This is immediate.
\item [(3.3)] Assume that the condition in Comment 3.4 holds for $A$ and we show that the condition in Comment 3.4 holds for $\Nneg A$.  We have
\begin{itemize}
\item  $t\vDash \Nneg A$ iff (by definition) $t\vDash \neg A\wedge NA$ iff $t\vDash \neg A$ and $s\vDash \neg A$ iff $t\not\vDash A$ and $s\not\vDash A$. \\ We also have\\ $s\not\vDash A$ iff $s\not\vDash A$ and $
t\not\vDash A$\\ (because our inductive hypothesis was that Comment 3.4 holds, i.e.\ $t\vDash A$ implies $s\vDash A$)\\ iff ${s}\vDash \neg A$ and $s\vDash NA$ iff $s\vDash \neg A\wedge NA$ iff (by definition) $s\vDash \Nneg A$.
\item We show now that if $t\vDash \Nneg A$ then $s\vDash \Nneg A$.  We have: $t\vDash \Nneg A$ iff $t\vDash \neg A\wedge NA$. This implies $s\not\vDash A$ and we have seen before that this is equivalent to $s\Vdash \Nneg A$.
\end{itemize}
\item [(3.4)] We now check the case of $A\To B$. Assume by induction that $t \vDash X$ implies $s\vDash X$ holds for $X=A$ and for $X=B$.  We show that the semantic condition of Comment 3.2 holds for $A\To B$ and that also if $t\vDash A\To B$ holds then $s\vDash A\To B$ also holds.  We have
\begin{itemize}
\item $t \vDash A\To B$ iff (by definition) $t\vDash  (A\to B)\wedge N(A\wedge \neg B\wedge NB)$ iff $t\vDash A\to B$ and $s\vDash (\neg (A\wedge\neg B)\vee\neg NB$ iff $(t\vDash A\to B$ and $s\vDash A\to B$) or $(t\vDash A\to B\mbox{ and } s\vDash \neg NB)$.  \\
But $s\vDash \neg NB$ iff $t\vDash B$. So we have that:   $t\vDash A\to B$ iff $t\vDash (A\to B)\vee B$.

Therefore we can continue

$t\vDash A\To B$ iff $(t\vDash A\to B$ and $s\vDash A\to B) \vee (t\vDash A\to B$ and $t\vDash B)$. But $t\vDash B$ implies $s\vDash B$ and we continue

iff $t\vDash (A\to B)\wedge s\vDash A\to B)$.  This is the correct condition for the case of $t$.

We check $s$.  $s\vDash A\To B$ iff by definition $s\vDash (A\to B)$ and $s\vDash N(A\wedge\neg B)$ iff $s\vDash (A\to B)$ and $t\vDash A\to B$.
\end{itemize}
Given the above, we now show that
\begin{itemize}
\item $s\vDash A\To B$ iff $s\vDash A$ then $s\vDash B$
\end{itemize}
If $s\vDash A\To B$ then by the above $s\vDash A\to B$ and so if $s\vDash A$ then $s\vDash B$.

If $s\vDash A$ implies $s\vDash B$ then $s\vDash A\to B$. So the only option for $s\not\vDash A\To B$ is that $s\not\vDash N(A\wedge\neg B\wedge NB)$ and so $t\vDash A\wedge\neg B\wedge NB$. But then this implies by the condition of Comment 3.4 for $A$ that $s\vDash A\wedge\neg B$, contradicting what we have just assumed that $s\vDash A\to B$.
\end{enumerate}
There remain to show $A\To B$ satisfies the condition of Comment 3.4.  This is obvious, however, from the semantic condition of satisfaction for ``$\To$''.
\begin{enumerate}
\item [(3.5)] We summarise the results of the current  item (Comment 5.3).
We have just shown that {\bf CNN}  contains the intuitionistic part of $N_5$ as the fragment {\bf INN} (we defined the intuitionistic negation and intuitionistic implication in our system and shown they behave correctly). The reader can verify that the law of excluded middle for this embedded intuitionistic fragment does not hold, so the fragment is not trivial, ( i.e.\ $q\vee  (\neg q \wedge  Nq)$ is not a theorem of {\bf CNN}, take $q= (\bot, \top)$).

We now ask, can we also define ``$\sim$" as well using $N$?

Let us check.
We need a formula of {\bf CNN}, say $\alpha(q)$, which will act as the connective $\sim$, namely
\[
\sim q = \mbox{ (by definition) }\alpha(q)
\]
This formula must satisfy
\begin{enumerate}
\item  For atomic $q,  \alpha(q) \To  \Nneg q$
\item  For any formula $A$ of {\bf INN}, $ \alpha( \Nneg A) \Leftrightarrow A$
\end{enumerate}
If the value of $q$ is taken to be  $( \bot, \top)$ or $(\top, \top) $, we get because of (a). and the fact that in this case  the value of $\Nneg q$  is $(\bot, \bot)$, that the value of $\alpha(q)$ must be $(\bot, \bot)$.
To summarise we must have  (c) and (d).  to hold
\begin{enumerate}
\setcounter{enumii}{2}
\item  $\alpha(\bot, \top) = ( \bot, \bot)$
\item $\alpha(\top , \top) = ( \bot, \bot)$
\end{enumerate}
but if we have this , how can (b) hold?
For both cases  where value of $q$ is taken to be  $( \bot, \top)$ or $(\top, \top) $, the value of  $\Nneg q$ is the same and so how can we have
\[
\alpha( \Nneg q) \Leftrightarrow q  ?
\]
The only way out is to say that rule (b) is a syntactical reduction, making $\sim$ not a real connective. O.K. this is acceptable but different from our logic where $N$ is a connective.
So (b) holds syntactically, this means that (a) has to hold by a restriction on the assignment of the formal syntactical atom ``$\sim q$".
The only way to it is to take ``$\sim q$" to be the same as  ``$\Nneg q$".
This is fine except that when we iterate to have
\[
\Nneg  \Nneg q  \To q.
\]
The first ``$\Nneg $" is not a connective but a syntactical rewrite operator and only the second $\Nneg$ is a connective  and the implication is valid. This is not a problem  because we will write it syntactically as
$\sim \Nneg q \To q$
and the ``$\sim$" will rewrite the following ``$\Nneg$" out , leaving only $q\To q$.

\end{enumerate}

\paragraph{Comments Group 6.} We now start with $N_5$ as our given intuitionistic system with $\{\Nneg, \To, \wedge, \vee, \sim\}$ and the semantics as described in \cite{540-30} and  in (Comment Group 3) above and check whether we can define {\bf CN} using the connectives of $N_5$.  We need to define $N$ and the classical connectives $\{\neg, \to, \wedge, \vee\}$.  We show that this is not possible. Consider the formula of {\bf CN}.
\begin{itemize}
\item $\neg (q\wedge Nq), q$ atomic.
\end{itemize}
This formula is always true at world $t$ but can be false at world $s$ (for $q=\top$ at $s$ and at $t$).

No intuitionisitc wff, even with ``$\sim$'' can be true at $t$ and false at $s$. Therefore $N$ and $\neg$ are not definable in Pearce's logic.

It may still be possible to translate argumentation networks into $N_5$ using its strong negation. We leave this study to part 2 of our paper.

\subsection{Comparison with four meta-level approach papers \cite{540-32,540-35,540-36,540-37} and \cite{540-45}}

To compare with papers of Dvorak {\em et al.}, Doutre {\em et al.}, and Grossi, we need some preliminary methodological discussion about interpretations and translations.  All the above papers translate argumentation networks into other systems and so to explain what they are doing and compare with our paper we need a framework  of reference, a matrix schema for comparison.

Consider the schematic Figure \ref{540-FMay19-1}.  This figure is a schema for possible embeddings, containing four systems.  We name the system and give examples to help the reader visualise the schema. We then discuss options for interpretations in general and then explain how the papers we are discussing fall into the schema of Figure \ref{540-FMay19-1}.  Once we understand how these papers fit into the schema, our comparison will be concluded.

\begin{figure}
\centering
\input{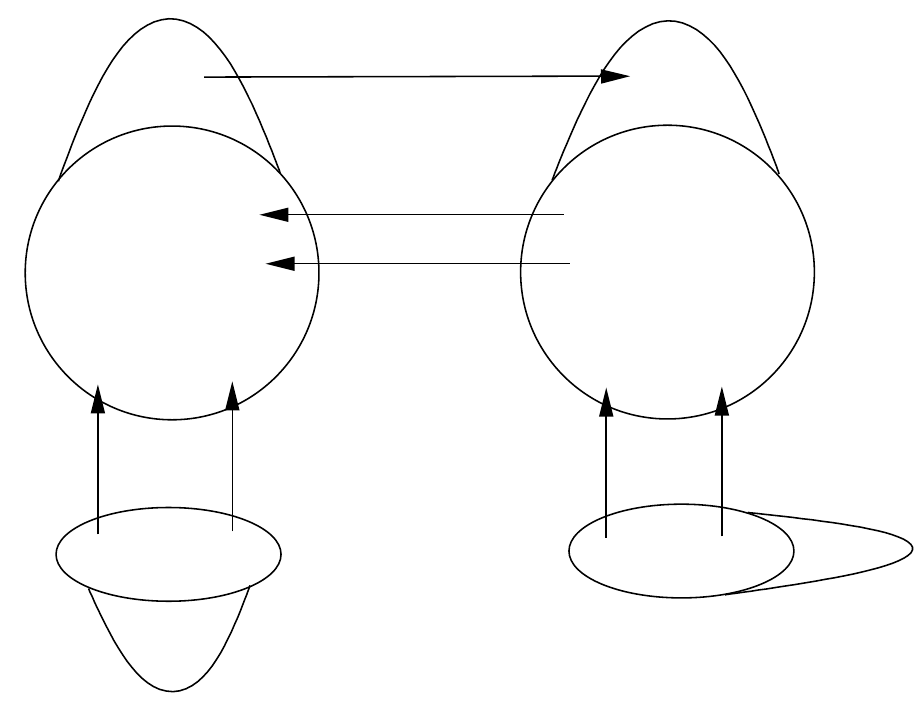_t}
\caption{}\label{540-FMay19-1}
\end{figure}

In Figure \ref{540-FMay19-1}, the top left circle is a system \BT. Think of \BT\ as classical logic and think of $x, e, y'$ as formulas of classical logic. Think of $\BT^+$ as an extension of  \BT, say if \BT\ is classical logic then $\BT^+$ extends  \BT\ with monadic second order quantification.

The system $\BS_1$ can be interpreted into \BT.  Say $x$ is mapped to $x'$ (the arrows indicate the mapping) and $y$ is mapped to $y'$.  Think of $\BS^+$ as an extension of \BS.  Say $\BS =$ intuitionistic propositional logic and $\BS^+$ is some extension of it.  The element $e$ in \BT\ is not a result of the mapping of \BS\ into \BT.  It has no source in \BS.  Similarly $\BS_2$ is mapped into \BT\ and $\BS^+_2$ extends $\BS_2$.  $\BT^+$ can be mapped into an extension $\BS^+_2$ of $\BS_2$.  If not all of $\BT^+$ is mapped, at least some of it, say $\alpha$, is mapped onto $\alpha'$.

Think of $\BS_2$ as modal logic, mapped into classical logic.  Think of $\alpha$ as a quantifier of second order, which is realised/mapped as extending modal logic (as a new modality $\square_\alpha$?, for example).

The systems \BE\ and $\BE^+$ are mapped into  modal logic. Think, for example, of \BE\ as a substructural logic or a default system or some normative system or as an argumentation system being interpreted in modal logic.

We now discuss possible properties of such embeddings.

\paragraph{Type 1: encapsulation of \BS\ in \BT.}
This type gives names in \BT\ to the elements of \BS\ and mirrors in \BT\ in all the movements of \BS.  It is like embedding \BS\ in \BT\ using G\"{o}del numbers.  Such encapsulation is used in translating systems into classical logic in order to use Theorem Proving machines to talk about/manipulate \BS.  None of the papers involved above are pure encapsulation.  Encapsulation is very general, \BS\ can be a recipe for making pizzas and \BT\ can just talk about it.  \BT's role in encapsulation is that of a Turing machine.

\paragraph{Type 2: Meta-level embedding of \BS\ into \BT.}
This type is most common. The language \BT\ describes in a meaningful way the system \BS.  \BT\ acts as a meta-level language describing \BS\ by internally mirroring \BS.  To understand how this works, think of a mathematical theory modelling some physical phenomenon, say in an equational model governing the motion of a planet around the sun.  The meta-theory \BT\ describing modelling \BS\ (e.g.\ argumentation) in this way is not supposed to provide a meaning (semantics) for what it describes. It is just supposed to provide a formal language of accurate description of \BS, allow one to describe variation of \BS\ and enable the use of  properties and tools available in \BT\ to investigate \BS\ (very common is to use \BT\ for complexity issues for \BS).  Of course, one tends to use \BT\ which is well known, general enough and well investigated and well endowed with tools.

\paragraph{To summarise Type 2.}
\begin{itemlist}{(PPP)}
\item [(P1)] $\BT$ describes \BS\ from the meta-level point of view but does not give \BS\ any meaning.
\item [(P2)] \BT\ can describe at the same time several different $\BS_1, \BS_2$ and so can serve as an environment for suggesting combinations of $\BS_1$ and $\BS_2$.  For example, if \BT\ is classical logic and $\BS_1$ is modal logic and $\BS_2$ is intuitionistic logic, then embedding $\BS_1$ and $\BS_2$ into \BT, may suggest how to formulate modal intuitionistic logics.
\item [(P3)] Any tool available for \BT\ can be used for \BS.  So for example, andy complexity studies for classical logic can be used for modal logic through the embedding of modal logic inside classical logic.
\item [(P4)] New research ideas can be imported from \BT\ to \BS.  Some may be interesting for \BS\ and some may be just mathematics hacks of no interest.
\end{itemlist}

The paper of Dvorak {\em et al.}, paper \cite{540-35} is an example of this. The formal mathematical language is second order monadic first order logic.  This can serve as  the modelling language for the majority, if not all,  of the varieties of argumentation networks. It is intended by the authors to be to argumentation like ALGOL is to algorithms. It is an exact mathematical logic language strong enough to express whatever you want to say about argumentation networks. To quote the authors of \cite{540-35} own words:

\begin{quote}
Begin quote.\\
We propose the formalism of monadic second order logic (MSO) as a unifying framework for representing and reasoning with various semantics of abstract argumentation. We express a wide range of semantics within the proposed framework, including the standard semantics due to Dung, semi-stable, stage, cf2, and resolution-based semantics.   We provide building blocks which make it easy and straightforward to express further semantics and reasoning tasks.  Our results show that MSO can serve as a {\em lingua franca} for abstract argumentation that directly yields to complexity results. In particular, we obtain that for argumentation frameworks with certain structural properties the main computational problems with respect to MSO-expressible semantics can all be solved in linear time. Furthermore, we provide a novel characterisation of resolution-based grounded semantics.

\begin{center}
.\\.\\.
\end{center}

Starting with the seminal work by Dung, the area of argumentation has evolved to one of the most active research branches within Artificial Intelligence.  Dung's abstract argumentation frameworks, where arguments are seen as abstract entities which are just investigated with respect to how they relate to each other, in terms of ``attacks'', are nowadays well understood and deferent semantics (i.e., the selection of sets of arguments which are jointly acceptable) have been proposed.  In fact, there seems to be no single ``one suits all'' semantics, but it turned out that studying a particular setting within various semantics and to compare the results is a central research issue within the field. Different semantics give rise to different computational problems, such as deciding whether an argument is acceptable with respect to the semantics under consideration, that require different approaches for solving these problems.

This broad range of semantics for abstract argumentation demands for a {\em unifying framework} for representing and reasoning with the various semantics. Such a unifying framework would allow us to see what the various semantics have in common, in what they differ, and ideally, it would offer generic methods for solving the computational problems that arise within the various semantics. Such a unifying framework should be general enough to accommodate most of the significant semantics, but simple enough to be decidable and computationally feasible.\\
End quote
\end{quote}

Properties (P1)--(P4) mentioned above hold for this case. See also the discussion in Section 5.1 above.  Our own interpretation is object level and gives the argumentation network a meaning in terms of strong negation $N$.  As we mention in Section 5.1, the meta-level interpretation has to translate ``$x$ attacks $y$'' as is, by providing a predicate letter ``$R$'' for attack and write $xRy$. The meta-level interpretation does not give a meaning to $R$. Our interpretation writes $x\to Ny$ for the attack of $x$ on $y$. So when $x$ and $y$ are instantiated by, e.g., wffs $x=\alpha$ and $y=\beta$, we get a meaning for the attack of $\alpha$ on $\beta$, namely the meaning our logic gives to $\alpha\to N\beta$.  The paper of Dvorak cannot write ``$\alpha R\beta$'', but even if it could do so, it would have to wait for us to say  to Dvorak {\em et al.} what we mean by ``$\alpha$ attacks $\beta$'' and then Dvorak {\em et al.} would try to say it formally in  monadic second order classical logic.

The other paper of Dvorak, paper \cite{540-36}, also provides a more propositional meta-language to describe more tightly higher level
argumentation network. Properties (P1)--(P4) above still apply.  The emphasis here is on (P3).

To quote the authors' own words:

\begin{quote}
Begin quote:\\
This paper reconsiders Modgil's Extended Argumentation Frameworks (EAFs) that extend Dung's abstract argumentation frameworks by attacks on attacks.  This allows to encode preferences directly in the framework and thus also to reason about the preferences themselves. As a first step to reduction-based approaches to implement EAFs, we give an alternative (but equivalent) characterisation of acceptance in EAFs. Then we use this characterisation to provide EAF encodings for answer set programming and propositional logic. Moreover, we address an open complexity question and the expressiveness of EAFs.\\
End quote.
\end{quote}

The authors of \cite{540-36} use propositional encoding of the argumentation network. This is still meta-level. To explain briefly what is the difference:  while predicate logic would formalise ``$x$ attacks $y$'' as ``$xRy$'', using a predicate $R$, the propositional encoding uses a propositional atom $r_{x,y}$.  You need a new atom for each pair $x,y$. So $r_{x,y}=\top$ means that $x$ does attack $y$ and $r_{x,y} =\bot$ means that  $x$ does not attack $y$.

So for example to formalise:
\begin{itemize}
\item (For all $x$), if ($x$ is ``in'' iff all its attackers are ``out'')
\end{itemize}
We can write in predicate logic:
\begin{itemize}
\item  $\forall x [Q_{\rm in} (x) \leftrightarrow \forall y (yRx \to Q_{\rm out} (y))]$
\end{itemize}
or in propositional logic
\begin{itemize}
\item  $\bigwedge_{x\in S} [x\leftrightarrow \bigwedge_{y \in S} ({r_{x,y}}\to \neg y)]$.
\end{itemize}
The advantage of the use of $r_{x,y}$ is that we can say directly that $x$ does not attack $y$, i.e.\ we can write  $\neg r_{x,y}$. Our logic \BC\BN\BN\ cannot do that. Although we can represent  in our logic  ``$x$ attacks $y$" as   ``$x \to Ny$" and use it to calculate extensions, we cannot represent ``$x$ does not attack $y$" as  ``$\neg (x\to  Ny)$", because $x$ may be out,   i.e.\ $\neg x$ is true, and so $x\to  Ny$ is true, even though $x$ does not attack $y$. We shall remedy this in Part 2 of this paper. Note also that using direct naming of attacks via $xRy$ or  $r_{x,y}$ we can also formalise higher level attacks of the form $z$ attacks $r_{x,y}$.

So to summarise, paper \cite{540-36} is also a meta-level interpretation geared towards supplying algorithms.

The paper of Doutre {\em et al.} \cite{540-32} is also meta-level, using propositional logic.  It is also geared towards algorithms.  To quote the authors:
\begin{quote}
Begin quote:\\
We provide a logical analysis of abstract argumentation frameworks and their dynamics. Following previous work, we express attack  relation and argument status by means of propositional variables and define acceptability criteria by formulas of propositional logic. We here study the dynamics of argumentation frameworks in terms of basic operations on these propositional variables, viz.\ change of their truth values. We describe these operations in a uniform way with in well known variant of propositional Dynamic Logic {\sf PDF}: the Dynamic Logic of Propositional Assignments, {\sf DL-PA}.  The atomic programs of {\sf DL-PA} are assignments  of propositional variables to truth values, and complex programs can be built by means of the connectives of sequential and nondeterministic composition and test. We start by showing that in {\sf DL-PA}, the construction of extensions can be performed by a {\sf DL-PS} program that is parameterized by the definition of acceptance.  We then mainly focus on how the acceptance of one or more arguments can be enforced and show that this can be achieved by changing the truth values of the propositional variables describing the attack relation in a minimal way.\\
End quote.
\end{quote}

The paper of Grossi \cite{540-37} is also meta-level, but it uses modal logic as the meta-level language.  The schematic situation is described in Figure \ref{540-FMay19-1}  by the embedding schema of [\BE\ and $\BE^+$] into [$\BS_2$ and $\BS^+_2$] which in turn are embedded into [\BT\ and $\BT^+$].  Here \BE\ is argumentation networks, $\BS_2$ is modal logic, $\BS^+_2$ is an expansion which contains extra connectives of modal logic to compensate for the  lack of quantifiers.  \BT\ is classical predicate logic. $\alpha$ is the extra ``quantifier'' connectives imported by Grossi into modal logic to  enable Grossi to interpret argumentation. The basic modal accessibility is taken by Grossi to be the inverse of the attack relation:
\begin{itemize}
\item $x$ attacks $y$ means $x$ is an accessible world to $y$.
\end{itemize}
The situation is best described by Grossi's own words below.  You can immediately see that Grossi focusses on the (P4)  aspects of the translation:
\begin{quote}
Begin quote:\\
The paper presents a study of abstract argumentation theory from the point of view of modal logic. The key thesis upon which the paper builds is that argumentation frameworks can be studied as Kripke frames. This simple observation allows us to import a number of techniques and results from modal logic to argumentation theory, and opens up new interesting avenues for further research. The paper gives a glimpse of the sort of techniques that can be imported, discussing complete calculi for argumentation, adequate model-checking and bisimulation games and sketches an agent for future research at the interface of modal logic and argumentation theory.\\
End quote.
\end{quote}

The paper of Besnard, Doutre and Herzig, {\em Encoding argument graphs in logic} is a meta-level paper discussing properties (specification) which any meta-level interpretation should satisfy. For example, the interpretations of papers \cite{540-32,540-35,540-36,540-37} should be checked to see if they satisfy the principles outlined in paper \cite{540-45}.  Paper \cite{540-45} is a meta-meta-level paper. We quote the authors' own description of their paper:

\begin{quote}
Begin quote.\\
Argument graphs are a common way to model argumentative reasoning. For reasoning or computational purposes, such graphs may have to be encoded in a given logic. This paper aims at providing a systematic approach for this encoding.  This approach relies upon a general, principle-based characterisation of argumentation semantics.

In order to provide a method to reason about argument graphs \cite{540-47}, Besnard and Doutre first proposed encodings of such graphs and semantics in propositional logic \cite{540-46}.  Further work by different authors following the same idea was published later \ldots  However, all these approaches wee devoted to specific cases in the sense that for each semantics, a dedicated encoding was proposed from scratch.  We aim here at a generalisation, by defining a {\em systematic} approach to encoding argument graphs (which are digraphs) and their semantics in a logic $\vdash$.  Said differently, our objective is to capture the extensions under a given semantics of an argument graph in a given logic (be it propositional logic or any other logic). We hence generalise the approach originally introduced in \cite{540-46} by parametrizing the encoding in various ways, including principles defining a given semantics.

We consider abstract arguments first, and then provide guidelines to extend the approach to structured arguments (made up of a support that infers a conclusion).\\
End quote.
\end{quote}

\subsection{Comparison with the paper \cite{540-34} of Arieli and Caminada}
To explain what Arieli and Caminada are doing and to evaluate and compare it with our paper, let us start with the meta-level point of view of Dvorak \cite{540-35} and the discussion in Section 5.1 above. We know that the Caminada labelling has three values \{in, out, und\}.  The meta-level approach which interprets argumentation in monadic second order classical predicate logic, would use variables $x, y, z\ldots$ for arguments and the binary predicate $R$ for the attack relation and three unary predicates $Q_1, Q_0$ and $Q_?$ for the values in, out and undecided respectively. See section 5.1.  The axioms of $\Delta (R, Q_0, Q_1, Q_?)$ say in predicate logic, among
 other things, that each $x$ gets exactly one value. We are now ready to lead, step by step, from the above to the Arieli and Caminada paper.

 Let us start from a given $(S, R)$.  For each $a\in S$, consider $Q_1(a), Q_0(a)$ and $Q_?(a)$ as atomic propositions of the classical propositional calculus.  Since these are all connected with the letter ``$a$'' we can change notation and write
 \begin{itemize}
\item $ a^+$ for $Q_1(a)$
 \item $a^-$ for $Q_0(a)$
 \item $a^?$ for $Q_?(a)$
 \end{itemize}
 Since we know that exactly one of the above can be true, we can forget about $a^?$ and use only the pair $(a^+,a^-)$.  We have:
 \begin{itemize}
 \item If $a$ is in, then we have $a^+=\top, a^-=\bot$.
 \item If $a$ is out we have $a^+=\bot, a^-=\top$.
\item  If $a=$ undecided, we have $a^+=\bot, a^-=\bot$.
 \end{itemize}
 We must add the restriction that we can never have both $a^+=\top$ and $a^-=\top$.

 We thus have turned the logic with $Q_1(x), Q_0(x), Q_?(x)$ into a three-valued logic with values $\Bt =(1,0), \Bf=(0,1)$ and $?=(0,0)$, where these values mean as follows:
 \begin{itemize}
 \item  $x=\Bt$ means $x$ is ``in'', or equivalently $x^+=\top$ and $x^-=\bot$ or equivalently $(x^+,x^-)=(\top,\bot)$ or equivalently $Q_1(x)\wedge\neg Q_0(x)=\top$.
 \item $x=\Bf$ means $x$ is ``out'', or equivalently $x^+=\bot$ and $x^-=\top$ or equivalently $(x^+,x^-)=(\bot,\top)$ or equivalently $\neg Q_1(x) \wedge Q_0 (x)=\top$.
 \item $x=?$ means $x$ is undecided or equivalently $x^+=x^-=\bot$ or equivalently $(x^+,x^-)=(\bot,\bot)$ or equivalently $\neg Q_1 (x)\wedge\neg Q_0(x)=\top$.
 \end{itemize}

 The value $x^+=x^-=\top$ is forbidden.

 We now have a propositional calculus with three values $\{\Bt,\Bf, ?\}$ and a reduction of each proposition $x$ into two propositions $x^+,x^-$, such that $x=(x^+,x^-)$ and the restriction on the assignments on $\{x^+,x^-|x\in S\}$ which says that $x^+=x^-=\top$ is forbidden.

 If we add to this version of classical propositional calculus quantifiers over propositions $\forall x, \exists x, x\in S$, we get the system of \cite{540-34}.

 We can also turn the pairs $(x^+,x^-)$ into a lattice calculus. Note that we have\\
  $x=(x^+,x^-)= \mbox{ (say) } (x_1, x_2)$\\
  $\neg x =(x^-,x^+)=(x_2, x_1)$.\\
  But also we can write similarly for $y$ and get:\\
  $x=Q_1(x) \wedge\neg Q_0 (x) = (x_1, x_2)$\\
  $y=Q_1(y)\wedge\neg Q_0(y) =(y_1,y_2)$.

  Therefore
  \[\begin{array}{rcl}
  x\wedge y &=& Q_1(x)\wedge Q_1(y)\wedge\neg Q_0(x)\wedge\neg Q_0(y)\\
  &=& Q_1(x) \wedge Q_1(y)\wedge\neg (Q_0(x)\vee Q_0(y))
  \end{array}
  \]
  We therefore get as a rule of the calculus
  \[
  (x_1,x_2)\wedge (y_1,y_2) =(x_1\wedge y_1, x_2\vee y_2).
  \]
  We get other rules in a similar fashion.\footnote{In Part 2  we shall compare with Gabbay's 1985 (published in 2012 as \cite{540-40}). }

  Now that we have reduced the idea of \cite{540-34} to the meta-level system of \cite{540-35}, we can read the quotation word by word from \cite{540-34} and understand it in a new light.  We have changed the notation slightly.    \cite{540-34} uses ``$\bot$'' for undecided, we use ``?'' so as not to confuse our readers.

  \begin{quote}
  Begin quote.\\
  \paragraph{Three-valued Semantics and Signed Formulas.}
  As indicated previously, our purpose in this paper is to provide a this, logic-based, perspective on argumentation frameworks, and to relate it to the two other points of view presented in the two previous subsections. [Gabbay note: these are 2.1: Extension based semantics and 2.2: labelling based semantics.]  In this section we define the framework for doing so, using {\em signed theories}.  Following \cite{540-41}, we introduce these theories in the context of three-valued semantics (see leo \cite{540-43}).

  Consider the truth values $t$ (`true'), $f$ (`false') and ? (`neither true nor false'). A natural ordering, reflecting differences in the `measure of truth' of else elements, is $f < ? < t$. The meet (minimum) $\wedge$, join (maximum) $\vee$ and the order reversing involution $\neg$, define by $\neg t =f, \neg f=t$ and $\neg ?=?$, are taken to be the basic operators on $\leq$ for defining eh conduction, disjunction, and the negation connectives (respectively) of Kleene's well-known three-valued logic (see \cite{540-44}).  Another operator which will be useful in the sequel is defined as follows: $a\supset b =t$ if $\in \{f,?\}$ and $a\supset b=b$ otherwise (see \cite{540-42} for some explanations why this operator is useful for defining an implication connective). The truth tales of these basic connectives are given below.

  \[
  \begin{array}{cccc}
  \begin{array}{c|ccc}
  \vee&t&f&?\\
  \hline\\
  t&t&t&t\\
  f&t&f&?\\
  ?&t&?&?
  \end{array}
  & \begin{array}{c|ccc}
  \wedge&t&f&?\\
  \hline\\
  t&t&f&?\\
  f&f&f&f\\?&?&f&?
  \end{array}
  &
  \begin{array}{c|ccc}
  \supset&t&f&?\\
  \hline\\
  t&f&f&?\\
  f&t&t&t\\
  ?&t&t&t
  \end{array}
  &
  \begin{array}{c|c}
  \neg &\\
  \hline\\
  t&f\\
  f&t\\
  ?&?
  \end{array}
  \end{array}
  \]

  The truth values may also be represented by pairs of two-valued components of the lattice $(\{0,1\}, 0 < 1)$ as follows: $t=(1,0), f=(0,1), ?=(0,0)$.   This representation may be intuitively understood as follows: If a formula $\psi$ is assigned the value $(x, y)$, then $x$ indicates whether $\psi$ should be accepted and $y$ indicates whether $\psi$ should be rejected. As shown in the next lemma, the basic operators considered above may also be expressed in terms of this representation by pairs.

  \paragraph{Lemma 11.}  Let $x_1, x_2, y_1, y_2 \in \{0,1\}$. Then:
  \[\begin{array}{l}
  (x_1, y_1) \vee (x_2, y_2) =(x_1\vee x_2, y_1\wedge y_2), \\
  (x_1, y_1)\wedge (x_2, y_2) =(x_1\wedge x_2, y_1\vee y_2),\\
 (x_1,y_1)\supset (x_2, y_2) =(\neg x_1\vee x_2, x_1\wedge y_2), \\
 \neg (x,y) =(y,x).
  \end{array}
  \]

  \newcommand{\st}{{\sf t}}
  In our context, the three values above are used for evaluating formulas in a propositional language $\CL$, consisting of a set of atomic formulas {\sf Atoms}$(\CL)$, the propositional constants \st\ and {\sf f}, and logical symbols $\neg, \wedge, \vee, \supset$.  We denote the atomic formulas of $\CL$ by $p, q, r$, formulas by $\psi, \phi$ and sets of formulas (theories) by $\CT, \CS$. the set of all atoms occurring in a formula $\psi$ is denoted by {\sf Atoms}$(\psi)$ and {\sf Atoms}$(\CT) =\{\mbox{{\sf Atoms}}(\psi) |\psi \in \CT$ is the set of all the atoms occurring in the theory $\CT$.  Now, a {\em valuation} $v$ i s a function that assigns to each atomic formula a truth value from $\{t, f, ?\}$, and $v(\st) = t, v ({\sf f}) =f$.   Any valuation is extended to complex formulas in the obvious way. In particular, $v(\psi\circ \phi) =v(\psi)\circ v(\phi)$ for every $\circ \in \{\neg, \wedge, \vee, \supset\}$. A valuation $v$ {\em satisfies} $\psi$ off $v(\psi) =t$. A valuation that satisfies every formula in $\CT$ is a {\em model} of $\CT$. The set of models of $\CT$ is denoted by {\em mod}$(\CT)$.

  \paragraph{Definition 12.}  Let $\CL$ be a propositional language with  set of atoms {\sf Atoms}$(\CL)$. A {\em signed alphabet} {\sf Atoms}$\pm (\CL)$ is a set that consists of two symbols $p^+, p^-$ for each atom $p \in {\sf Atoms}(\C)$. The language over {\sf Atoms}$^\pm (\CL)$ is denoted by $\CL^\pm$. A value $v$ for $\CL^\pm$ is called {\em coherent}, if there is no $p \in {\sf Atoms}(\CL)$ such that both $v(p^+)=1$ and $v(p^-)=1$.\\
  End quote.
  \end{quote}

 We now compare with our system {\bf CN}. We have a two world modal intuitionistic logic and our values are
 \begin{itemize}
 \item $x=$ in means $x\wedge \neg Nx$
 \item $x=$ out means $\neg x \wedge Nx$
 \item $x=$ undecided means $\neg x \wedge\neg Nx$
 \end{itemize}
 If we let $x^+=x$ and $x^-=Nx$, we get the system of Arieli and Caminada
 \begin{itemize}
 \item $x=$ in iff $(x^+,x^-)=(\top, \bot)$ iff $x\wedge\neg Nx =\top$
 \item $x=$ out iff $(x^+,x^-)=(\bot, \top)$ iff $\neg x\wedge Nx =\top$.
 \item $x=$ und iff $(x^+,x^-) =(\bot, \bot)$ iff $\neg x \wedge\neg Nx=\top$
 \item $(x^+,x^-)=(\top,\top)$ is not allowed.  This means that $x\wedge Nx$ is not allowed. This is our intuitionistic restriction $\vdash Nx\to \neg x$.
\end{itemize}

 To summarise the comparison, Arieli and Caminada turn argumentation into a Kleene 3 valued logic and manipulate  it in the meta-level using Arieli's \cite{540-41,540-42,540-43} lattice theoretic methods.  Our paper manipulates it in the object level in a modal intuitionistic logic.

 The perceptive reader might ask whether the paper of Arieli and Caminada carries any message or point of view, beyond the mathematical manipulation of truth values?  In comparison, the message of our paper is that it is object level, translating the attack relation $xRy$ into $x\to Ny$, where $N$ is strong negation embedded in a modal intuitionistic logic. What is the message of \cite{540-34}?  What can it be used for?  Our answer to this question is that the Arieli--Caminada paper is very important in the context where a calculus of degrees of undecided is needed. If we need to build an argumentation method for an application area with many degrees of undecided, then we need a calculus for undecided values and we can follow the lead of \cite{540-34}.

 \begin{figure}
 \centering
 \input{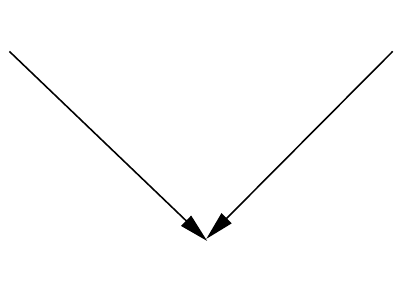_t}
 \caption{}\label{540-FCM-1}
 \end{figure}

Consider Figure \ref{540-FCM-1} . We need to be able to assign a value to $x$ in Figure \ref{540-FCM-1}, given the variety of undecided values of the attackers of $x$, namely $ y_1\comma y_m$.

Our system {\bf CNN} is not general enough for this purpose.  We are committed to the intuitionistic restriction, so we can have undecided values like $(\bot, \top, \top, \top), (\bot, \bot, \top, \top) (\bot, \bot, \bot, \top)$ and no more, for a 4-world modal model.  The Arieli--Caminada approach can handle more, many more values, for example they can have $(\bot, \top, \bot, \top)$, etc.

Imagine a highly divergent group of witnesses of a tragic event. They are all undecided and contradicting and unreliable for a multitude of reasons.  The traditional extension based semantics, as well as the Caminada labelling based semantics will give them all undecided. This is {\em not} what we need. We need a calculus of undecided that will tell us something better. The equational approach \cite{540-39} will give numbers, the modal provability \cite{540-31} might give us several options for undecided, but these papers put forward fixed approaches derived from a fixed points of view serving some other methodological aims. The fact that they also yield several undecided values is only a side effect. Same applies to our logic {\bf CNNk}. These system are not intended as calculi for undecided values and they offer no flexibility. We need an approach which can generate alternative calculi of undecided which can be tailored for different applications. The Arieli--Caminada paper is a start. We leave this research to a future paper.

 It is regrettable that Arieli and Caminada took their paper in the (traditional) direction of generating extensions. It is time for all of us to get out of this way  of thinking, which is nothing but a barrier to our imagination.

\subsection{Comparing with  paper \cite{540-27} ---  Methods for solving reasoning problems in abstract argumentation. A survey}

\cite{540-27} is an excellent survey of methods for computing extensions. It divides the approaches into two:
\begin{enumerate}
\item
Direct methods
\item  Reduction  methods, where the argumentation systems are translated into other systems which have good computational facilities which can be used to calculate extensions.
\end{enumerate}

So the paper \cite{540-27} considers translations only as means of calculating extensions. In comparison, we consider translations as means of giving new and different meaning to the attack relation.

In the authors own words, for example, they say
\begin{quote}
Begin quote.\\
The underlying idea of the reduction approach is to exploit existing efficient software which has originally been developed for other purposes. To this end, one has to formalize the reasoning problems within other formalisms\dots In this approach, the resulting argumentation systems directly benefit from the high level of sophistication today's systems reached\dots
In the remainder of Section~3 we shall present the concepts behind other reduction-based approaches, for instance, the equational approach as introduced by Gabbay in \cite{540-39} and the reductions to monadic second order logic  as proposed in \cite{540-35}.

We will not consider the vast collection of extensions to Dung's frameworks, like,\footnote{Editorial comment:  We omit the references in this list }  value-based, bipolar, extended, constrained, temporal, practical, and fibring argumentation frameworks, as well as argumentation frameworks with recursive attacks, argumentation context systems, and abstract dialectical frameworks. We also exclude abstract argumentation with uncertainty or weights here.\\
End quote.
\end{quote}


So to continue our comparison with paper \cite{540-27}, we highlight the fact that the interest of the authors of paper \cite{540-27} in the
current paper (as well as in papers \cite{540-35} and \cite{540-39}, for example) is limited solely to the question of how these papers provide means to compute extensions.

\section{Conclusion and future research}
In this paper we introduced a modal logic {\bf CNN} with strong negation $N$ playing the role of modality. The semantics has two possible worlds, here (actual world $t$) and there (possible world $s$) with
\[\begin{array}{l}
t\vDash NA \mbox{ iff } s\vDash \neg A\\
s\vDash NA\mbox{ iff } t\vDash \neg A
\end{array}
\]
and with the restriction that for atoms $q$ we have
\[
t\vDash q \mbox{ implies } s\vDash q.
\]

We showed how to translate any abstract argumentation frame $\CA$ into {\bf CNN} via a theory $\Delta_{\CA}$ of {\bf CNN} in such a way that all the models of $\Delta_{\CA}$ give exactly the complete extensions of $\CA$.

Many properties and operations on $\CA$ (such as joint attacks, higher level attacks, bipolar argumentation, and more) become very simple when done in {\bf CNN}.

Further research postponed to Part 2 of this paper includes the following topics:
\begin{enumerate}
\item Translate Assumption Based Argumentation (ABA) into {\bf CNN}.
\item Use intuitionistic logic as a basis for {\bf CNN} (call it {\bf INN}) and see how to obtain intuitionisitc based argumentation. In fact, given any logic \BL\ defined by rules of the type used in ASPIC, we can form {\bf LNN} and translate ASPIC (for \BL) into {\bf LNN}.

This will give us a better understanding of ASPIC and in fact also allow us to define a generalisation of Assumption Based Argumentation.
\item Fully investigate ADF (Abstract Dialectical Frameworks) in {\bf CNN}.
\item It is easy to define predicate argumentation using predicate {\bf CNN}. Simply use predicate {\bf CNN} and allow instantiations of any $(S, R)$ with classical predicate wffs. Translate $\varphi \tO \psi$ as $\varphi \to N\psi$. The semantics of predicate {\bf CNN} will give us the complete extensions for the network.
\item Investigate bipolar networks and {\bf CNN}, as discussed in {\bf CC7}.
\item Investigate in detail iterated disjunctive attacks in {\bf CNN} of the kind of Figure  \ref{540-FS5}.

\item The logic {\bf CNN} is based on two possible worlds $t$ (actual world) and $s$ ($s$ is accessible to $t$). We can investigate the possibility of adding more worlds to the chain, say use 3 worlds $t$ (actual), $s$ (accessible to $t$) and $s'$ (accessible to $s$).

Define
\[\begin{array}{l}
t\vDash NA \mbox{ iff } s\vDash \neg A\\
s\vDash NA \mbox{ iff } s'\vDash \neg A\\
s'\vDash NA \mbox{ iff } t\vDash \neg A.
\end{array}
\]
This new system (call it {\bf CNN3}) might allow us to do the following:
\begin{enumerate}
\item have several grades of undecided, see the comparison in Section 9.2
\item allow for attacks of a node $x$ on another network $\CA_x$.
\item  allow us to use truth in the middle node as a meta-level vehicle to express properties of extensions, (since {\bf NN} refers to the actual world from the middle node). We can force the models to yield, for example, only preferred extensions.
\item Investigate the connection with Logic Programming.
\item allow us to formalise in the object level the notion of ``$x$ does not attack $y$".
\end{enumerate}
\item
Following  the connection with Logic Programming and David Pearce's Equilibrium Logic \cite{540-30}, we investigate whether Pearce's logic can implement argumentation network similarly to the way {\bf CNN}  does.
\item Extend the negation as failure approach of \cite{540-40} to argumentation and compare with this paper and with \cite{540-34}.
\end{enumerate}

We shall investigate these possibilities in Part 2.

\section*{Acknowledgements}
We are grateful to the referees and to
Philippe Besnard ,P. M. Dung , Davide Grossi,   Beishui Liao, David Pearce, Odinaldo Rodrigues  Matthias Thimms and  Leon van der Torre,  for their valuable comments.

\section*{Appendix: Formal definition of the modal logic $\SN$}\label{mike.appendix.2}
In this section , we change the syntax of {\bf CNN} slightly. Besides the classical connectives and the connective $N$, we add a constant t for the actual world. This helps with completeness proofs. We should call, if we want to be strict ,  the modified system with the slightly different name, say,  {\bf CNN*}, but we are not going to bother.

\begin{definition}\label{mike.semantics}
A {\em model} for $\SN$ is a pair $\pa{\Num,v}$ where  $v$ assigns subsets of $\Num$ to atomic propositions such that if $1\in v(p)$ then $2\in v(p)$
\end{definition}

\begin{definition}
If $M= \pa{\Num,v}$ is a model of $\SN$ then define truth in $M$ as follows, (where $m=1,2$ and $t$ is the propositional constant for world 1).

\begin{itemize}
\item
$\istrue{m}{p}$ iff $m\in v(p)$
\item
$\istrue{m}{\top}$
\item
$\isntrue{m}{\bot}$
\item
$\istrue{m}{\one}$ iff $m=1$
\item
$\istrue{m}{\neg A}$ iff $\isntrue{m}{A}$
\item
$\istrue{m}{A\wedge B}$ iff $\istrue{m}{A}$ and $\istrue{m}{B}$
\item
$\istrue{m}{A\vee B}$ iff $\istrue{m}{A}$ or $\istrue{m}{B}$
\item
$\istrue{m}{A\to B}$ iff $\isntrue{m}{A}$ or $\istrue{m}{B}$
\item
$\istrue{m}{NA}$ iff either $m=1$ and $\isntrue{2}{A}$, or $m=2$ and $\isntrue{1}{A}$
\end{itemize}
Write $\istrue{}{A}$ when  $\istrue{m}{A}$ for all $m$ (i.e. $m=1$ or $m=2$); write $\istrue[]{m}{A}$ when $\istrue{m}{A}$ for all $M$; and write $\istrue[]{}{A}$ when $\istrue{m}{A}$ for every $m$ and every $M$.
\end{definition}

Intuitively, a model is a finite linear ordering of worlds -- in the sense of familiar possible world semantics for modal logic -- where logical connectives are defined as usual and an atomic proposition $p$ is true at a world, then it is true at all `later' worlds. The new connective $N$ has the semantics that $NA$ is true at a world if and only if $A$ is false as the `next' world where, in the case of interpreting $N$,  `next' cycles back to the first world at the last world (so $NA$ is true at the last world when $A$ is false at the first).

The condition on atomic propositions ensures that $N$ is a kind of negation, at least for atomic propositions interpreted at the first world. If $\istrue{1}{Np}$ then $\isntrue{2}{p}$ and so $\isntrue{1}{p}$. Moreover the condition on interpreting $N$ ensures that $\istrue{1}{NNp\rightarrow p}$ as $\istrue{1}{NNp}$ implies $\isntrue{1}{Np}$ implies $\istrue{1}{p}$.

\begin{definition}
The theorems of $\SN$ are defined as follows:
$$
\begin{gathered}
\begin{array}{cl}
& \text{tautologies} \\
(\K)& N (A\wedge B) \leftrightarrow N A\vee N B \\
(\F) & \neg N A \leftrightarrow N \neg A \\
(\C) & A\to NN A \\
\end{array}
\end{gathered}
\qquad
\begin{gathered}
\begin{array}{cl}
(\A)&\one \rightarrow (p\rightarrow N\neg p)) \\
(\T)&\one \leftrightarrow N\one
\end{array}
\end{gathered}
$$
$$
\infer[(\MP)]{B}{A&A\to B} \\
\qquad
\infer[(\N)]{N\neg A}{A}
$$
For a set $\Gamma$ we write $\Gamma\vdash A$ when there are $A_1\dots A_n\in\Gamma$ such that $\vdash (A_1\wedge\dots\wedge A_n)\to A$.
\end{definition}

\begin{lemma}\label{mike.subs.lemma}
$\vdash A\to B$ implies $\vdash N B\to N A$, and $\vdash A\leftrightarrow B$ implies $\vdash N A\leftrightarrow N B$,
\end{lemma}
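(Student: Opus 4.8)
The plan is to prove the first implication---that $\vdash A\to B$ entails $\vdash NB\to NA$---and then obtain the biconditional version as an immediate corollary. Indeed, if $\vdash A\leftrightarrow B$, then by tautological reasoning and $(\MP)$ both $\vdash A\to B$ and $\vdash B\to A$, so the rule yields $\vdash NB\to NA$ and $\vdash NA\to NB$; conjoining these (again a tautology plus $(\MP)$) gives $\vdash NA\leftrightarrow NB$. So everything reduces to establishing the antitonicity rule for $N$.

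For the rule itself, the key idea is to apply necessitation $(\N)$ not to $A\to B$ directly but to its $\neg/\wedge$-normal form, so that the two structural axioms at our disposal---$(\F)$, which governs $N$ across $\neg$, and $(\K)$, which governs $N$ across $\wedge$---apply verbatim to explicit subformulas. Concretely, I would argue as follows. From $\vdash A\to B$ and the tautology $(A\to B)\to\neg(A\wedge\neg B)$ we obtain $\vdash\neg(A\wedge\neg B)$ by $(\MP)$; rule $(\N)$ then gives $\vdash N\neg\neg(A\wedge\neg B)$. Two instances of $(\F)$ (first with $X=\neg(A\wedge\neg B)$, then with $X=A\wedge\neg B$), combined with ordinary double-negation reasoning performed outside the scope of $N$, yield $\vdash N(A\wedge\neg B)$. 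Now $(\K)$ applied to the conjuncts $A$ and $\neg B$ gives $\vdash NA\vee N\neg B$, and a final use of $(\F)$ (with $X=B$) rewrites $N\neg B$ as $\neg NB$, so that $\vdash NA\vee\neg NB$, which is exactly $\vdash NB\to NA$.

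The point worth flagging is that the apparent obstacle---pushing $N$ through the connectives---is precisely where one must avoid circularity. The system provides no axiom describing how $N$ interacts with $\to$ or $\vee$, and naively rewriting $A\to B$ as $\neg A\vee B$ inside the scope of $N$ would require exactly the congruence property we are trying to prove. The derivation sidesteps this by keeping every manipulation occurring under an $N$ an explicit instance of $(\F)$ or $(\K)$ applied to the literal negation or conjunction actually present, while carrying out all Boolean simplifications (double negation, and rearranging $NA\vee\neg NB$ into $NB\to NA$) at the outermost level via tautologies and $(\MP)$. No appeal to the lemma under proof is then made, and the argument goes through independently of which of $\{\neg,\wedge,\vee,\to\}$ are taken as primitive.
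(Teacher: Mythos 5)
Your derivation is correct and follows essentially the same route as the paper's own proof: apply $(\N)$ to $\neg(A\wedge\neg B)$, strip the double negation under $N$ via two instances of $(\F)$, distribute with $(\K)$, rewrite $N\neg B$ as $\neg NB$ by $(\F)$, and finish with tautological reasoning, deriving the biconditional form as an immediate corollary. Nothing to add.
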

\begin{proof}
The first part follows by the following derivation:
$$
\begin{array}{ll}
\vdash A\to B & \text{assumption}\\
\vdash \neg(A\wedge \neg B) & \text{tautologies and $(\MP)$}\\
\vdash N\neg\neg(A\wedge \neg B) & (\N)\\
\vdash  N(A\wedge \neg B) & \text{tautologies, $(\MP)$ and $(\F)$} \\
\vdash NA \vee N \neg B & (\K)\\
\vdash NA\vee\neg N B & \text{tautologies, $(\MP)$ and $(\F)$} \\
\vdash N B\to N A & \text{tautologies and $(\MP)$}
\end{array}
$$
It is worth expanding one of the steps further: $\vdash N\neg\neg C$ implies by $(\F)$ that $\vdash \neg N\neg C$; also, $\vdash \neg NC\to N\neg C$ is a consequence of $(\F)$. So, using tautologies and $(\MP)$, $\vdash  \neg N\neg C\to \neg\neg N C$ thence $\vdash  \neg N\neg C\to N C$ . Therefore $\vdash N\neg\neg(A\wedge \neg B)$ implies $\vdash  N(A\wedge \neg B)$.

The rest of the lemma follows from the fact that $A\leftrightarrow B$ is short for $(A\to B)\wedge (B\to A)$.
\end{proof}

\begin{theorem}\label{mike.subs.thrm}
$\vdash B\leftrightarrow C$ implies $\vdash A[p/B]\leftrightarrow A[p/C]$
\end{theorem}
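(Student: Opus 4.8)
The plan is to prove the statement by structural induction on the formula $A$, with $B$ and $C$ held fixed and the hypothesis $\vdash B\leftrightarrow C$ available throughout. The substitution operation distributes over the connectives in the obvious way, so in each inductive case I only need to relate $A[p/B]$ and $A[p/C]$ through the provable equivalences furnished for the immediate subformulas by the induction hypothesis.

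First I would dispose of the base cases. If $A$ is the atom $p$ being substituted, then $A[p/B]=B$ and $A[p/C]=C$, so the required $\vdash A[p/B]\leftrightarrow A[p/C]$ is literally the hypothesis $\vdash B\leftrightarrow C$. If $A$ is any other atom $q\neq p$, or one of the constants $\top,\bot,\one$, then the substitution leaves $A$ untouched, so $A[p/B]=A=A[p/C]$ and $\vdash A\leftrightarrow A$ holds as a tautology (hence as an axiom). For the inductive step over the classical connectives I would use that $\SN$ contains every tautology as an axiom and is closed under $(\MP)$: since $(\neg A_1)[p/B]=\neg(A_1[p/B])$ and $(A_1\circ A_2)[p/B]=(A_1[p/B])\circ(A_2[p/B])$ for $\circ\in\{\wedge,\vee,\to\}$ (and likewise for $C$), the induction hypothesis gives $\vdash A_i[p/B]\leftrightarrow A_i[p/C]$, and the appropriate propositional tautology—for instance $(X\leftrightarrow Y)\to(\neg X\leftrightarrow\neg Y)$ in the negation case—combined with $(\MP)$ delivers the desired equivalence. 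These cases are purely propositional and routine.

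The only genuinely modal case, and the crux of the argument, is $A=NA_1$. Here substitution yields $(NA_1)[p/B]=N(A_1[p/B])$ and $(NA_1)[p/C]=N(A_1[p/C])$, and by the induction hypothesis $\vdash A_1[p/B]\leftrightarrow A_1[p/C]$. This is exactly the hypothesis of Lemma~\ref{mike.subs.lemma}, whose second clause then gives $\vdash N(A_1[p/B])\leftrightarrow N(A_1[p/C])$, completing the step and the induction. I expect this to be the main obstacle in spirit, since $N$ is a modality shifting the world of evaluation and there is no axiom of the form $(X\leftrightarrow Y)\to(NX\leftrightarrow NY)$ available for a direct propositional move; however, all the real work for this case has already been isolated in Lemma~\ref{mike.subs.lemma}, which derives the congruence of $N$ from the rule $(\N)$ together with $(\K)$ and $(\F)$, so in the present theorem the step reduces to a single appeal to that lemma.
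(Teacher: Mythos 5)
Your proof is correct and follows essentially the same route as the paper, which proves the theorem "by induction on $A$ using Lemma~\ref{mike.subs.lemma}"; you have simply spelled out the base cases, the routine propositional steps, and the single appeal to the lemma's second clause for the $N$ case. Nothing is missing and nothing differs in substance.
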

\begin{proof}
By induction on $A$ using~\ref{mike.subs.lemma}.
\end{proof}

\begin{theorem}\label{mike.ncn.complete}
$\istrue[]{}{A}$ iff $\SN\vdash A$ 
\end{theorem}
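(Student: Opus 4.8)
The plan is to exploit the paper's own slogan---treat $\SN$ as classical propositional logic over the enlarged atom set $\{p_i,Np_i\}\cup\{\one\}$ plus a few non-logical axioms---and run a classical completeness argument on top of a normal form, proving the two directions of the biconditional separately. Soundness ($\SN\vdash A$ implies $\istrue[]{}{A}$) is a finite verification: each axiom is true at both worlds of every model and $(\MP)$, $(\N)$ preserve truth in all worlds. The only nonroutine checks use the persistence restriction on atoms together with the clause for $N$; e.g.\ $(\A)$ holds at world $1$ because $\istrue{1}{p}$ forces $\istrue{2}{p}$ by persistence, hence $\isntrue{2}{\neg p}$ and so $\istrue{1}{N\neg p}$, while at world $2$ the antecedent $\one$ fails so the implication is vacuous.

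For completeness I would first reduce every formula to a normal form in which $N$ is applied only to atoms and to $\one$. Using $(\K)$, $(\F)$ and Lemma~\ref{mike.subs.lemma} I would derive the pushing laws $\vdash N\neg A\leftrightarrow\neg NA$, $\vdash N(A\wedge B)\leftrightarrow NA\vee NB$ and $\vdash N(A\vee B)\leftrightarrow NA\wedge NB$, together with $\vdash NNA\leftrightarrow A$ (the direction $NNA\to A$ comes from instantiating $(\C)$ at $\neg A$ and applying $(\F)$ twice), $\vdash N\top\leftrightarrow\bot$, $\vdash N\bot\leftrightarrow\top$, and $\vdash N\one\leftrightarrow\one$ from $(\T)$. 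Repeated appeal to the substitution theorem~\ref{mike.subs.thrm} then shows that every $A$ is provably equivalent to a Boolean combination $A'$ of the literals $\{\one\}\cup\{p_i,Np_i\}$.

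Next I would pin down semantically which classical assignments to $\{\one\}\cup\{p_i,Np_i\}$ are realised by a world of a model. A case analysis on the three admissible configurations $(p@1,p@2)\in\{(0,0),(0,1),(1,1)\}$ shows that an assignment is realised iff it satisfies the theory
\[
\Theta=\{\,\one\to\neg(p_i\wedge Np_i)\,\}\cup\{\,\neg\one\to(p_i\vee Np_i)\,\},
\]
so that $\istrue[]{}{A}$ iff $A'$ is a classical consequence of $\Theta$. Both families of $\Theta$-axioms are $\SN$-theorems: $\one\to\neg(p\wedge Np)$ is just $(\A)$ rewritten through $(\F)$, and the persistence axiom $\neg\one\to(p\vee Np)$ is obtained by taking the contrapositive of $(\A)$, applying Lemma~\ref{mike.subs.lemma} to get $\vdash N\neg\one\to N(p\wedge Np)$, and then simplifying the two sides via $(\T)$, $(\K)$ and $\vdash NNp\leftrightarrow p$. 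The argument then closes by classical completeness: if $A$ is valid then $\Theta\models A'$ classically, so since $\SN$ contains all tautologies and is closed under $(\MP)$ there is a finite conjunction $\theta$ of $\Theta$-instances with $\vdash\theta\to A'$; as each conjunct of $\theta$ is a theorem, $(\MP)$ yields $\vdash A'$ and hence $\vdash A$.

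The main obstacle I expect is the third step---matching the axioms to the semantic constraints. The normal form and the $(\K)/(\F)$ pushing laws are bookkeeping, but deriving the persistence constraint $\neg\one\to(p\vee Np)$ purely syntactically is delicate: it is the one place where the intuitionistic-style restriction on atoms is used, and it must be squeezed out of $(\A)$, $(\T)$ and the double-negation law $NNp\leftrightarrow p$. Getting exactly this equivalence between the realised assignments and the models of $\Theta$ is the hinge on which the whole completeness argument turns.
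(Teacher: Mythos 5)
Your proposal is correct, but it takes a genuinely different route from the paper's. The paper runs a Henkin-style canonical-model argument: extend a consistent set to a maximal consistent $\Gamma$, form the dual set $\Delta=\upa{B\mid N\neg B\in\Gamma}$, prove the transfer principle $(\dagger)$ relating membership in $\Gamma$ and $\Delta$ across $N$, and then establish a truth lemma by induction on formulas for the two-world model built from $(\Gamma,\Delta)$. You instead reduce completeness to classical propositional completeness: push $N$ down to atoms using $(\K)$, $(\F)$, $NNA\leftrightarrow A$ and Theorem~\ref{mike.subs.thrm}, and then show that the worlds of $\SN$-models realise exactly the classical valuations of $\{\one\}\cup\{p_i,Np_i\}$ satisfying your theory $\Theta$. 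Both halves of your hinge do check out: at world $1$ the only excluded pattern is $p=Np=\top$, matching $\one\to\neg(p\wedge Np)$, which is $(\A)$ rewritten through $(\F)$; at world $2$ the only excluded pattern is $p=Np=\bot$, matching $\neg\one\to(p\vee Np)$, and your derivation via contraposition of $(\A)$, Lemma~\ref{mike.subs.lemma}, $(\T)$, $(\K)$ and $NNp\leftrightarrow p$ (whose nontrivial direction comes from instantiating $(\C)$ at $\neg p$ and applying $(\F)$ twice) goes through. Your route buys an explicit normal form, hence a transparent decision procedure, and it extends to both worlds the Section~1 picture of the logic as a non-logical classical theory over the literals $p$, $Np$; it also isolates precisely which semantic constraint each axiom encodes. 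The price is that it leans entirely on the eliminability of nested $N$ and on there being only two worlds, so it would not transfer to the longer chains of worlds contemplated for Part~2, where the paper's canonical-model template adapts with essentially no change.
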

\begin{proof}
For the right-left direction it is a simple matter to verify, by induction on derivations, that $\SN\vdash A$ entails that for any $M$ and any $m$, $\istrue{m}{A}$.

For the right left direction we argue that if $\Sigma$ is consistent then there is an $M$ such that, for every $A\in\Sigma$, $\istrue{m}{A}$ for $m=1$ and $m=2$. First, since $\SN\nvdash A$ then using familiar methods, we can find a set $\Gamma$  that is consistent with respect to $\SN$ and maximal in the sense that $B\in\Gamma$ or $B\nin\Gamma$ for every $B$. Moreover, we have that $\Gamma\vdash A$ implies $A\in\Gamma$

Now, consider the set $\Delta=\upa{B\mid N\neg B\in \Gamma}$. If $B,\neg B\in \Delta$ then both $N\neg B\in\Gamma$ and $N \neg\neg B\in\Gamma$, and so by~\ref{mike.subs.thrm}, $(\F)$ and tautological reasoning, $\neg N B, N B\in\Gamma$ which is impossible as $\Gamma$ is consistent. Also, since $\neg NB\in\Gamma$ or $N B\in\Gamma$, it follows that $N\neg B\in\Gamma$ or $N B\in\Gamma$, and so $B\in\Delta$ or $\neg B\in\Delta$. We conclude that $\Delta$, like $\Gamma$, is maximal and consistent.

Now, by the definition of $\Delta$ and the fact it is maximal consistent, $NB\in\Gamma$ implies $B\nin\Delta$; and if $NB\nin\Gamma$ then $N\neg B\in\Gamma$ and so $B\in\Delta$. Also, if $B\in\Gamma$ then, by $(\C)$, $NNB \in \Gamma$ and so $N \neg \neg B\in \Gamma$ which implies that $\neg N B\in \Delta$ thence $NB\nin\Delta$; and conversely, if $B\nin\Gamma$ then $\neg B\in\Gamma$ and so $NN\neg B\in \Gamma$ which implies $N\neg N B\in\Gamma$ thence $NB\in\Delta$. We conclude that
$$
B\in\Gamma\text{ iff } NB\nin\Delta\text{\quad and\quad }B\in\Delta\text{ iff }NB\in\Gamma \leqno{(\dagger)}
$$

We now describe a model $M$ using $\Gamma$ and $\Delta$. Suppose that $\one\in\Gamma$, and set $v(p)=\upa{1,2}$ iff $p\in\Gamma$ and $v(p)=\upa{2}$ iff $p\in\Delta$. Then, as $\one\in\Gamma$, if $p\in \Gamma$ then by $(\A)$, $N\neg p\in \Gamma$ and so $p\in \Delta$. Thus $M$ is indeed a model as defined in~\ref{mike.semantics}.

We now verify, by induction on $A$, that $\istrue{1}{A}$ iff $A\in\Gamma$ and $\istrue{2}{A}$ iff $A\in\Delta$
\begin{itemize}
\item
If $A$ is atomic the result follows by the definition of $M$.
\item
If $A$ is $\one$ then, by assumption, $\one\in \Gamma$. Moreover, making use of $(\T)$, $\one\in \Gamma$ iff $N\one\in\Gamma$ iff $\neg N\one\nin\Gamma$ iff $N\neg\one\nin\Gamma$. If follows that $\one\in\Gamma$ iff $\one\nin\Delta$.
\item
If $A$ is a truth functional connective (i.e. is not $N$), then the result follows easily by the induction hypothesis and the maximal consistency of $\Gamma$ and $\Delta$.
\item
Suppose $A$ is $NB$. Then using~$(\dagger)$ above we have:
$$
\begin{array}{r@{\text{ iff }}ll}
\istrue{1}{A} & \istrue{1}{NB} \\
&  \isntrue{2}{B} \\
& B\nin\Delta & \text{ind. hyp.}\\
& NB \in\Gamma & (\dagger)\\
& A\in\Gamma
\end{array}
\quad
\begin{array}{r@{\text{ iff }}ll}
\istrue{2}{A} & \istrue{2}{NB} \\
&  \isntrue{1}{B} \\
& B\nin\Gamma & \text{ind. hyp.} \\
& NB \in\Delta& (\dagger) \\
& A\in\Delta
\end{array}
$$
\end{itemize}
This completes the induction.

The argument is similar if we suppose $\one\in\Delta$. Therefore, if $\Sigma$ is consistent, then there is a model $M$ such that $\istrue{}{A}$ for all $A\in\Sigma$.
\end{proof}

\begin{theorem}
If $A$ does not contain $\one$, nor any subformula of the form $NB$ unless $B$ is atomic, then $\DSN\vdash A$ iff $\SN\vdash\one\to A$.
\end{theorem}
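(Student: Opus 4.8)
The plan is to argue semantically, exploiting the completeness theorem for $\DSN$ (Theorem~\ref{540-T2}) together with the completeness theorem for $\SN$ just proved (Theorem~\ref{mike.ncn.complete}). Each side of the biconditional is first turned into an assertion about models, and the heart of the matter is a correspondence between $\DSN$-models and the actual world of $\SN$-models. By Theorem~\ref{540-T2}, $\DSN\vdash A$ iff $h\vDash A$ for every $\DSN$-model $h$, i.e.\ every assignment giving each atom $q$ and $Nq$ a value in $\upa{0,1}$ with $h(Nq)=1$ implying $h(q)=0$. By Theorem~\ref{mike.ncn.complete}, $\SN\vdash\one\to A$ iff $\istrue{m}{\one\to A}$ for every model $M$ and every $m$. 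Since $\istrue{1}{\one}$ and $\isntrue{2}{\one}$, the implication holds automatically at world $2$ and reduces at world $1$ to $\istrue{1}{A}$; hence $\SN\vdash\one\to A$ iff $\istrue{1}{A}$ for every $\SN$-model. It therefore suffices to prove
\[
\big(\text{for every }h,\ h\vDash A\big)\iff\big(\text{for every }M,\ \istrue{1}{A}\big).
\]

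For the correspondence, given an $\SN$-model $M=\pa{\Num,v}$ define a $\DSN$-model $h_M$ by setting $h_M(q)=1$ iff $1\in v(q)$ and $h_M(Nq)=1$ iff $2\nin v(q)$. The persistence condition $1\in v(q)\Rightarrow 2\in v(q)$ is exactly what guarantees $h_M(Nq)=1\Rightarrow h_M(q)=0$, so $h_M$ is a legitimate $\DSN$-model. Conversely, given a $\DSN$-model $h$, define $M_h=\pa{\Num,v}$ by $1\in v(q)$ iff $h(q)=1$ and $2\in v(q)$ iff $h(Nq)=0$; here the $\DSN$-condition $h(Nq)=1\Rightarrow h(q)=0$ is precisely persistence, so $M_h$ is a legitimate $\SN$-model. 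A one-line check gives $h_{M_h}=h$, so $M\mapsto h_M$ is onto the class of $\DSN$-models.

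The key lemma is that for every $A$ meeting the two hypotheses --- no occurrence of $\one$, and $N$ applied to atoms only --- we have $\istrue{1}{A}$ iff $h_M\vDash A$, proved by induction on $A$. The atomic cases $A=q$ and $A=Nq$ are immediate from the definition of $h_M$, using $\istrue{1}{Nq}$ iff $\isntrue{2}{q}$ iff $2\nin v(q)$. For $\neg,\wedge,\vee,\to$ the induction steps are trivial, because these connectives are evaluated classically both at world $1$ in $\SN$ and in the $\DSN$-semantics. This is exactly where the two restrictions on $A$ are used: an occurrence of $\one$, or of $NB$ with $B$ non-atomic, would make the world-$1$ truth value depend on the world-$2$ evaluation of a subformula that the $\DSN$-language has no way to name, so it could not be read off from $h_M$.

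Assembling the pieces, the lemma and the correspondence give ``for every $M$, $\istrue{1}{A}$'' iff ``for every $M$, $h_M\vDash A$'' iff ``for every $h$, $h\vDash A$'', the last equivalence using that every $h_M$ is a $\DSN$-model and that every $h$ arises as $h_{M_h}$. Combined with the two completeness reductions this yields $\DSN\vdash A$ iff $\SN\vdash\one\to A$. I expect the only genuine difficulty to lie in the bookkeeping of the inductive lemma and in articulating precisely why both syntactic restrictions on $A$ are indispensable; once the persistence and coherence conditions are lined up, the model correspondence itself is a short verification.
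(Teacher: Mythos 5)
Your proposal is correct and follows essentially the same route as the paper: both sides are reduced to semantic validity via the two completeness theorems, and the same correspondence between $\DSN$-assignments and the world-$1$ behaviour of $\SN$-models (your $h_M$ and $M_h$ match the paper's constructions exactly) is established by induction on the restricted formulas. The only cosmetic difference is that you package the two one-directional inductions of the paper into a single biconditional induction plus a surjectivity check, which is a tidier presentation of the same argument.
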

\begin{proof}
Given \ref{mike.ncn.complete}, $\SN\vdash\one\to A$ iff $\istrue[]{1}{A}$.

For any $M$, the function $h$ over the atomic formulae of $\DSN$ such that:
$$
h(p)=1 \text{ iff } \istrue{1}{p}\quad\text{and}\quad h(Np)=1\text{ iff }\istrue{1}{Np}
$$
is clearly a truth assignment satisfying item 4 of~\ref{540-D1}, and moreover by an easy induction on $A$, $\istrue{1}{A}$ implies $h\vDash A$ (as $h\vDash\dots$ is characterised in~\ref{540-D1}).

On the other hand, if $h$ is a truth assignment according to~\ref{540-D1}, define $v$ -- and so $M$ -- as follows:
$$
v(p) =\left\{\begin{array}{ll} \upa{1,2} &\text{if } h(p)=1 \\ \varnothing & \text{if } h(Np)=1 \\ \upa{2} & \text{otherwise} \end{array}\right.
$$
(note that the same numbers have different functions here: the $1$ of the assertion $h(p)=1$ acts as a propositional truth value of~\ref{540-D1}; the $1$ of the assertion $v(p) = \upa{1,2}$ acts as a possible world).  An easy induction on $A$ verifies that $h\vDash A$ implies $\istrue{1}{A}$.

Therefore, there is an $h$ such that $h\vDash A$ iff there is an $M$ such that $\istrue{1}{A}$. It follows from this that $\DSN\vdash A$ iff $\SN\vdash\one\to A$.
\end{proof}

\end{document}